\newcommand{\detailedproof}[2]{\ifbool{CompileTechReport}{
\begin{proof}
#2
\end{proof}
}{\noindent\textit{Proof Sketch:}#1\qed}\smallskip}
\newcommand{\ifnottechreport}[1]{\ifbool{CompileTechReport}{}{#1}}
\newcommand{\iftechreport}[1]{\ifbool{CompileTechReport}{#1}{}}
\newcommand{\BGDel}[1]{\todo[inline,color=red]{\textbf{Boris deleted:} #1}}
\newcommand{\mypara}[1]{\noindent\textbf{#1.}}
\newcommand{\myunderpar}[1]{\smallskip\noindent\underline{{#1}}}
\newcommand{\mypartitle}[1]{\smallskip\noindent\textbf{#1.}}
\DeclareMathOperator{\defas}{:=}
\newcommand{\listconcat}{\,{\tt ::}\,}
\newcommand{\mathtext}[1]{\,\,\text{#1}\,\,}
\newcommand{\dlImp}[0]{\,\ensuremath{\mathtt{{:}-}}\,}
\newcommand{\dlNeg}{\neg\,}
\newcommand{\bodyOf}[1]{body(#1)}
\newcommand{\headOf}[1]{head(#1)}
\newcommand{\varsOf}[1]{vars(#1)}
\newcommand{\attrsOf}[1]{attrs(#1)}
\newcommand{\argsOf}[1]{args(#1)}
\newcommand{\predOf}[1]{pred(#1)}
\newcommand{\matches}{\curlyeqprec}
\newcommand{\rel}[1]{\ensuremath{\mathtt{#1}}}
\newcommand{\cnst}[1]{\textsf{#1}}
\newcommand{\depthP}[1]{d(#1)}
\newcommand{\aDepth}{d}
\newcommand{\isSucc}{\models}
\newcommand{\isFailed}{\not\models}
\newcommand{\KPlus}{\ensuremath{+}_{\SomeK}\xspace}
\newcommand{\KTimes}{\ensuremath{\cdot}_{\SomeK}\xspace}
\newcommand{\Kzero}{0_{\SomeK}}
\newcommand{\Kone}{1_{\SomeK}}
\newcommand{\ProvPoly}{\ensuremath{\mathbb{N}[X]}\xspace}
\newcommand{\ProvPolyDual}{\ensuremath{\mathbb{N}[X,\bar{X}]}\xspace}
\newcommand{\BoolProvPoly}{\ensuremath{\mathbb{B}[X]}\xspace}
\newcommand{\WhyProv}{\ensuremath{\mathsf{Why}(X)}\xspace}
\newcommand{\WhichProv}{\ensuremath{\mathsf{Which}(X)}\xspace}
\newcommand{\TrioProv}{\ensuremath{{\mathsf{Trio}(X)}}\xspace}
\newcommand{\PosBool}{\ensuremath{\mathsf{PosBool}(X)}\xspace}
\newcommand{\SomeK}{\ensuremath{{\cal K}}\xspace}
\newcommand{\LinksL}{\ensuremath{\mathsf{Links^L}}\xspace}
\newcommand{\Tau}{\mathcal{T}\xspace}
\newcommand{\fTree}{\Tau}
\newcommand{\kInter}{\pi}
\newcommand{\fVal}{\nu}
\newcommand{\fTup}[2]{t_{{#1},{#2}}}
\newcommand{\fG}[2]{g_{{#1},{#2}}}
\newcommand{\fP}[2]{p_{{#1},{#2}}}
\newcommand{\kInterOf}[1]{\kInter \llbracket{#1}\rrbracket_{\fVal}}
\newcommand{\kInterOfVal}[2]{\kInter \llbracket{#1}\rrbracket_{#2}}
\newcommand{\dualOf}[1]{\bar{#1}}
\newcommand{\aForm}{\varphi}
\newcommand{\freeOf}[1]{\textsc{free}(#1)}
\newcommand{\varOrder}{<_{Var}}
\newcommand{\formToQ}{\ensuremath{\textsc{Tl}_{\aForm \to Q}}}
\newcommand{\formQ}[1]{Q_{\aForm_{#1}}}
\newcommand{\domQ}{Dom}
\newcommand{\fDom}{A}
\newcommand{\formOp}{\operatorname{{\bf op}}}
\newcommand{\varVec}{{\bf x}}
\newcommand{\valVec}{{\bf a}}
\newcommand{\nnf}{{\bf nnf}}
\newcommand{\provToFO}{\ensuremath{\textsc{Tr}_{\explainq \to \ProvPolyDual}}}
\newcommand{\Iinter}{I_{\kInter}}
\newcommand{\gprov}{\ensuremath{\mathrm{\Gamma}}\xspace}
\newcommand{\gameLabel}{\ensuremath{\lambda}}
\newcommand{\provGraph}{\ensuremath{{\cal PG}}\xspace}
\newcommand{\wonLabel}{\textcolor{wonColor}{W}}
\newcommand{\lostLabel}{\textcolor{lostColor}{L}}
\newcommand{\nodeLabel}{{\cal L}}
\newcommand{\successLabel}{{\cal S}}
\newcommand{\stringDom}{\mathbb{L}}
\newcommand{\gameToProv}{\ensuremath{\textsc{Tr}_{\gprov \to \explainq}}}
\newcommand{\provToGame}{\ensuremath{\textsc{Tr}_{\explainq \to \gprov}}}
\newcommand{\gameToNX}{\ensuremath{\textsc{Tr}_{\gprov \to \ProvPoly}}}
\newcommand{\provToNX}{\ensuremath{\textsc{Tr}_{\explainq \to \ProvPoly}}}
\newcommand{\RAplus}{\ensuremath{\mathcal{RA}^+}\xspace}
\newcommand{\GPProg}[3]{\mathbb{GP}_{{#1},{#2}}^{#3}}
\newcommand{\whyq}{\textsc{Why}\,}
\newcommand{\whynotq}{\textsc{Whynot}\,}
\newcommand{\explainq}{\textsc{Expl}}
\newcommand{\qType}{typeof}
\newcommand{\provQ}{\textsc{PQ}}
\newcommand{\aProvQ}{\psi}
\newcommand{\qMatch}{\textsc{Match}}
\newcommand{\qPattern}{\textsc{Pattern}}
\newcommand{\unProg}[1]{#1_{Unified}}
\newcommand{\adProg}[1]{#1_{Annot}}
\newcommand{\fireProg}[1]{{#1}_{Fire}}
\newcommand{\fireCProg}[1]{#1_{FC}}
\newcommand{\moveProg}[1]{#1_M}
\newcommand{\fire}[3]{\rel{F}_{\rel{#1},#3}}
\newcommand{\fireC}[4]{\rel{FC_{#1,{#4},{#3}}}}
\newcommand{\boolT}{true}
\newcommand{\boolF}{false}
\newcommand{\adornment}{\sigma}
\newcommand{\greenT}{\textcolor{DarkGreen}{T}}
\newcommand{\redF}{\textcolor{DarkRed}{F}}
\newcommand{\yellowD}{\textcolor{DarkYellow}{U}}
\newcommand{\nodeSk}[3]{f_{\rel{#2}}^{#3}}
\definecolor{DarkGreen}{rgb}{0,0.45,0}
\definecolor{DarkRed}{rgb}{0.8,0,0}
\definecolor{DarkYellow}{rgb}{0.6,0.6,0}
\definecolor{DarkGray}{rgb}{0.2,0.2,0.2}
\newcommand{\adom}[1]{\ensuremath{\mathit{adom}(#1)}}
\newcommand{\domA}{\ensuremath{\mathit{dom}}}
\newcommand{\tupDom}{\ensuremath{\textsc{Tup}}}
\newcommand{\union}{\cup}
\newcommand{\thead}[1]{{\cellcolor{black}{\textcolor{white}{\textbf{#1}}}}}
\newcommand{\mathtab}{\ensuremath\thickspace\thickspace\thickspace}
\newcommand{\card}[1]{| {#1} |}
\algrenewcommand\algorithmicindent{0.8em}
\newtheorem{Theorem}{Theorem}
\newtheorem{Definition}{Definition}
\newtheorem{Example}{Example}
\newcommand{\myproofpar}[1]{\smallskip\noindent\underline{{#1}:}\,}
\definecolor{black}{rgb}{0,0,0}
\definecolor{grey}{rgb}{0.8,0.8,0.8}
\definecolor{red}{rgb}{1,0,0}
\definecolor{green}{rgb}{0,1,0}
\definecolor{darkgreen}{rgb}{0,0.5,0}
\definecolor{darkpurple}{rgb}{0.5,0,0.5}
\definecolor{darkdarkpurple}{rgb}{0.3,0,0.3}
\definecolor{blue}{rgb}{0,0,1}
\definecolor{shadegreen}{rgb}{0.95,1,0.95}
\definecolor{shadeblue}{rgb}{0.95,0.95,1}
\definecolor{shadered}{rgb}{1,0.85,0.85}
\definecolor{shadegrey}{rgb}{0.85,0.85,0.85}
\definecolor{oddRowGrey}{rgb}{0.80,0.80,0.80}
\definecolor{evenRowGrey}{rgb}{0.85,0.85,0.85}
\definecolor{wonColor}{rgb}{0.63,0.79,0.95}
\definecolor{lostColor}{rgb}{1.0,0.55,0.35}
\newcommand\PrologPredicateStyle{}
\newcommand\PrologVarStyle{}
\newcommand\PrologAnonymVarStyle{}
\newcommand\PrologAtomStyle{}
\newcommand\PrologOtherStyle{}
\newcommand\PrologCommentStyle{}
\newif\ifpredicate@prolog@
\newif\ifwithinparens@prolog@
\newcommand\@testChar@prolog{  \ifnum\lst@mode=\lst@Pmode    \detectTypeAndHighlight@prolog  \else
    \ifwithinparens@prolog@      \detectTypeAndHighlight@prolog    \fi
  \fi
  \global\predicate@prolog@false}
\newcommand\detectTypeAndHighlight@prolog
\def\lst@thestyle{\PrologAtomStyle}  \ifpredicate@prolog@    \def\lst@thestyle{\PrologPredicateStyle}  \else
\splitfirstchar@prolog\expandafter{\the\lst@token}    \expandafter\ifx\@testChar@prolog\underscore@prolog      \ifnum\lst@length=1        \let\lst@thestyle\PrologAnonymVarStyle      \else
        \let\lst@thestyle\PrologVarStyle      \fi
\let\lst@thestyle\PrologVarStyle          \let\iterate\relax
\newcommand\splitfirstchar@prolog{}
\def\splitfirstchar@prolog#1{\@splitfirstchar@prolog#1\relax}
\newcommand\@splitfirstchar@prolog{}
\def\@splitfirstchar@prolog#1#2\relax{\def\@testChar@prolog{#1}}
\def\beginlstdelim#1#2{  \def\endlstdelim{\PrologOtherStyle #2\egroup}  {\PrologOtherStyle #1}  \global\predicate@prolog@false  \withinparens@prolog@true  \bgroup\aftergroup\endlstdelim}
\newcommand\lang@prolog{Prolog-pretty}
\normlang@prolog  \expandafter{\lang@prolog}
\lstdefinelanguage\expandafter{\lang@prolog}
{  language            = Prolog,
  keywords            = {},        showstringspaces    = false,
  alsoletter          = (,
  alsoother           = @$,
  moredelim           = **[is][\beginlstdelim{(}{)}]{(}{)},
  MoreSelectCharTable =
    \lst@DefSaveDef{`(}\opparen@prolog{\global\predicate@prolog@true\opparen@prolog},
}
\newcommand\@ddedToOutput@prolog\relax
\let\@ddedToOutput@prolog\@testChar@prolog  \fi
\definecolor{PrologPredicate}{RGB}{000,031,255}
\definecolor{PrologVar}      {RGB}{024,021,125}
\definecolor{PrologAnonymVar}{RGB}{000,127,000}
\definecolor{PrologAtom}     {RGB}{186,032,032}
\definecolor{PrologComment}  {RGB}{063,128,127}
\definecolor{PrologOther}    {RGB}{000,000,000}
\renewcommand\PrologPredicateStyle{\color{PrologPredicate}}
\renewcommand\PrologVarStyle{\color{PrologVar}}
\renewcommand\PrologAnonymVarStyle{\color{PrologAnonymVar}}
\renewcommand\PrologAtomStyle{\color{PrologAtom}}
\renewcommand\PrologCommentStyle{\itshape\color{PrologComment}}
\renewcommand\PrologOtherStyle{\color{PrologOther}}
\lstdefinestyle{Prolog-pygsty}
{
  language     = Prolog-pretty,
  upquote      = true,
  stringstyle  = \PrologAtomStyle,
  commentstyle = \PrologCommentStyle,
  literate     =
    {:-}{{\PrologOtherStyle :-}}2
    {,}{{\PrologOtherStyle ,}}1
    {.}{{\PrologOtherStyle .}}1
} 
\title{PUG: A Framework and Practical Implementation for Why \& Why-Not Provenance (Extended version)}
\author{Seokki~Lee        \and
        Bertram~Lud\"ascher \and
	Boris~Glavic }
\institute{\Letter\, Seokki~Lee, Boris~Glavic \at
              10 W 31st Street, Chicago, IL 60616 \\
              \email{slee195@hawk.iit.edu}, \email{bglavic@iit.edu}                      \and
           Bertram~Lud\"ascher \at
	      501 E. Daniel St, Champaign, IL 61820\\
              \email{ludaesch@illinois.edu}
}
\date{Received: date / Accepted: date}
\newcommand{\TRtitle}{PUG: A Framework and Practical Implementation for Why \& Why-Not Provenance\\(extended version)}
\newcommand{\TRauthors}{Seokki Lee, Bertram Lud\"ascher, Boris Glavic}
\newcommand{\TRnumber}{IIT/CS-DB-2018-02}
\newcommand{\TRdate}{2018-08}
\begin{document}

\graphicspath{ {./}, {tr/} }
\iftechreport{
\twocolumn[{

\vspace{1cm}

\begin{minipage}{0.5\linewidth}
  \colorbox{black}{\includegraphics[width=1\linewidth]{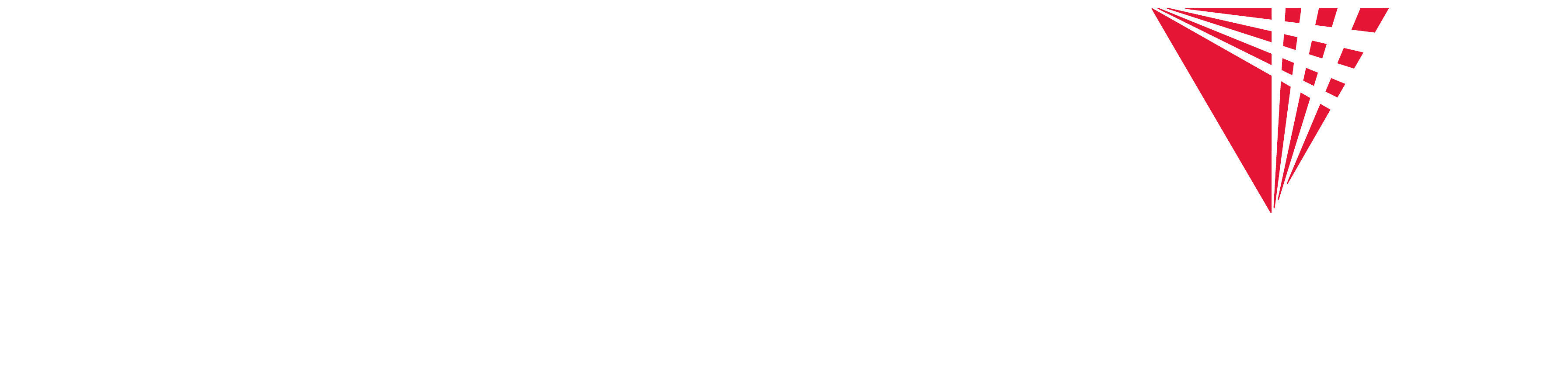}}
\end{minipage}
\hfill
\begin{minipage}{0.16\linewidth}
  \includegraphics[width=1\linewidth]{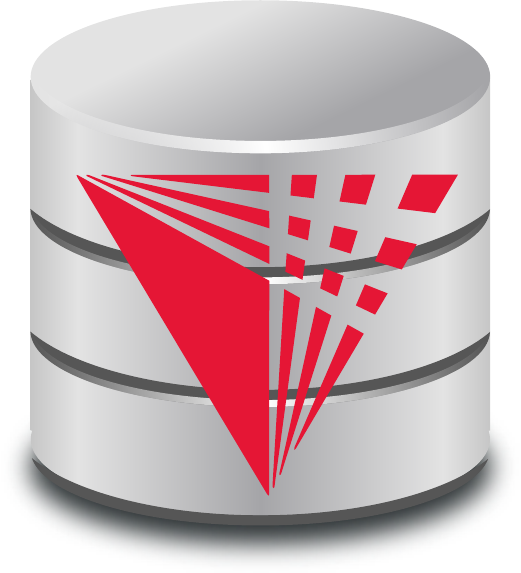}  
\end{minipage}\\
\vspace{4cm}

\centering
\begin{minipage}{1.0\linewidth} 
\centering
{\Huge \bf \TRtitle}
\end{minipage}
\\
\vspace{1cm}

{\huge \TRauthors}\\
\vspace{1cm}

{\huge \tt IIT DB Group Technical Report \TRnumber}\\
\vspace{1cm}

{\Large \TRdate}
\vspace{3cm}

{\huge \url{http://www.cs.iit.edu/~dbgroup/}}

\vspace{3cm}
\begin{minipage}{1.0\linewidth}
\textbf{LIMITED DISTRIBUTION NOTICE}: The research presented in this report may be submitted as a whole or in parts for publication  and will probably be copyrighted if accepted for publication. It has been issued as a Technical Report for early dissemination of its contents. In view of the transfer of copyright to the outside publisher, its distribution outside of IIT-DB prior to publication should be limited to peer communications and specific requests. After outside publication, requests should be filled only by reprints or legally obtained copies of the article (e.g. payment of royalties).  
\end{minipage}

}]

\clearpage
 }

\definecolor{lstpurple}{rgb}{0.5,0,0.5}
\definecolor{lstred}{rgb}{1,0,0}
\definecolor{lstreddark}{rgb}{0.7,0,0}
\definecolor{lstredl}{rgb}{0.64,0.08,0.08}
\definecolor{lstmildblue}{rgb}{0.66,0.72,0.78}
\definecolor{lstblue}{rgb}{0,0,1}
\definecolor{lstmildgreen}{rgb}{0.42,0.53,0.39}
\definecolor{lstgreen}{rgb}{0,0.5,0}
\definecolor{lstorangedark}{rgb}{0.6,0.3,0}	
\definecolor{lstorange}{rgb}{0.75,0.52,0.005}
\definecolor{lstorangelight}{rgb}{0.89,0.81,0.67}
\definecolor{lstbeige}{rgb}{0.90,0.86,0.45}

\lstdefinestyle{links}
{
tabsize=2,
basicstyle=\small\upshape\ttfamily,
language=c,
morekeywords={for,where},
extendedchars=false,
keywordstyle=\bfseries\color{lstpurple},
deletekeywords={==,&&},
keywords=[2]{==,&&,<--},
keywordstyle=[2]\color{lstblue},
stringstyle=\color{lstreddark},
mathescape=true,
escapechar=@,
sensitive=true,
alsoletter=<-&\=
}

\lstdefinestyle{psql}
{
tabsize=2,
basicstyle=\small\upshape\ttfamily,
language=SQL,
morekeywords={PROVENANCE,BASERELATION,INFLUENCE,COPY,ON,TRANSPROV,TRANSSQL,TRANSXML,CONTRIBUTION,COMPLETE,TRANSITIVE,NONTRANSITIVE,EXPLAIN,SQLTEXT,GRAPH,IS,ANNOT,THIS,XSLT,MAPPROV,cxpath,OF,TRANSACTION,SERIALIZABLE,COMMITTED,INSERT,INTO,WITH,SCN,UPDATED},
extendedchars=false,
keywordstyle=\bfseries,
mathescape=true,
escapechar=@,
sensitive=true
}

\lstdefinestyle{psqlcolor}
{
tabsize=2,
basicstyle=\small\upshape\ttfamily,
language=SQL,
morekeywords={PROVENANCE,BASERELATION,INFLUENCE,COPY,ON,TRANSPROV,TRANSSQL,TRANSXML,CONTRIBUTION,COMPLETE,TRANSITIVE,NONTRANSITIVE,EXPLAIN,SQLTEXT,GRAPH,IS,ANNOT,THIS,XSLT,MAPPROV,cxpath,OF,TRANSACTION,SERIALIZABLE,COMMITTED,INSERT,INTO,WITH,SCN,UPDATED},
extendedchars=false,
keywordstyle=\bfseries\color{lstpurple},
deletekeywords={count,min,max,avg,sum},
keywords=[2]{count,min,max,avg,sum},
keywordstyle=[2]\color{lstblue},
stringstyle=\color{lstreddark},
commentstyle=\color{lstgreen},
mathescape=true,
escapechar=@,
sensitive=true
}

\lstdefinestyle{datalog}
{
basicstyle=\footnotesize\upshape\ttfamily,
language=prolog
}

\lstdefinestyle{pseudocode}
{
  tabsize=3,
  basicstyle=\small,
  language=c,
  morekeywords={if,else,foreach,case,return,in,or},
  extendedchars=true,
  mathescape=true,
  literate={:=}{{$\gets$}}1 {<=}{{$\leq$}}1 {!=}{{$\neq$}}1 {append}{{$\listconcat$}}1 {calP}{{$\cal P$}}{2},
  keywordstyle=\color{lstpurple},
  escapechar=&,
  numbers=left,
  numberstyle=\color{lstgreen}\small\bfseries, 
  stepnumber=1, 
  numbersep=5pt,
}

\lstdefinestyle{xmlstyle}
{
  tabsize=3,
  basicstyle=\small,
  language=xml,
  extendedchars=true,
  mathescape=true,
  escapechar=£,
  tagstyle=\color{keywordpurple},
  usekeywordsintag=true,
  morekeywords={alias,name,id},
  keywordstyle=\color{lstred}
}

\maketitle

\begin{abstract}
Explaining why an answer is (or is not) returned by a query is important for many applications including auditing, debugging data and queries, and answering hypothetical questions about data.
In this work, we present the first \emph{practical} approach for answering such questions for queries with negation (first-order queries). Specifically, we introduce a graph-based provenance model that, while syntactic in nature, supports reverse reasoning and is proven to encode a wide range of provenance models from the literature.
The implementation of this model in our PUG (Provenance Unification through Graphs) system takes a provenance question and Datalog query as an input and generates a Datalog program that computes an \emph{explanation}, i.e., the part of the provenance 
that is relevant to answer the question.
Furthermore, we demonstrate how a desirable factorization of provenance  can be achieved by rewriting an input query. We experimentally evaluate our approach demonstrating its efficiency. \end{abstract}

 \section{Introduction}
\label{sec:intro}

Provenance for relational queries records how results of a query
depend on the query's inputs.  This type of information can be used to
explain \emph{why} (and \emph{how}) a result is derived by a query
over a given database.  Recently, provenance-like techniques have been used to explain why a tuple (or a set of tuples described declaratively by a
pattern) is \emph{missing} from the query
result (see~\cite{HD17} for a survey covering both provenance and missing answer techniques).
However, the two problems have been treated mostly in isolation.
Consider the following observation from~\cite{KL13}: asking
why a tuple $t$ is absent from the result of a query $Q$ is equivalent to asking why $t$ is present in $\neg Q$ (i.e., the complement of the result of $Q$ wrt. the active domain). Thus, a unification of \textit{why} and \textit{why-not} provenance is naturally achieved by developing a provenance model for queries with negation.
The approach for provenance and missing answers from~\cite{XZ18} is based on the same observation.
\definecolor{LightCyan}{rgb}{0.88,1,1}
\definecolor{TeaGreen}{rgb}{0.82,0.94,0.75}
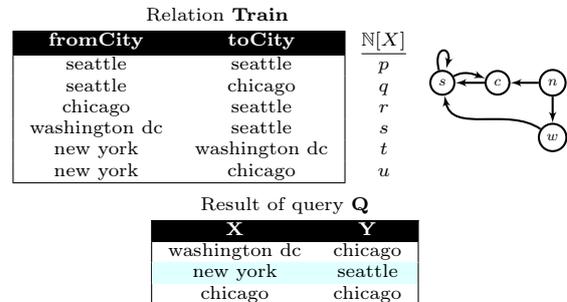
\begin{figure}[t]
  \centering $\,$\\[-3mm]
    \begin{minipage}{1\linewidth}
      \centering
     \begin{minipage}{.69\linewidth}
      \centering
      \begin{align*}
		r_1: \rel{Q}(X,Y) \dlImp \rel{Train}(X,Z), \rel{Train}(Z,Y), \neg \rel{Train}(X,Y)\\
      \end{align*}
     \end{minipage}
    \end{minipage}\\[-3mm]
      \begin{minipage}{1\linewidth}
       \centering
     \begin{minipage}{0.68\linewidth}
      \centering
        \scriptsize \centering $\,$\\[-1mm]
        \begin{tabular}{|cc|c}
          \multicolumn{3}{c}{Relation \textbf{Train}}  \\[0.5mm]\cline{1-2}
	  \thead {fromCity} & \thead {toCity} & \underline{$\ProvPoly$} \\
          seattle & seattle & $p$\\
	  seattle & chicago & $q$\\
	  chicago & seattle & $r$\\
          washington dc & seattle & $s$\\
          new york & washington dc & $t$\\
          new york & chicago & $u$\\
          \cline{1-2}
        \end{tabular}
    \end{minipage}
     \begin{minipage}{0.2\linewidth} \hspace{-3mm}
       \resizebox{1\columnwidth}{!}{\begin{minipage}{1.4\linewidth}
\begin{tikzpicture}[>=latex',
line join=bevel,
line width=0.4mm,
every node/.style={ellipse},
minimum height=4mm]

  \definecolor{fillcolor}{rgb}{0.0,0.0,0.0};
  \node (s) at (0bp,0bp) [draw=black,circle] {$s$};
  \node (c) at (30bp,0bp) [draw=black,circle] {$c$};
  \node (n) at (60bp,0bp) [draw=black,circle] {$n$};
  \node (w) at (60bp,-30bp) [draw=black,circle] {$w$};

   \path[]
   		(n) edge [->] node [right] {} (w)
   		(w) edge [in=280,out=145,->] node [below] {} (s)
   		(n) edge [->] node [above] {} (c)
   		(c) edge [->] node [below] {} (s)
		(s) edge [in=160,out=30,->] node [above] {} (c)
		(s) edge [loop above] node [left] {} (s);

\end{tikzpicture}
\end{minipage}

 }
     \end{minipage}
    \end{minipage}\\$\,$\\[1mm]
     \begin{minipage}{1\linewidth}
      \centering
      \begin{minipage}{0.99\linewidth}
 	\scriptsize \centering
         \begin{tabular}{|cc|}
           \multicolumn{2}{c}{Result of query $\textbf{Q}$}  \\[0.5mm]\cline{1-2}
 	  \thead {X} & \thead {Y}  \\  	   washington dc & chicago \\  	   \rowcolor{LightCyan} new york & seattle\\ 	   chicago & chicago\\            \cline{1-2}
         \end{tabular}
     \end{minipage}
   \end{minipage}
   $\,$\\[-4mm]
  \caption{Example train connection database and query}
  \label{fig:running-example-db}
\end{figure}

In this paper, we introduce a graph model for provenance of
first-order (FO) queries expressed as \emph{non-recursive Datalog queries with negation}\footnote{or, equivalently, queries in full relational algebra (without aggregation), formulas in FO logic under the closed world assumption, and SPJUD-queries (select, project, join, union, difference).}
(or \emph{Datalog} for short)
and an efficient method for explaining a (missing) answer using SQL.
Our approach is based on the observation that typically only a part of the provenance, which we call \textit{explanation} in this work,
is actually relevant for answering the user's provenance question
about the existence or absence of a result.

\begin{figure}[t]
  \centering
  \resizebox{0.73\columnwidth}{!}{\begin{tikzpicture}[>=latex',line join=bevel,line width=0.3mm]
  \definecolor{fillcolor}{rgb}{0.83,1.0,0.8};
  \definecolor{failcolor}{rgb}{0.63,0,0};

  \node (REL_Q_WON_n_s_) at (150bp,175bp) [draw=black,fill=fillcolor,ellipse] {$\boldsymbol{Q(n,s)}$};

  \node (RULE_0_LOS_n_s_w_) at (91bp,150bp) [draw=black,fill=fillcolor,rectangle] {$\boldsymbol{r_1(n,s,w)}$};
  \node (GOAL_0_0_WON_n_w_) at (33bp,125bp) [draw=black,fill=fillcolor,rounded corners=.15cm,inner sep=3pt] {$\boldsymbol{g_{1}^{1}(n,w)}$};
  \node (EDB_T_LOS_n_w_) at (33bp,95bp) [draw=black,fill=fillcolor,ellipse] {$\boldsymbol{T(n,w)}$};

  \node (GOAL_0_1_WON_w_s_) at (93bp,125bp) [draw=black,fill=fillcolor,rounded corners=.15cm,inner sep=3pt] {$\boldsymbol{g_{1}^{2}(w,s)}$};
  \node (EDB_T_LOS_w_s_) at (93bp,95bp) [draw=black,fill=fillcolor,ellipse] {$\boldsymbol{T(w,s)}$};

  \node (GOAL_0_2_WON_n_s_) at (150bp,125bp) [draw=black,fill=fillcolor,rounded corners=.15cm,inner sep=3pt] {$\boldsymbol{g_{1}^{3}(n,s)}$};
  \node (EDB_T_LOS_n_s_) at (150bp,95bp) [draw=black,fill=failcolor,ellipse,text=white] {$\boldsymbol{T(n,s)}$};

  \node (RULE_0_LOS_n_s_c_) at (211bp,150bp) [draw=black,fill=fillcolor,rectangle] {$\boldsymbol{r_1(n,s,c)}$};
  \node (GOAL_0_0_WON_n_c_) at (209bp,125bp) [draw=black,fill=fillcolor,rounded corners=.15cm,inner sep=3pt] {$\boldsymbol{g_{1}^{1}(n,c)}$};
  \node (EDB_T_LOS_n_c_) at (209bp,95bp) [draw=black,fill=fillcolor,ellipse] {$\boldsymbol{T(n,c)}$};

  \node (GOAL_0_1_WON_c_s_) at (265bp,125bp) [draw=black,fill=fillcolor,rounded corners=.15cm,inner sep=3pt] {$\boldsymbol{g_{1}^{2}(c,s)}$};
  \node (EDB_T_LOS_c_s_) at (265bp,95bp) [draw=black,fill=fillcolor,ellipse] {$\boldsymbol{T(c,s)}$};

  \draw [->] (RULE_0_LOS_n_s_c_) -> (GOAL_0_0_WON_n_c_);
  \draw [->] (REL_Q_WON_n_s_) -> (RULE_0_LOS_n_s_c_);
  \draw [->] (RULE_0_LOS_n_s_c_) -> (GOAL_0_1_WON_c_s_);

  \draw [->] (GOAL_0_0_WON_n_c_) -> (EDB_T_LOS_n_c_);

  \draw [->] (REL_Q_WON_n_s_) -> (RULE_0_LOS_n_s_w_);
  \draw [->] (RULE_0_LOS_n_s_w_) -> (GOAL_0_1_WON_w_s_);
  \draw [->] (RULE_0_LOS_n_s_c_) -> (GOAL_0_2_WON_n_s_);
  \draw [->] (RULE_0_LOS_n_s_w_) -> (GOAL_0_2_WON_n_s_);
  \draw [->] (GOAL_0_2_WON_n_s_) -> (EDB_T_LOS_n_s_);

  \draw [->] (GOAL_0_1_WON_c_s_) -> (EDB_T_LOS_c_s_);

  \draw [->] (GOAL_0_0_WON_n_w_) -> (EDB_T_LOS_n_w_);

  \draw [->] (GOAL_0_1_WON_w_s_) -> (EDB_T_LOS_w_s_);

  \draw [->] (RULE_0_LOS_n_s_w_) -> (GOAL_0_0_WON_n_w_);
\end{tikzpicture}
 }
  $\,$\\[-3mm]
  \caption{Provenance graph explaining $\whyq \rel{Q}(n,s)$}
  \label{fig:exam-pg-why-NY-seattle}
\end{figure}
\begin{figure}[t]
  \centering
  $\,$\\[-3mm]
  \resizebox{0.83\columnwidth}{!}{\begin{tikzpicture}[>=latex',line join=bevel,line width=0.3mm]

  \definecolor{fillcolor}{rgb}{0.63,0,0};
  \node (REL_Q_LOS_c_s_) at (285bp,240bp) [draw=black,fill=fillcolor,ellipse,text=white] {$\boldsymbol{Q(s,n)}$};

  \node (RULE_0_WON_c_s_w_) at (190bp,213bp) [draw=black,fill=fillcolor,rectangle,text=white] {$\boldsymbol{r_1(s,n,w)}$};
  \node (GOAL_0_0_LOS_c_w_) at (165bp,183bp) [draw=black,fill=fillcolor,rounded corners=.15cm,inner sep=3pt,text=white] {$g_{1}^{1}(s,w)$};
  \node (REL_TR_LOS_c_w_) at (155bp,152bp) [draw=black,fill=fillcolor,ellipse,text=white] {$\boldsymbol{T(s,w)}$};
  \node (GOAL_0_1_LOS_w_n_) at (215bp,183bp) [draw=black,fill=fillcolor,rounded corners=.15cm,inner sep=3pt,text=white] {$\boldsymbol{g_{1}^{2}(w,n)}$};
  \node (REL_TR_LOS_w_n_) at (215bp,152bp) [draw=black,fill=fillcolor,ellipse,text=white] {$\boldsymbol{T(w,n)}$};

  \node (RULE_0_WON_c_s_n_) at (255bp,213bp) [draw=black,fill=fillcolor,rectangle,text=white] {$\boldsymbol{r_1(s,n,c)}$};

  \node (GOAL_0_0_LOS_c_n_) at (265bp,183bp) [draw=black,fill=fillcolor,rounded corners=.15cm,inner sep=3pt,text=white] {$\boldsymbol{g_{1}^{2}(c,n)}$};
  \node (REL_TR_LOS_c_n_) at (275bp,152bp) [draw=black,fill=fillcolor,ellipse,text=white] {$\boldsymbol{T(c,n)}$};

  \node (RULE_0_WON_c_s_c_) at (320bp,213bp) [draw=black,fill=fillcolor,rectangle,text=white] {$\boldsymbol{r_1(s,n,s)}$};
  \node (GOAL_0_1_LOS_s_n_) at (320bp,183bp) [draw=black,fill=fillcolor,rounded corners=.15cm,inner sep=3pt,text=white] {$\boldsymbol{g_{1}^{2}(s,n)}$};
  \node (REL_TR_LOS_s_n_) at (340bp,152bp) [draw=black,fill=fillcolor,ellipse,text=white] {$\boldsymbol{T(s,n)}$};

  \node (RULE_0_WON_c_s_s_) at (390bp,213bp) [draw=black,fill=fillcolor,rectangle,text=white] {$\boldsymbol{r_1(s,n,n)}$};
  \node (GOAL_0_0_LOS_s_s_) at (370bp,183bp) [draw=black,fill=fillcolor,rounded corners=.15cm,inner sep=3pt,text=white] {$\boldsymbol{g_{1}^{1}(s,n)}$};
  \node (GOAL_0_1_LOS_n_n_) at (415bp,183bp) [draw=black,fill=fillcolor,rounded corners=.15cm,inner sep=3pt,text=white] {$\boldsymbol{g_{1}^{2}(n,n)}$};
  \node (REL_TR_LOS_n_n_) at (415bp,152bp) [draw=black,fill=fillcolor,ellipse,text=white] {$\boldsymbol{T(n,n)}$};

  \draw [->] (RULE_0_WON_c_s_s_) -> (GOAL_0_0_LOS_s_s_);
  \draw [->] (RULE_0_WON_c_s_s_) -> (GOAL_0_1_LOS_n_n_);
  \draw [->] (GOAL_0_1_LOS_n_n_) -> (REL_TR_LOS_n_n_);
  \draw [->] (REL_Q_LOS_c_s_) -> (RULE_0_WON_c_s_w_);
  \draw [->] (GOAL_0_0_LOS_c_w_) -> (REL_TR_LOS_c_w_);
  \draw [->] (REL_Q_LOS_c_s_) -> (RULE_0_WON_c_s_s_);
  \draw [->] (RULE_0_WON_c_s_n_) -> (GOAL_0_0_LOS_c_n_);
  \draw [->] (RULE_0_WON_c_s_c_) -> (GOAL_0_1_LOS_s_n_);
  \draw [->] (GOAL_0_1_LOS_s_n_) -> (REL_TR_LOS_s_n_);
  \draw [->] (GOAL_0_0_LOS_s_s_) -> (REL_TR_LOS_s_n_);
  \draw [->] (REL_Q_LOS_c_s_) -> (RULE_0_WON_c_s_n_);
  \draw [->] (RULE_0_WON_c_s_w_) -> (GOAL_0_0_LOS_c_w_);
  \draw [->] (RULE_0_WON_c_s_w_) -> (GOAL_0_1_LOS_w_n_);
  \draw [->] (GOAL_0_1_LOS_w_n_) -> (REL_TR_LOS_w_n_);

  \draw [->] (REL_Q_LOS_c_s_) -> (RULE_0_WON_c_s_c_);
  \draw [->] (GOAL_0_0_LOS_c_n_) -> (REL_TR_LOS_c_n_);

\end{tikzpicture}
 }
  $\,$\\[-3mm]
  \caption{Provenance graph explaning $\whynotq \rel{Q}(s,n)$}
  \label{fig:exam-pg-whynot-chicago-seattle}
\end{figure}

\begin{Example}
\label{ex:example1}
Consider the relation $\rel{Train}$ in Fig.\,\ref{fig:running-example-db} that stores train connections.
Datalog rule $r_1$ in Fig.\,\ref{fig:running-example-db} computes which cities can be reached
with exactly one transfer, but not directly.
We use the following abbreviations in provenance graphs: T = Train;
n = New York; s = Seattle; w = Washington DC and c = Chicago.
Given the result of this query,  the user may be interested to know why he/she is
able to reach Seattle from New York ($\whyq \rel{Q}(n,s)$) with one
intermediate stop but not directly
or why it is not possible to reach New York from Seattle  in the same fashion ($\whynotq \rel{Q}(s,n)$).
\end{Example}
An explanation for either type of question should justify the existence (absence) of a result as the success (failure) to derive the result
through the rules of the query.
Furthermore, it should explain how the existence (absence) of tuples in the database caused the derivation to succeed (fail).
Provenance graphs providing this type of justification for $\whyq \rel{Q}(n,s)$ and $\whynotq \rel{Q}(s,n)$ are shown in
Fig.\,\ref{fig:exam-pg-why-NY-seattle} and
Fig.\,\ref{fig:exam-pg-whynot-chicago-seattle}, respectively.
These graphs contain three types of nodes: \emph{rule nodes} (boxes labeled with a rule identifier and the constant arguments of a rule derivation), \emph{goal nodes} (rounded boxes labeled with a rule identifier and the goal's position in the rule's body), and \emph{tuple nodes} (ovals). In these provenance graphs, nodes are either colored in
\emph{light green} (successful/existing) or \emph{dark red} (failed/missing).

\begin{Example}\label{ex:example-why}
Consider the explanation (provenance graph in Fig.\,\ref{fig:exam-pg-why-NY-seattle}) for question $\whyq \rel{Q}(n,s)$.  Seattle can be reached from New York by stopping in
Washington DC or Chicago and there is no direct connection between these two cities.
These options correspond to two successful
derivations using rule $r_1$ with $X {=} n$, $Y {=} s$, and $Z {=} w$ (or $Z{=}c$, respectively).
In the provenance graph, there are two
\emph{rule nodes} representing these derivations of $\rel{Q}(n,s)$ based on rule $r_1$.
A derivation is successful if all goals in the body evaluate to true, i.e., a successful \emph{rule node} is connected to successful \emph{goal nodes}
(e.g., $r_1$ is connected to $g_1^1$, the $1^{st}$ goal in the rule's body).
A positive (negated) goal is successful
if the corresponding tuple is (is not) in the database.
Thus, a successful goal node
is connected to the node corresponding to the existing (green) or missing (red) tuple justifying the goal, respectively.
\end{Example}

Supporting negation and missing answers is challenging, because we need to enumerate all potential ways of deriving a missing answer (or intermediate result corresponding to a negated subgoal) and explain why each of these derivations has failed. For that, we have to decide how to bound the set of missing answers to be considered.
Using the closed world assumption, only values that exist in the database or are postulated by the query are used to construct missing tuples. As is customary, we refer to this set of values as the active domain $\adom{I}$ of a database instance $I$.
\BGDel{We will revisit the assumption that all derivations with constants from $\adom{I}$ are meaningful
later on.}

\begin{Example}\label{ex:example-whynot}
Fig.\,\ref{fig:exam-pg-whynot-chicago-seattle} shows the explanation for $\whynotq$ $\rel{Q}(s,n)$,
i.e., why it is not true that New York is reachable from Seattle with exactly one transfer, but not directly.
The tuple $\rel{Q}(s,n)$ is missing from the result
because all potential ways of deriving this tuple through rule $r_1$ have failed.
In this example, $\adom{I} {=} \{c,n,\\ s,w\}$ and, thus, there exist four failed derivations of $\rel{Q}(s,n)$ choosing either of these cities as the intermediate stop between Seattle and New York.
A rule derivation fails if at least one goal in the body evaluates to false.
Failed positive goals in the body of a failed rule are explained by missing
tuples (red \emph{tuple nodes}). For instance, we cannot reach New York from Seattle with an intermediate stop in
Washington DC (the first failed
rule derivation from the left in Fig.\,\ref{fig:exam-pg-whynot-chicago-seattle})
because there exists no connection from Seattle to Washington DC
(\emph{tuple node} $\rel{T}(s,w)$ in red),
and Washington DC to New York
(\emph{tuple node} $\rel{T}(w,n)$ in red). The successful goal $\dlNeg \rel{T}(s,n)$ (there is no direct connection from Seattle to New York) does not contribute to the failure of this derivation and, thus, is not part of the explanation.

\end{Example}

Observe that nodes for missing tuples and successful rule derivations are conjunctive in nature (they depend on all their children) while existing tuples and failed rule derivations are disjunctive (they only require at least one of their children to be present).

\mypartitle{Provenance Model}
\label{sec:intro-model}
  By recording which rule derivations justify the existence or absence of a query result, our model is suited well for debugging both data and queries.
However, simpler provenance types, e.g., only tracking data dependencies, are sufficient for some applications. For example, assume that we record for each train connection from which webpage we retrieved information about this train connection. A user may be interested in knowing based on which webpages a query answer was computed. This question can be answered using a simpler provenance type called Lineage (semiring \WhichProv~\cite{GT17}) which records the set of input tuples a result depends on.
For such applications, we prefer simpler provenance types, because they are easier to interpret and more efficient to compute. Importantly, only minor modifications to our framework were required to support such  provenance types.

 \mypartitle{Relationship to Other Provenance Models}
 In comparison to other provenance models, our model is more syntax-driven. We argue that this is actually a feature (not a bug).
An important question is what is the semantic justification of our model, i.e., how do
we know whether it correctly models Datalog query evaluation semantics and whether all (and only) relevant provenance is captured. First, we observe that our model encodes Datalog query semantics by construction. We justify that all relevant provenance is captured indirectly by demonstrating that our model captures sufficient information to derive provenance according to well-established models. Specifically, we demonstrate that our model is equivalent to provenance games~\cite{KL13} which also support FO queries. It was proven in~\cite{KL13} that provenance polynomials, the most general form of provenance in the semiring model~\cite{GT17,grigoris-tj-simgodrec-2012}, for a result of a positive query can be ``read out'' from a provenance game. By being equivalent to provenance games, our provenance model also enjoys this property. We extend this result to queries with negation by relating our model to semiring provenance for FO  model checking~\cite{GV17,T17,XZ18}. We prove that, for any FO formula $\aForm$, we can generate a query $\formQ{}$ such that the semiring provenance annotation of the formula $\kInter(\aForm)$ according to a  $\SomeK$-interpretation $\kInter$ (annotation of positive and negated literals~\cite{GV17}) can be extracted efficiently from our provenance graph for $\formQ{}$. Note that non-recursive Datalog queries with negation and FO formulas  under the closed world assumption have the same expressive power and, thus, we use these languages interchangeably.

\mypartitle{Reverse Reasoning (How-to Queries)}
 For some applications, a user may not be interested in how a result was derived, but wants to understand how a result of interest can be achieved through updates to the database (see e.g.,~\cite{MG11c,T17,meliou2012tiresias}). We extend our approach to support such ``reverse reasoning'' by introducing a third possible state of nodes in a provenance graph reserved for facts and derivations whose truth is undetermined. The provenance graph generated over an instance with undetermined facts represents a set of provenance graphs - one for each instance that is derived by assigning a truth value to each undetermined fact.
We demonstrate that these graphs can be used to compute the semiring provenance of an FO formula under a provenance tracking interpretation as defined in~\cite{T17}.

\mypartitle{Computing Explanations}
\label{sec:intro-expl}
We utilize Datalog to generate provenance graphs that explain a (missing) query result.
Specifically,
we instrument the input
Datalog program to compute provenance bottom-up. Evaluated over an instance $I$, the instrumented program
returns the edge relation of an explanation (provenance graph).

The main driver of our approach is a rewriting of Datalog rules (so-called \textit{firing rules}) that captures successful and failed rule derivations.
Firing rules for positive queries were first introduced in~\cite{kohler2012declarative}. We have generalized this concept to negation and failed rule derivations. Firing rules provide sufficient information for constructing any of the provenance graph types we support.
 To make provenance capture efficient, we avoid capturing derivations that will not contribute to an explanation.
We achieve this by propagating information from a user's provenance question  throughout the query
to prune derivations that
1) do not agree with the constants of the question or 2) cannot be part of the explanation based on their success/failure status. For instance, in the running example, $\rel{Q}(n,s)$ is only
connected to derivations of rule $r_1$ with $X {=} n$ and $Y {=} s$.

We implemented our approach in \textit{PUG}~\cite{LS17} (Provenance Unification through Graphs), an extension of our \textit{GProM}~\cite{AG14} system.
      Using PUG,       we compile rewritten Datalog programs into relational algebra,       and translate such algebra expressions       into SQL code that
      can be executed by a standard relational database backend.

\mypartitle{Factorizing Provenance}
\label{sec:intro-factorization}
Nodes in our provenance graphs are uniquely identified by their label.
Thus, common subexpressions are shared leading to more compact graphs. For instance, observe that $g_1^3(n,s)$ in Fig.\,\ref{fig:exam-pg-why-NY-seattle} is shared by two rule nodes. We exploit this fact by rewriting the input program to generate more concise, but equivalent, provenance.
This is akin to factorization of provenance polynomials in the semiring model and utilizes factorized databases techniques~\cite{OZ12,OZ15}.

\mypartitle{Contributions}
\label{sec:intro-contributions}
This paper extends our previous work\\~\cite{LS17} in the following ways: we extend our model to support less informative, but more concise, provenance types; we extend our provenance model to support reverse reasoning~\cite{GV17} where the truth of some facts in the database is left undetermined; we demonstrate that our provenance graphs (explanations) are equivalent to provenance games~\cite{KL13} and how semiring provenance and its FO extension as presented in~\cite{T17} can be extracted from our provenance model; we demonstrate how to rewrite an input program to generate a desirable (concise) factorization of provenance and evaluate the performance impact of this technique; finally, we present an experimental comparison with the language-integrated provenance techniques implemented in Links~\cite{FS17}.

The remainder of this paper is organized as follows.
We discuss related work in Sec.\,\ref{sec:rel-work} and review Datalog in Sec.\,~\ref{sec:datalog}.
We define our model in Sec.\,\ref{sec:probl-defin-backgr} and prove it to be equivalent to provenance games~\cite{KL13}  in Sec.~\ref{sec:prov-game}.
We, then, show how our approach relates to semiring provenance for positive queries and FO model checking in Sec.~\ref{sec:semi-annot-model} and~\ref{sec:semi-foq}, respectively.
We
present our approach for computing explanations  in Sec.\,\ref{sec:compute-gp}, and factorization in Sec.\,\ref{sec:factorize}.
 Sec.\,\ref{sec:transl-into-relat} covers our implementation in \textit{PUG}
which we evaluate in
Sec.\,\ref{sec:experiments}. We  conclude in Sec.\,\ref{sec:concl}.

 \section{Related Work}
\label{sec:rel-work}

Our provenance graphs have strong connections to other provenance models
for relational queries and to approaches for explaining missing answers. 

\mypartitle{Provenance Games}
Provenance games~\cite{KL13}
model the evaluation of a given query (input program) $P$ over an instance $I$ as a 2-player game in a way that
resembles SLD(NF) resolution.
\BGDel{If the
position (a node in the game graph) corresponding to a tuple $t$ is won
(the player starting in this position has a winning strategy), then $t \in P(I)$ and if the position is lost, then $t \not\in P(I)$.}
By virtue of supporting negation, provenance games can uniformly answer why and why-not questions.  We prove our approach to be equivalent to provenance games in Sec.\,\ref{sec:prov-game}.
K\"ohler et al.~\cite{KL13} present an algorithm that computes the provenance game for a  program
$P$ and database $I$. However, this approach requires instantiation of the full game graph (which enumerates all existing and missing tuples) and evaluation of a recursive Datalog$^\neg$ program over this graph using the well-founded semantics~\cite{FK97}.  In constrast, our approach directly computes succinct explanations that contain only relevant provenance.

\mypartitle{Database Provenance}
Several provenance models for database queries have been introduced in related work, e.g., see~\cite{cheney2009provenance,grigoris-tj-simgodrec-2012}.
The semiring annotation framework generalizes these models for positive relational algebra (and, thus, positive non-recursive Datalog). An essential property of the \SomeK-relational model is that the semiring of \emph{provenance polynomials} $\ProvPoly$   generalizes all other semirings.
It has been shown in~\cite{KL13} that provenance games generalize $\ProvPoly$ for positive queries. Since our graphs are equivalent to provenance games in the sense that there exist lossless transformations between both models
(see Sec.\,\ref{sec:prov-game}), our graphs also encode $\ProvPoly$ and, thus, all other provenance models expressible as semirings (see Sec.\,\ref{sec:expl-types}).
Provenance graphs which are similar to our graphs
restricted to positive queries have been used as graph representations of semiring provenance
(e.g., see~\cite{DG15c,DM14c,grigoris-tj-simgodrec-2012}).
Both our graphs and the boolean circuits representation of semiring provenance~\cite{DM14c} explicitly share common subexpressions in the provenance.
While these circuits support recursive queries, they do not support negation. Recently, extension of circuits for semirings with monus (supporting set difference) have been discussed~\cite{S18}.
\BGDel{Exploring the relationship of provenance graphs for queries with negation and m-semirings (semirings with support for set difference) is an interesting avenue for future work.}
The semiring model has also been  applied to record provenance of model checking for first-order (FO) logic formulas~\cite{T17,GV17,XZ18}. This work also supports missing answers using the observation made earlier in~\cite{KL13}.
Support for negation relies on 1) translating formulas into negation normal form (\textit{nnf}), i.e., pushing all negations down to literals, and 2) annotating both positive and negative literals using a separate set $X$ and $\bar{X}$ of indeterminates in provenance expressions where variables from $X$ are reserved for positive literals and variables from $\bar{X}$ for negated literals. This idea of using dual (positive and negative) indeterminates is an independent rediscovery of the approach from~\cite{DA13} which applied this idea for FO queries. The main differences between these approaches are 1) that the results from~\cite{DA13} where only shown for  one particular semiring ($Bool(X \cup \bar{X})$, the semiring of boolean expressions over dual indeterminates) and 2) that~\cite{DA13} supports recursion in the form of well-founded Datalog and answer set programming (disjunctive Datalog). We prove that our model encompasses the model from~\cite{GV17}.
The notion of causality is also closely related to provenance.
Meliou et al.~\cite{MG10} compute causes for answers and non-answers. However, the approach requires the user to specify which missing inputs are considered as causes for a missing output.
Roy et al.~\cite{RO15,RS14} employ causality to compute explanations for high or low outcomes of aggregation queries as sets of input tuples which have a large impact on the result. Such sets of tuples are represented compactly through selection queries. A similar method was developed independently by Wu et al.~\cite{WM13}.

\mypartitle{Why-not and Missing Answers}
Approaches for explaining missing answers
are either based on the query~\cite{BH14a,BH14,CJ09,TC10} (i.e., which operators filter out tuples that would have contributed to the missing answer)
or based on the instance~\cite{HH10,huang2008provenance} (i.e., what tuples need to be inserted into the database to turn the missing answer into an answer). The missing answer problem was first stated for query-based explanations in the seminal paper by Chapman et al.~\cite{CJ09}. Huang et al.~\cite{huang2008provenance} first introduced an instance-based approach. Since then, several techniques have been developed to exclude spurious explanations, to support larger classes of queries~\cite{HH10},   and to support distributed Datalog systems in Y!~\cite{WZ14a}.
The approaches for instance-based explanations (with the exception of Y!) have in common that they treat the missing answer problem as a view update problem: the missing answer is a tuple that should be inserted into a view corresponding to the query and this insertion has to be translated as an insertion into the database instance. An explanation is then one particular solution to this view update problem. In contrast to these previous works, our provenance graphs explain missing answers
by enumerating all failed rule derivations that justify why  the answer
is not in the result.
Thus, they are arguably a better fit for use cases such as debugging queries, where in addition
 to determining which missing inputs justify a missing answer, the user also needs to understand why derivations have failed. Furthermore, we do support queries with negation.
Importantly, solutions for view update missing answer problems
can be extracted from our provenance graphs. Thus, in a sense, provenance graphs with our approach generalize some of the previous approaches
(for the class of queries supported, e.g., we  do not support aggregation yet).
Interestingly, recent work has shown that it may be possible to generate
more concise summaries of provenance games~\cite{GK15,RK14} and provenance graphs~\cite{LN17} that are particularly useful for negation and missing answers
to deal with the potentially large size of the resulting provenance. Similarly, some missing answer approaches~\cite{HH10} use c-tables to compactly represent sets of missing answers.
These approaches are complementary to our work.

\mypartitle{Computing Provenance Declaratively}
The concept of rewriting a Datalog program using firing rules to capture provenance as variable bindings of derivations was introduced by K\"ohler et al.~\cite{kohler2012declarative}. They apply this idea for provenance-based debugging of positive Datalog. Firing rules are also similar to relational implementations of provenance capture in Perm~\cite{GM13}, LogicBlox~\cite{GA12}, Orchestra~\cite{GK07a}, and GProM~\cite{AG14}. Zhou et al.~\cite{ZS10} leverage such rules for the distributed ExSPAN system using either full propagation or reference based provenance.
The extension of firing rules for negation is the main enabler of our approach.

\section{Datalog}
\label{sec:datalog}

A Datalog program $P$ consists of a finite set of rules $r_i: \rel{R}(\vec{X}) \dlImp \rel{R_1}(\vec{X_1}), \ldots,$ $\rel{R_n}(\vec{X_n})$ where $\vec{X_j}$ denotes a tuple of variables and/or constants. 
We assume that the rules of a program are labeled $r_1$ to $r_m$. $\rel{R}(\vec{X})$ is the \emph{head} of
the rule, denoted as $\headOf{r_i}$, and $\rel{R_1}(\vec{X_1}), \ldots,
\rel{R_n}(\vec{X_n})$ is the \emph{body} (each $\rel{R_j}(\vec{X_j})$
is a \emph{goal}). 
We use $\varsOf{r_i}$ to denote the set of variables in $r_i$. In this paper, consider non-recursive Datalog with negation (FO queries), so
 goals $\rel{R_j}(\vec{X_j})$ in the body are \emph{literals}, i.e.,
 atoms $\rel{L}(\vec{X_j})$ or their negation $\neg
 \rel{L}(\vec{X_j})$, and recursion is not allowed. 
All rules $r$ of a program have to be \emph{safe}, i.e., every
variable in $r$ must occur positively in $r$'s body (thus, head
variables and variables in negated goals must also occur in a positive goal).
For example, Fig.\,\ref{fig:running-example-db} shows a Datalog query with a single rule $r_1$. 
Here, 
$\headOf{r_1}$ is $\rel{Q}(X,Y)$ and $\varsOf{r_1}$ is $\{X,Y,Z\}$. The rule is safe since the head
variables  ($\{X,Y\}$) and the variables in the negated goal  ($\{X,Y\}$) also occur positively
in the body.
The set of relations in the schema 
 over which $P$ is defined is referred to as the extensional database
 (EDB), while relations defined through rules in $P$ form the
 intensional database (IDB), i.e., the IDB relations are those defined in the head of rules.
We require that $P$ has a distinguished IDB relation $Q$, called the \emph{answer} relation. Given $P$ and instance $I$, we use $P(I)$ to denote the result of $P$ evaluated over $I$. Note that $P(I)$ includes the instance $I$, i.e., all EDB atoms that are true in $I$. 
For an EDB or IDB predicate $R$, we use $R(I)$ to denote the instance of $R$ computed by $P$ and $R(t) \in P(I)$ to denote that $t \in R(I)$ according to $P$.

We use $\adom{I}$ to denote the active domain of instance $I$, i.e., the set of all constants that occur in $I$. Similarly, we use $\adom{\rel{R.A}}$ to denote the active domain of attribute $A$ of relation $\rel{R}$.
In the following, we make use of the concept of a rule derivation. A \emph{derivation} of a rule $r$ is an assignment of variables in $r$ to constants from $\adom{I}$.
For a rule with $n$ variables, we use $r(c_1, \ldots, c_n)$ to denote the 
derivation that is the result of binding $X_i {=} c_i$. We call a derivation \textit{successful} wrt.\ an instance $I$ if each atom in the body of the rule is true in $I$ and \textit{failed} otherwise. 
\BGDel{A \emph{conjunctive query} (CQ) is a Datalog program that consists of a single rule with only EDB atoms in the body. A \textit{union of conjunctive queries} (UCQ) consists of a set of CQ rules with the same head predicate.}

\section{Provenance Model}
\label{sec:probl-defin-backgr}

We now introduce our provenance model and formalize the problem addressed in this work: compute the subgraph of a provenance graph for a given query (input program) $P$ and instance $I$ that explains existence/absence of a tuple in/from the result of $P$.

\subsection{Negation and Domains}
\label{sec:domains}

To be able to explain why a tuple is missing, 
we have to enumerate all failed derivations of this tuple and, for each such derivation, explain why it failed. As mentioned in Sec.\,\ref{sec:intro}, we have to decide how to bound the set of missing answers. We propose a simple, yet general, solution by assuming that each attribute of an IDB or EDB relation has an associated domain. 

\begin{Definition}[Domain Assignment]
\label{def:definition1}
  Let $S = \{\rel{R_1}, \ldots, \rel{R_n}\}$ be a database schema where each $\rel{R_i(A_{1}, \ldots, A_{m})}$ is a relation. Given an instance $I$ of $S$, a \emph{domain assignment} $\domA$ is a function that associates with each attribute $\rel{R.A}$ a domain of values.
We require $\domA(\rel{R.A}) \supseteq \adom{\rel{R.A}}$. \end{Definition}

In our approach, the user specifies each $\domA(\rel{R.A})$ as a query 
$\domA_{\rel{R.A}}$ that returns the set of admissible values for the domain of attribute $\rel{R.A}$.
These associated domains fulfill two purposes: 1) to reduce the size of explanations 
and 2) to avoid semantically meaningless answers.
For instance, if there exists another attribute $\rel{Price}$ in the relation $\rel{Train}$ in Fig.\,\ref{fig:running-example-db}, 
then $\adom{I}$ would also include all the values that appear in this attribute. 
Thus, some failed rule derivations for $r_1$ would assign prices as intermediate stops.
Different attributes may represent the same type of entity (e.g., $\rel{fromCity}$ and $\rel{toCity}$ in our example) and, thus, it would make sense to use their combined domain values when constructing missing answers. For now, we leave it up to the user to specify attribute domains. \BGDel{Using techniques for discovering semantic relationships among attributes to automatically determine feasible attribute domains is an interesting avenue for future work.}

When defining provenance graphs in the following, we are only interested in rule derivations that use constants from the associated domains of attributes accessed by the rule. Given a rule $r$ and variable $X$ used in this rule, let $\attrsOf{r,X}$ denote the set of attributes that variable $X$ is bound to in the body of the rule. In Fig.\,\ref{fig:running-example-db}, $\attrsOf{r_1,Z} {=} \{\rel{Train.fromCity}, \rel{Train.toCity}\}$. 
We say a rule derivation $r(c_1, \ldots, c_n)$ is \emph{domain grounded} iff $c_i \in \bigcap_{A \in \attrsOf{r,X_i}} \domA(A)$ for all $i \in \{1, \ldots, n\}$. 
For a relation $\rel{R(A_1, \ldots, A_n)}$, we use $\tupDom(\rel{R})$ to denote the set of all possible tuples for $\rel{R}$, i.e., $\tupDom(\rel{R}) = \domA(\rel{R.A_1}) \times \ldots \times \domA(\rel{R.A_n})$. \subsection{Provenance Graphs}
\label{sec:provenance-graphs}

Provenance graphs justify the existence (or absence) of a query result based on the success (or failure) to derive it using a query's rules. They also explain how the existence or absence of tuples in the database caused derivations to succeed or fail, respectively. Here, we present a constructive definition of provenance graphs that provide this type of justification. Nodes in these graphs  carry two types of labels: 1) a label that determines the node type (tuple, rule, or goal) and additional information, e.g., the arguments and rule identifier of a derivation; 2) the success/failure status of nodes. Note that the first type of labels uniquely identifies nodes.

\begin{Definition}[Provenance Graph]\label{def:prov=graph}
  Let $P$ be a Datalog program,
$I$ a database instance, $\domA$ a domain assignment for $I$, and $\stringDom$ the domain of strings. The \emph{provenance graph} $\provGraph(P,I)$ is a graph $(V,E,\nodeLabel,\successLabel)$ with nodes $V$, edges $E$, and node labelling functions $\nodeLabel: V \to \stringDom$ and  $\successLabel: V \to \{\greenT, \redF\}$ ($\greenT$ for true/success and $\redF$ for false/failure).
We require that $\forall v,v' \in V: \nodeLabel(v) = \nodeLabel(v') \rightarrow v = v'$.
The graph $\provGraph(P,I)$ is defined as follows: \begin{compactitem}  \item \textbf{Tuple nodes:} For each n-ary EDB or IDB predicate $R$ and tuple $(c_1, \ldots,c_n)$ of constants from the associated domains ($c_i \in \domA(\rel{R.A_i})$), there exists a node $v$ labeled $R(c_1, \ldots,c_n)$. $\successLabel(v) = \greenT$ iff $R(c_1, \ldots,c_n) \\\in P(I)$      and $\successLabel(v) = \redF$ otherwise.
  \item \textbf{Rule nodes:} For every successful domain grounded derivation $r_i(c_1, \ldots, c_n)$, there exists a node $v$ in $V$ labeled $r_i(c_1, \ldots,c_n)$ with $\successLabel(v) = \greenT$. For every failed domain grounded derivation $r_i(c_1, \ldots, c_n)$ where $head\\(r_i$ $(c_1,\ldots,c_n)) \not\in P(I)$, there exists a node $v$ as above but with $\successLabel(v) = \redF$. In both cases, $v$ is connected to the tuple node $\headOf{r_i(c_1, \ldots, c_n)}$.
  \item \textbf{Goal nodes:} Let $v$ be the node corresponding to a derivation $r_i(c_1, \ldots, c_n)$ with $m$ goals. If $\successLabel(v) = \greenT$, then for all $j \in \{1,\ldots,m\}$, $v$ is connected to a goal node $v_j$ labeled $g_i^j$  with $\successLabel(v_j) = \greenT$. 
If $\successLabel(v) = \redF$, then for all $j \in \{1,\ldots,m\}$, $v$ is connected to a goal node $v_j$ with $\successLabel(v_j) = \redF$
if the $j^{th}$ goal is failed in $r_i(c_1, \ldots, c_n)$. Each goal is connected to the corresponding tuple node.   \end{compactitem}
\end{Definition}
Our provenance graphs model query evaluation by construction. 
A tuple node $R(t)$ is successful in $\provGraph(P,I)$ iff $R(t) \in P(I)$. This is guaranteed, because each tuple built from values of the associated domain exists as a node $v$ in the graph and its label $\successLabel(v)$ is decided based on $R(t) \in P(I)$.  Furthermore, there exists a successful rule node $r(\vec{c}) \in \provGraph(P,I)$ iff the derivation $r(\vec{c})$ succeeds for $I$. Likewise, a failed rule node $r(\vec{c})$ exists iff the derivation $r(\vec{c})$ is failed over $I$ and $head(r(\vec{c})) \not\in P(I)$. 
Fig.\,\ref{fig:exam-pg-why-NY-seattle} and \ref{fig:exam-pg-whynot-chicago-seattle} show subgraphs of $\provGraph(P,I)$ for the query from Fig.\,\ref{fig:running-example-db}. Since $\rel{Q}(n,s) \in P(I)$ (Fig.\,\ref{fig:exam-pg-why-NY-seattle}), this tuple node is connected to all successful 
 derivations with $\rel{Q}(n,s)$ in the head which in turn are connected to goal nodes for each of the three goals of rule $r_1$. 
$\rel{Q}(s,n) \notin P(I)$ (Fig.\,\ref{fig:exam-pg-whynot-chicago-seattle}) and, thus, its node is connected to all failed derivations with $\rel{Q}(s,n)$ as a head. Here, we have assumed that all cities can be considered as start and end points of missing train connections, i.e., both $\domA(\rel{T.fromCity})$ and $\domA(\rel{T.toCity})$ are defined as $\adom{\rel{T.fromCity}} \cup \adom{\rel{T.toCity}}$. 
Thus, we have considered derivations $r_1(s,n,Z)$ for $Z \in \{c,n,s,w\}$.

\BGDel{\mypartitle{Relationship to Semiring Provenance}
Similar types of provenance graphs have been successfully used as graph representations of  provenance polynomials
(e.g., see~\cite{DG15c,grigoris-tj-simgodrec-2012}).
In fact, for positive queries we can apply graph transformations that translate a provenance graph into a simpler graph that encodes provenance polynomials $\ProvPoly$ or other semirings 
which correspond to less powerful provenance models (see Sec.\,\ref{sec:simp-hierarchy}). 
}

\subsection{Provenance Questions and Explanations}
\label{sec:question-expl}

Recall that the problem we address in this work is how to explain the existence or absence
of (sets of) tuples using provenance graphs.
 Such a set of tuples specified as a pattern and paired with a qualifier ($\whyq$/$\whynotq$) is called a \textit{provenance question} (\textit{PQ}) in this paper. The two questions presented in Example\,\ref{ex:example1} use constants only,
but we also support provenance questions with variables, 
e.g., for a question $\whynotq \rel{Q}(n,X)$ we return all explanations for missing
tuples where the first attribute is $n$, i.e., why it is not the case that a city $X$
can be reached from New York with one transfer, but not directly.
We say a tuple $t'$ of constants  \emph{matches} a tuple $t$ of variables and constants written as  $t' \matches t$
if we can unify $t'$ with $t$, i.e., we can equate $t'$ with $t$ by applying a valuation that substitutes variables
in $t$ with constants from $t'$.

\begin{Definition}[Provenance Question]\label{def:question}
Let $P$ be a  query, 
$I$ an instance, $Q$ an IDB predicate, and $\domA$ a domain assignment for $I$.
A \emph{provenance question} $\aProvQ$ is of the form $\whyq Q(t)$ or $\whynotq Q(t)$ where $t = (v_1, \ldots, v_n)$ consists of variables and domain constants ($\domA(\rel{Q.A})$ for each attribute $\rel{Q.A}$). 
We define:
\end{Definition}
\vspace{-5mm}
{\small
\begin{align*}
\qPattern(\aProvQ) &= Q(t)\\
\qMatch(\whyq Q(t)) &= \{Q(t') | t' \in P(I) \wedge t' \matches t\}\\
\qMatch(\whynotq Q(t)) &= \{Q(t') | t' \notin P(I) \wedge t' \matches t \wedge t' \in \tupDom(Q) \}
\end{align*}
}
\vspace{-4mm}

In Example\,\ref{ex:example-why} and \ref{ex:example-whynot}, we have presented subgraphs of $\provGraph(P,I)$ 
as \textit{explanations} for \provQ s,
implicitly claiming that these subgraphs are sufficient for explaining these \provQ{}s. 
We now formally define this type of explanation. 

\begin{Definition}[Explanation]\label{def:explanation}
The \emph{explanation} $\explainq(P,$ $\aProvQ,I,\domA)$ for a \provQ~$\aProvQ$ according to $P$, $I$, and $\domA$ is the subgraph of $\provGraph(P,I)$ containing
only nodes that are connected to at least one node in $\qMatch(\aProvQ)$. 
\end{Definition}

In the following we will drop $\domA$ from $\explainq(P,$ $\aProvQ,I,\domA)$ if it is clear from the context or irrelevant for the discussion.
Given this definition of explanation, note that 1) all nodes connected to a tuple node matching the \provQ{} are relevant for computing this tuple and 2) only nodes connected to this node are relevant for the outcome. 
Consider $Q(t') \in \qMatch(\aProvQ)$ for a question $\whyq Q(t)$. Since $Q(t') \in P(I)$,  all successful derivations with head $Q(t')$ justify the existence of $t'$ and these are precisely the rule nodes connected to $Q(t')$ in $\provGraph(P,I)$. 
For $\whynotq Q(t)$ and matching $Q(t')$ we have $Q(t') \not\in P(I)$ which is the case if all derivations with head $Q(t')$ have failed. In this case, all such derivations are connected to $Q(t')$ in the provenance graph. Each such derivation is connected to all of its failed goals which are responsible for the failure. 
Now, if a rule body references IDB predicates, then the same argument can be applied to reason that all rules directly connected to these tuples explain why they (do not) exist. Thus, by induction, the explanation contains all relevant tuple and rule nodes that explain the \provQ{}.

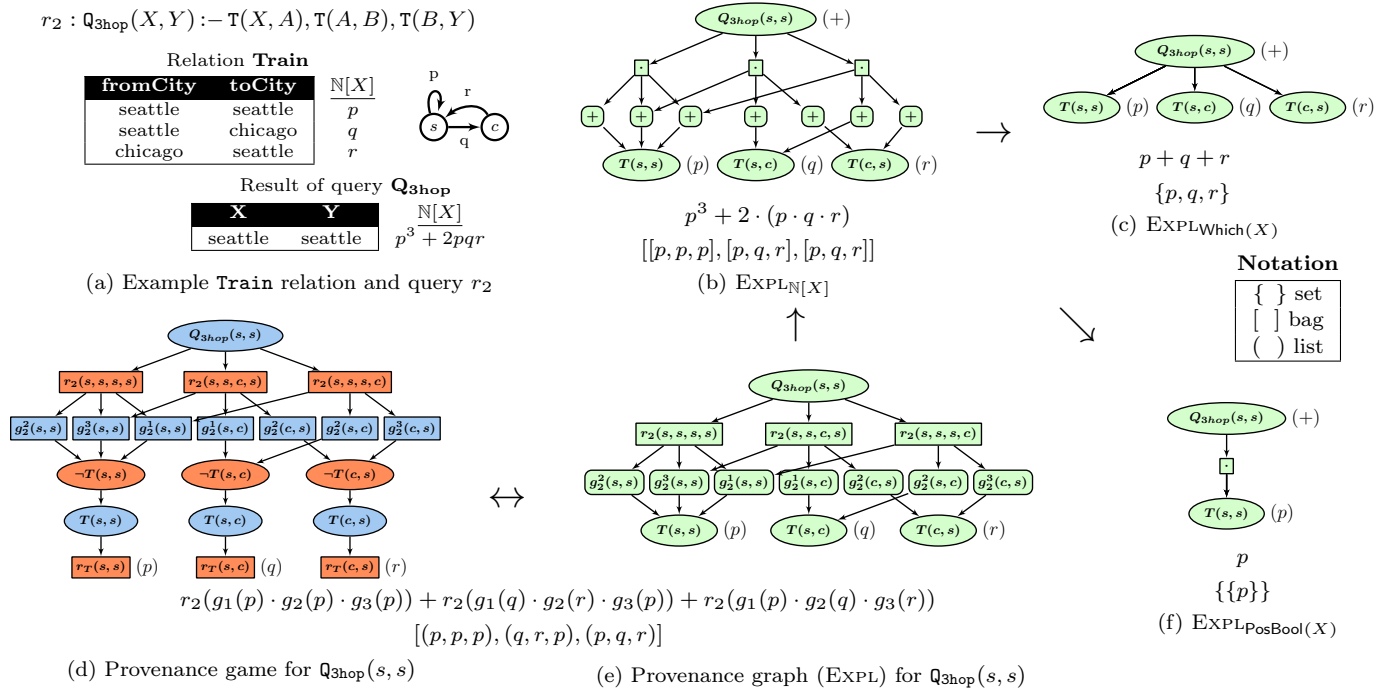
\begin{figure*}[t]
  \centering$\,$\\[-5mm]
    \begin{minipage}{0.35\linewidth}
    \centering     \hspace{-5mm}
    \begin{minipage}{0.75\linewidth}
      \centering 
      \begin{align*}
	r_2: \rel{Q_{3hop}}(X,Y) \dlImp \rel{T}(X,A), \rel{T}(A,B), \rel{T}(B,Y)\\
      \end{align*}
    \end{minipage}\\[-2mm]
     \begin{minipage}{0.75\linewidth}
      \centering
        \scriptsize \centering $\,$\\[-1mm]
        \begin{tabular}{|cc|c}
          \multicolumn{3}{c}{Relation \textbf{Train}}  \\[0.5mm]\cline{1-2}
	  \thead {fromCity} & \thead {toCity} & \underline{$\ProvPoly$} \\
          seattle & seattle & $p$\\
	  seattle & chicago & $q$\\
          chicago & seattle & $r$\\  
          \cline{1-2}
        \end{tabular}
    \end{minipage}    \begin{minipage}{0.14\linewidth}
       \resizebox{1.15\columnwidth}{!}{\begin{minipage}{1.4\linewidth}
\begin{tikzpicture}[>=latex',
line join=bevel,
line width=0.4mm,
every node/.style={ellipse},
minimum height=4mm]

  \definecolor{fillcolor}{rgb}{0.0,0.0,0.0};
  \node (s) at (30bp,30bp) [draw=black,circle] {$s$};
  \node (c) at (60bp,30bp) [draw=black,circle] {$c$};

   \path[]
   		(c) edge [in=40,out=120,->] node [above] {r} (s)
   		(s) edge [->] node [below] {q} (c)
		(s) edge [loop above] node [above] {p} (s);

\end{tikzpicture}
\end{minipage}

 }  
    \end{minipage}\hspace{-15mm}\\
    \begin{minipage}{0.35\linewidth}
	\scriptsize \centering$\,$\\[1mm]
         \begin{tabular}{|cc|c}
           \multicolumn{3}{c}{Result of query $\bf{Q_{3hop}}$}  \\[0.7mm]\cline{1-2}
 	  \thead {X} & \thead {Y} & \underline{$\ProvPoly$} \\
	   seattle & seattle & $p^3 + 2pqr $ \\
           \cline{1-2}
         \end{tabular}
     \end{minipage}$\,$\\[1mm]
    \subfloat[Example \rel{Train} relation and query $r_2$]{\hspace{65mm}\label{fig:partial-exam}}
    \end{minipage}
  \begin{minipage}{0.38\linewidth}
  	\centering
 	$\,$\\[3mm]
	\resizebox{0.73\columnwidth}{!}{\begin{tikzpicture}[>=latex',line join=bevel,line width=0.3mm]

  \definecolor{fillcolor}{rgb}{0.83,1.0,0.8};
  \node (REL_Q_WON_j_j_) at (339bp,159bp) [draw=black,fill=fillcolor,ellipse,label=right:{\large $(+)$}] {$\boldsymbol{Q_{3hop}(s,s)}$};

  \node (RULE_0_LOST_j_j_j_j_) at (269bp,129bp) [draw=black,fill=fillcolor,rectangle] {$\boldsymbol{\cdot}$};
  \node (RULE_0_LOST_j_j_a_j_) at (339bp,129bp) [draw=black,fill=fillcolor,rectangle] {$\boldsymbol{\cdot}$};
  \node (RULE_0_LOST_j_j_j_a_) at (404bp,129bp) [draw=black,fill=fillcolor,rectangle] {$\boldsymbol{\cdot}$};

  \node (GOaL_0_1_WON_j_j_) at (239bp,99bp) [draw=black,fill=fillcolor,rounded corners=.15cm,inner sep=3pt] {$\boldsymbol{+}$};
  \node (GOaL_0_2_WON_j_j_) at (269bp,99bp) [draw=black,fill=fillcolor,rounded corners=.15cm,inner sep=3pt] {$\boldsymbol{+}$};
  \node (GOaL_0_0_WON_j_j_) at (299bp,99bp) [draw=black,fill=fillcolor,rounded corners=.15cm,inner sep=3pt] {$\boldsymbol{+}$};
  \node (GOaL_0_0_WON_j_a_) at (339bp,99bp) [draw=black,fill=fillcolor,rounded corners=.15cm,inner sep=3pt] {$\boldsymbol{+}$};
  \node (GOaL_0_1_WON_a_j_) at (374bp,99bp) [draw=black,fill=fillcolor,rounded corners=.15cm,inner sep=3pt] {$\boldsymbol{+}$};
  \node (GOaL_0_1_WON_j_a_) at (404bp,99bp) [draw=black,fill=fillcolor,rounded corners=.15cm,inner sep=3pt] {$\boldsymbol{+}$};
  \node (GOaL_0_2_WON_a_j_) at (434bp,99bp) [draw=black,fill=fillcolor,rounded corners=.15cm,inner sep=3pt] {$\boldsymbol{+}$};

  \node (EDB_F_LOST_j_j_) at (269bp,69bp) [draw=black,fill=fillcolor,ellipse,label=right:{\large $(p)$}] {$\boldsymbol{T(s,s)}$};
  \node (EDB_F_LOST_j_a_) at (339bp,69bp) [draw=black,fill=fillcolor,ellipse,label=right:{\large $(q)$}] {$\boldsymbol{T(s,c)}$};
  \node (EDB_F_LOST_a_j_) at (409bp,69bp) [draw=black,fill=fillcolor,ellipse,label=right:{\large $(r)$}] {$\boldsymbol{T(c,s)}$};

  \draw [->] (GOaL_0_2_WON_a_j_) -> (EDB_F_LOST_a_j_);
  \draw [->] (RULE_0_LOST_j_j_j_a_) -> (GOaL_0_1_WON_j_a_);
  \draw [->] (GOaL_0_2_WON_j_j_) -> (EDB_F_LOST_j_j_);
  \draw [->] (GOaL_0_1_WON_j_a_) -> (EDB_F_LOST_j_a_);
  \draw [->] (REL_Q_WON_j_j_) -> (RULE_0_LOST_j_j_j_j_);
  \draw [->] (REL_Q_WON_j_j_) -> (RULE_0_LOST_j_j_j_a_);
  \draw [->] (RULE_0_LOST_j_j_j_a_) -> (GOaL_0_2_WON_a_j_);
  \draw [->] (GOaL_0_1_WON_j_j_) -> (EDB_F_LOST_j_j_);
  \draw [->] (RULE_0_LOST_j_j_a_j_) -> (GOaL_0_2_WON_j_j_);
  \draw [->] (GOaL_0_1_WON_a_j_) -> (EDB_F_LOST_a_j_);
  \draw [->] (RULE_0_LOST_j_j_j_a_) -> (GOaL_0_0_WON_j_j_);
  \draw [->] (REL_Q_WON_j_j_) -> (RULE_0_LOST_j_j_a_j_);
  \draw [->] (GOaL_0_0_WON_j_j_) -> (EDB_F_LOST_j_j_);
  \draw [->] (RULE_0_LOST_j_j_j_j_) -> (GOaL_0_1_WON_j_j_);
  \draw [->] (GOaL_0_0_WON_j_a_) -> (EDB_F_LOST_j_a_);
  \draw [->] (RULE_0_LOST_j_j_a_j_) -> (GOaL_0_0_WON_j_a_);
  \draw [->] (RULE_0_LOST_j_j_j_j_) -> (GOaL_0_2_WON_j_j_);
  \draw [->] (RULE_0_LOST_j_j_a_j_) -> (GOaL_0_1_WON_a_j_);
  \draw [->] (RULE_0_LOST_j_j_j_j_) -> (GOaL_0_0_WON_j_j_);
\end{tikzpicture}

 }
	\begin{minipage}{.43\linewidth}
	$\,$\\[-5mm]
	 \begin{align*}
          {p}^3 &+ 2 \cdot (p \cdot q \cdot r) \\
          [ [p,p,p&], [p,q,r], [p,q,r] ]
         \end{align*}
	\end{minipage}
    	$\,$\\[-1mm]
	\subfloat[$\explainq_{\ProvPoly}$]{\hspace{50mm}\label{fig:poly-model}}
  \end{minipage}
  {\Large \hspace{-8mm} $\rightarrow$}\hspace{3mm}
  \begin{minipage}{0.2\linewidth}
  	\centering 	\resizebox{1.29\columnwidth}{!}{\begin{tikzpicture}[>=latex',line join=bevel,line width=0.3mm]

  \definecolor{fillcolor}{rgb}{0.83,1.0,0.8};
  \node (REL_Q_WON_J_J_) at (374bp,44bp) [draw=black,fill=fillcolor,ellipse,label=right:{\large $(+)$}] {$\boldsymbol{Q_{3hop}(s,s)}$};

  \node (EDB_SCN_LOST_J_J_) at (304bp,9bp) [draw=black,fill=fillcolor,ellipse,label=right:{\large $(p)$}] {$\boldsymbol{T(s,s)}$};
  \node (EDB_SCN_LOST_J_A_) at (374bp,9bp) [draw=black,fill=fillcolor,ellipse,label=right:{\large $(q)$}] {$\boldsymbol{T(s,c)}$};
  \node (EDB_SCN_LOST_A_J_) at (444bp,9bp) [draw=black,fill=fillcolor,ellipse,label=right:{\large $(r)$}] {$\boldsymbol{T(c,s)}$};

  \draw [->] (REL_Q_WON_J_J_) -> (EDB_SCN_LOST_J_A_);
  \draw [->] (REL_Q_WON_J_J_) -> (EDB_SCN_LOST_A_J_);  
  \draw [->] (REL_Q_WON_J_J_) -> (EDB_SCN_LOST_J_J_);
 
  \draw [->] (REL_Q_WON_J_J_) -> (EDB_SCN_LOST_J_J_);
  \draw [->] (REL_Q_WON_J_J_) -> (EDB_SCN_LOST_J_J_);  
  \draw [->] (REL_Q_WON_J_J_) -> (EDB_SCN_LOST_J_A_);
 
  \draw [->] (REL_Q_WON_J_J_) -> (EDB_SCN_LOST_J_J_);
  \draw [->] (REL_Q_WON_J_J_) -> (EDB_SCN_LOST_A_J_);  
  \draw [->] (REL_Q_WON_J_J_) -> (EDB_SCN_LOST_J_J_);

\end{tikzpicture}

 }\\
        \hspace{3mm}
	\begin{minipage}{.1\linewidth}
	$\,$\\[-5mm] \centering
	  \begin{align*}
        p + q + r \\
          \{ p,q,r \}
         \end{align*}
	\end{minipage}
	$\,$\\[-1mm]
    \subfloat[$\explainq_{\WhichProv}$]{\hspace{40mm}\label{fig:lin-model}}
  \end{minipage}$\,$\\[1mm]
{\Large \hspace{70mm}$\uparrow$ \hspace{30mm} $\searrow$}$\,$\\[-1mm]
\vspace{-3mm}
\begin{minipage}{0.35\linewidth}
	\centering \hspace{-10mm}
	\resizebox{0.95\columnwidth}{!}{    \vspace{-20mm}\begin{tikzpicture}[>=latex',line join=bevel,line width=0.3mm]

  \definecolor{fillcolor}{rgb}{0.63,0.79,0.95};
  \node (REL_Q_WON_j_j_) at (374bp,159bp) [draw=black,fill=fillcolor,ellipse] {$\boldsymbol{Q_{3hop}(s,s)}$};

  \definecolor{fillcolor}{rgb}{1.0,0.55,0.35};
  \node (RULE_0_LOST_j_j_j_j_) at (294bp,129bp) [draw=black,fill=fillcolor,rectangle] {$\boldsymbol{r_2(s,s,s,s)}$};
  \definecolor{fillcolor}{rgb}{1.0,0.55,0.35};
  \node (RULE_0_LOST_j_j_a_j_) at (374bp,129bp) [draw=black,fill=fillcolor,rectangle] {$\boldsymbol{r_2(s,s,c,s)}$};
  \definecolor{fillcolor}{rgb}{1.0,0.55,0.35};
  \node (RULE_0_LOST_j_j_j_a_) at (454bp,129bp) [draw=black,fill=fillcolor,rectangle] {$\boldsymbol{r_2(s,s,s,c)}$};

  \definecolor{fillcolor}{rgb}{0.63,0.79,0.95};
  \node (GOaL_0_1_WON_j_j_) at (254bp,99bp) [draw=black,fill=fillcolor,rectangle] {$\boldsymbol{g_{2}^{2}(s,s)}$};
\definecolor{fillcolor}{rgb}{0.63,0.79,0.95};
  \node (GOaL_0_2_WON_j_j_) at (294bp,99bp) [draw=black,fill=fillcolor,rectangle] {$\boldsymbol{g_{2}^{3}(s,s)}$};
  \definecolor{fillcolor}{rgb}{0.63,0.79,0.95};
  \node (GOaL_0_0_WON_j_j_) at (334bp,99bp) [draw=black,fill=fillcolor,rectangle] {$\boldsymbol{g_{2}^{1}(s,s)}$};
  \definecolor{fillcolor}{rgb}{0.63,0.79,0.95};
  \node (GOaL_0_0_WON_j_a_) at (374bp,99bp) [draw=black,fill=fillcolor,rectangle] {$\boldsymbol{g_{2}^{1}(s,c)}$};
  \definecolor{fillcolor}{rgb}{0.63,0.79,0.95};
  \node (GOaL_0_1_WON_a_j_) at (414bp,99bp) [draw=black,fill=fillcolor,rectangle] {$\boldsymbol{g_{2}^{2}(c,s)}$};
  \definecolor{fillcolor}{rgb}{0.63,0.79,0.95};
  \node (GOaL_0_1_WON_j_a_) at (454bp,99bp) [draw=black,fill=fillcolor,rectangle] {$\boldsymbol{g_{2}^{2}(s,c)}$};
  \definecolor{fillcolor}{rgb}{0.63,0.79,0.95};
  \node (GOaL_0_2_WON_a_j_) at (494bp,99bp) [draw=black,fill=fillcolor,rectangle] {$\boldsymbol{g_{2}^{3}(c,s)}$};

  \definecolor{fillcolor}{rgb}{1.0,0.55,0.35};
  \node (notREL_F_LOST_j_j_) at (294bp,69bp) [draw=black,fill=fillcolor,ellipse] {$\boldsymbol{\neg T(s,s)}$};
  \definecolor{fillcolor}{rgb}{1.0,0.55,0.35};
  \node (notREL_F_LOST_j_a_) at (374bp,69bp) [draw=black,fill=fillcolor,ellipse] {$\boldsymbol{\neg T(s,c)}$};
  \definecolor{fillcolor}{rgb}{1.0,0.55,0.35};
  \node (notREL_F_LOST_a_j_) at (454bp,69bp) [draw=black,fill=fillcolor,ellipse] {$\boldsymbol{\neg T(c,s)}$};

  \definecolor{fillcolor}{rgb}{0.63,0.79,0.95};
  \node (REL_F_WON_j_j_) at (294bp,39bp) [draw=black,fill=fillcolor,ellipse] {$\boldsymbol{T(s,s)}$};
  \definecolor{fillcolor}{rgb}{0.63,0.79,0.95};
  \node (REL_F_WON_j_a_) at (374bp,39bp) [draw=black,fill=fillcolor,ellipse] {$\boldsymbol{T(s,c)}$};
  \definecolor{fillcolor}{rgb}{0.63,0.79,0.95};
  \node (REL_F_WON_a_j_) at (454bp,39bp) [draw=black,fill=fillcolor,ellipse] {$\boldsymbol{T(c,s)}$};

  \definecolor{fillcolor}{rgb}{1.0,0.55,0.35};
  \node (EDB_F_LOST_j_j_) at (294bp,9bp) [draw=black,fill=fillcolor,rectangle,label=right:{\large $(p)$}] {$\boldsymbol{r_{T}(s,s)}$};
  \definecolor{fillcolor}{rgb}{1.0,0.55,0.35};
  \node (EDB_F_LOST_j_a_) at (374bp,9bp) [draw=black,fill=fillcolor,rectangle,label=right:{\large $(q)$}] {$\boldsymbol{r_{T}(s,c)}$};
  \definecolor{fillcolor}{rgb}{1.0,0.55,0.35};
  \node (EDB_F_LOST_a_j_) at (454bp,9bp) [draw=black,fill=fillcolor,rectangle,label=right:{\large $(r)$}] {$\boldsymbol{r_{T}(c,s)}$};

  \draw [->] (GOaL_0_2_WON_a_j_) -> (notREL_F_LOST_a_j_);
  \draw [->] (RULE_0_LOST_j_j_j_a_) -> (GOaL_0_1_WON_j_a_);
  \draw [->] (notREL_F_LOST_j_j_) -> (REL_F_WON_j_j_);
  \draw [->] (notREL_F_LOST_j_a_) -> (REL_F_WON_j_a_);
  \draw [->] (GOaL_0_2_WON_j_j_) -> (notREL_F_LOST_j_j_);
  \draw [->] (REL_F_WON_a_j_) -> (EDB_F_LOST_a_j_);
  \draw [->] (GOaL_0_1_WON_j_a_) -> (notREL_F_LOST_j_a_);
  \draw [->] (REL_Q_WON_j_j_) -> (RULE_0_LOST_j_j_j_j_);
  \draw [->] (REL_F_WON_j_a_) -> (EDB_F_LOST_j_a_);
  \draw [->] (REL_Q_WON_j_j_) -> (RULE_0_LOST_j_j_j_a_);
  \draw [->] (RULE_0_LOST_j_j_j_a_) -> (GOaL_0_2_WON_a_j_);
  \draw [->] (GOaL_0_1_WON_j_j_) -> (notREL_F_LOST_j_j_);
  \draw [->] (REL_F_WON_j_j_) -> (EDB_F_LOST_j_j_);
  \draw [->] (RULE_0_LOST_j_j_a_j_) -> (GOaL_0_2_WON_j_j_);
  \draw [->] (GOaL_0_1_WON_a_j_) -> (notREL_F_LOST_a_j_);
  \draw [->] (RULE_0_LOST_j_j_j_a_) -> (GOaL_0_0_WON_j_j_);
  \draw [->] (REL_Q_WON_j_j_) -> (RULE_0_LOST_j_j_a_j_);
  \draw [->] (GOaL_0_0_WON_j_j_) -> (notREL_F_LOST_j_j_);
  \draw [->] (RULE_0_LOST_j_j_j_j_) -> (GOaL_0_1_WON_j_j_);
  \draw [->] (GOaL_0_0_WON_j_a_) -> (notREL_F_LOST_j_a_);
  \draw [->] (RULE_0_LOST_j_j_a_j_) -> (GOaL_0_0_WON_j_a_);
  \draw [->] (RULE_0_LOST_j_j_j_j_) -> (GOaL_0_2_WON_j_j_);
  \draw [->] (RULE_0_LOST_j_j_a_j_) -> (GOaL_0_1_WON_a_j_);
  \draw [->] (notREL_F_LOST_a_j_) -> (REL_F_WON_a_j_);
  \draw [->] (RULE_0_LOST_j_j_j_j_) -> (GOaL_0_0_WON_j_j_);
\end{tikzpicture}

 }
    \hspace{2cm} \begin{minipage}{0.73\linewidth}
      \centering
	$\,$\\[-7mm]
         \begin{align*}\hspace{20mm}
          r_2(g_1(p) \cdot g_2(p) \cdot g_3(p)) &+ r_2(g_1(q) \cdot g_2(r) \cdot g_3(p)) + r_2(g_1(p) \cdot g_2(q) \cdot g_3(r)) \\
          &[ (p,p,p), (q,r,p), (p,q,r) ]
         \end{align*}
        \end{minipage}
    	$\,$\\[-1mm]
    \begin{minipage}{0.9\linewidth}
    \subfloat[Provenance game for $\rel{Q_{3hop}}(s,s)$]{\hspace{50mm}\label{fig:pgame-3hop}}       
    \end{minipage}
  \end{minipage}  {\Large \hspace{-3mm} $\leftrightarrow$}\hspace{6mm}
  \begin{minipage}{0.35\linewidth}
  	\centering
	$\,$\\[7mm]
	\resizebox{0.98\columnwidth}{!}{\begin{tikzpicture}[>=latex',line join=bevel,line width=0.3mm]

  \definecolor{fillcolor}{rgb}{0.83,1.0,0.8};
  \node (REL_Q_WON_j_j_) at (374bp,159bp) [draw=black,fill=fillcolor,ellipse] {$\boldsymbol{Q_{3hop}(s,s)}$};

  \node (RULE_0_LOST_j_j_j_j_) at (294bp,129bp) [draw=black,fill=fillcolor,rectangle] {$\boldsymbol{r_2(s,s,s,s)}$};
  \node (RULE_0_LOST_j_j_a_j_) at (374bp,129bp) [draw=black,fill=fillcolor,rectangle] {$\boldsymbol{r_2(s,s,c,s)}$};
  \node (RULE_0_LOST_j_j_j_a_) at (454bp,129bp) [draw=black,fill=fillcolor,rectangle] {$\boldsymbol{r_2(s,s,s,c)}$};

  \node (GOaL_0_1_WON_j_j_) at (254bp,99bp) [draw=black,fill=fillcolor,rounded corners=.15cm,inner sep=3pt] {$\boldsymbol{g^{2}_{2}(s,s)}$};
  \node (GOaL_0_2_WON_j_j_) at (294bp,99bp) [draw=black,fill=fillcolor,rounded corners=.15cm,inner sep=3pt] {$\boldsymbol{g^{3}_{2}(s,s)}$};
  \node (GOaL_0_0_WON_j_j_) at (334bp,99bp) [draw=black,fill=fillcolor,rounded corners=.15cm,inner sep=3pt] {$\boldsymbol{g^{1}_{2}(s,s)}$};
  \node (GOaL_0_0_WON_j_a_) at (374bp,99bp) [draw=black,fill=fillcolor,rounded corners=.15cm,inner sep=3pt] {$\boldsymbol{g^{1}_{2}(s,c)}$};
  \node (GOaL_0_1_WON_a_j_) at (414bp,99bp) [draw=black,fill=fillcolor,rounded corners=.15cm,inner sep=3pt] {$\boldsymbol{g^{2}_{2}(c,s)}$};
  \node (GOaL_0_1_WON_j_a_) at (454bp,99bp) [draw=black,fill=fillcolor,rounded corners=.15cm,inner sep=3pt] {$\boldsymbol{g^{2}_{2}(s,c)}$};
  \node (GOaL_0_2_WON_a_j_) at (494bp,99bp) [draw=black,fill=fillcolor,rounded corners=.15cm,inner sep=3pt] {$\boldsymbol{g^{3}_{2}(c,s)}$};

  \node (EDB_F_LOST_j_j_) at (294bp,69bp) [draw=black,fill=fillcolor,ellipse,label=right:{\large $(p)$}] {$\boldsymbol{T(s,s)}$};
  \node (EDB_F_LOST_j_a_) at (374bp,69bp) [draw=black,fill=fillcolor,ellipse,label=right:{\large $(q)$}] {$\boldsymbol{T(s,c)}$};
  \node (EDB_F_LOST_a_j_) at (454bp,69bp) [draw=black,fill=fillcolor,ellipse,label=right:{\large $(r)$}] {$\boldsymbol{T(c,s)}$};

  \draw [->] (GOaL_0_2_WON_a_j_) -> (EDB_F_LOST_a_j_);
  \draw [->] (RULE_0_LOST_j_j_j_a_) -> (GOaL_0_1_WON_j_a_);
  \draw [->] (GOaL_0_2_WON_j_j_) -> (EDB_F_LOST_j_j_);
  \draw [->] (GOaL_0_1_WON_j_a_) -> (EDB_F_LOST_j_a_);
  \draw [->] (REL_Q_WON_j_j_) -> (RULE_0_LOST_j_j_j_j_);
  \draw [->] (REL_Q_WON_j_j_) -> (RULE_0_LOST_j_j_j_a_);
  \draw [->] (RULE_0_LOST_j_j_j_a_) -> (GOaL_0_2_WON_a_j_);
  \draw [->] (GOaL_0_1_WON_j_j_) -> (EDB_F_LOST_j_j_);
  \draw [->] (RULE_0_LOST_j_j_a_j_) -> (GOaL_0_2_WON_j_j_);
  \draw [->] (GOaL_0_1_WON_a_j_) -> (EDB_F_LOST_a_j_);
  \draw [->] (RULE_0_LOST_j_j_j_a_) -> (GOaL_0_0_WON_j_j_);
  \draw [->] (REL_Q_WON_j_j_) -> (RULE_0_LOST_j_j_a_j_);
  \draw [->] (GOaL_0_0_WON_j_j_) -> (EDB_F_LOST_j_j_);
  \draw [->] (RULE_0_LOST_j_j_j_j_) -> (GOaL_0_1_WON_j_j_);
  \draw [->] (GOaL_0_0_WON_j_a_) -> (EDB_F_LOST_j_a_);
  \draw [->] (RULE_0_LOST_j_j_a_j_) -> (GOaL_0_0_WON_j_a_);
  \draw [->] (RULE_0_LOST_j_j_j_j_) -> (GOaL_0_2_WON_j_j_);
  \draw [->] (RULE_0_LOST_j_j_a_j_) -> (GOaL_0_1_WON_a_j_);
  \draw [->] (RULE_0_LOST_j_j_j_j_) -> (GOaL_0_0_WON_j_j_);
\end{tikzpicture}

 }
        \begin{minipage}{0.43\linewidth}
	 $\,$\\[-5mm]
	\end{minipage}
    	$\,$\\[7mm]
    \subfloat[Provenance graph ($\explainq$) for $\rel{Q_{3hop}}(s,s)$]{\hspace{58mm}\label{fig:pg-3hop}}
  \end{minipage}  \begin{minipage}{0.2\linewidth}
    \centering $\,$\\[-15mm] 
    \begin{minipage}{1.8\linewidth}
      \centering \small
      \textbf{Notation}\\[1mm]  
	  \begin{tabular}{|c|} \hline
          $\{ \hspace{1.5mm} \}$ set \\ 
          $[ \hspace{2mm} ]$ bag \\
          $( \hspace{2mm} )$ list \\
          \hline
        \end{tabular}
     \end{minipage}$\,$\\[5mm] 
     \begin{minipage}{1.5\linewidth}
       \centering 
	\resizebox{0.41\columnwidth}{!}{\begin{tikzpicture}[>=latex',line join=bevel,line width=0.3mm]

  \definecolor{fillcolor}{rgb}{0.83,1.0,0.8};
  \node (REL_Q_WON_J_J_) at (344bp,69bp) [draw=black,fill=fillcolor,ellipse,label=right:{\large $(+)$}] {$\boldsymbol{Q_{3hop}(s,s)}$};

  \node (RULE_0_LOST_J_J_J_J_) at (344bp,39bp) [draw=black,fill=fillcolor,rectangle] {$\boldsymbol{\cdot}$};

  \node (EDB_SCN_LOST_J_J_) at (344bp,9bp) [draw=black,fill=fillcolor,ellipse,label=right:{\large $(p)$}] {$\boldsymbol{T(s,s)}$};

  \draw [->] (REL_Q_WON_J_J_) -> (RULE_0_LOST_J_J_J_J_);

  \draw [->] (RULE_0_LOST_J_J_J_J_) -> (EDB_SCN_LOST_J_J_);  
  \draw [->] (RULE_0_LOST_J_J_J_J_) -> (EDB_SCN_LOST_J_J_);
  \draw [->] (RULE_0_LOST_J_J_J_J_) -> (EDB_SCN_LOST_J_J_);

\end{tikzpicture}

 }\\
	\begin{minipage}{.12\linewidth}
	$\,$\\[-6mm]
         \begin{align*}
           &p\\
           \{\{ &p \}\}
         \end{align*}
	\end{minipage}
	$\,$\\[-1mm]
    \subfloat[$\explainq_{\PosBool}$]{\hspace{50mm}\label{fig:posb-model}}
    \end{minipage}
  \end{minipage}
  $\,$\\[-3mm]
  \caption{Transformations exemplified using the provenance graph for $\rel{Q_{3hop}}(s,s)$. For each graph, we show the structure of the provenance encoded by this graph and the corresponding semiring annotation where applicable.}
  \label{fig:mig-exam-why}
\end{figure*}

\section{Provenance Graphs and Provenance Games}
\label{sec:prov-game}
We now prove that provenance graphs according to Def.\,\ref{def:prov=graph} are equivalent to provenance games. Thus, our model inherits the semantic foundation of provenance games. Specifically, provenance games were shown to encode Datalog query evaluation. Furthermore, the interpretation of provenance game graphs as 2-player games provides a strong justification for why the nodes reachable from a tuple node justify the existence/absence of the tuple.
We show how to transform a provenance game $\gprov(P,\aProvQ,I)$ into an explanation $\explainq(P,\aProvQ,I)$ and vice versa to demonstrate that both are equivalent representations of provenance. We define a function $\gameToProv$ that maps provenance games to graphs and its inverse $\provToGame$. Before that, we first give an overview of provenance games (see~\cite{KL13} for more details).

\mypara{Provenance Games}
Similar to our provenance graphs, provenance games are graphs that record successful and failed rule derivations. Provenance games consist of four types of nodes (e.g. Fig.\,\ref{fig:pgame-3hop}): rule nodes (boxes labeled with a rule identifier and the constant arguments of a rule derivation), goal nodes (boxes labeled with a rule identifier and the goal's position in the rule's body), tuple nodes (ovals), and EDB fact nodes (boxes labeled with an EDB relation name and the constants of a tuple).
Every tuple node in a provenance game appears both positively and negatively, i.e., for every tuple node $R(t)$, there exists a tuple node $\neg R(t)$.
Given a program $P$ and database instance $I$, a provenance game is constructed by creating
a positive and negative tuple node
$\rel{R}(c_1 , \cdots , c_n)$ for each \textit{n}-ary predicate $\rel{R}$ and for all combinations of constants $c_i$ from the active domain \adom{I}.
Similarly, nodes are created for rule derivations, i.e., a rule where variables have been replaced with constants from \adom{I} and each goal in the body of a rule (similar to Def.\,\ref{def:prov=graph}).
In the game, a derivation of rule $r$ for a vector of constants $\vec{c}$ is labeled as $r(\vec{c})$, e.g.,
a derivation $\rel{Q_{3hop}}(s,s) \dlImp \rel{T}(s,c), \rel{T}(c,s), \rel{T}(s,s)$ of $r_2$ in Fig.\,\ref{fig:partial-exam} is represented as a rule node labeled with $r_2(s,s,c,s)$.
Finally, EDB fact nodes are added for each tuple in $I$, e.g., $r_T (s,s)$ for the tuple (\cnst{seattle}, \cnst{seattle}) in the \rel{Train} relation (Fig.\,\ref{fig:partial-exam}).
Tuple nodes are connected to the grounded rule nodes that derive them (have the tuple in their head), rule nodes to goal nodes for the grounded
goals in their body, and goal nodes to negated tuple nodes corresponding to the goal (positive goals) or positive tuple nodes (negated goals).
Such a game is interpreted as a 2-player game where the players argue for/against the existence of a tuple in the result of evaluating $P$ over $I$. The existence of strategies for a player in this game determines tuple existence and success of rule derivations. A \textit{solved game} is one where each node in the game graph is labeled as either won $\wonLabel$ (there exists a strategy for the player starting in this position) or lost $\lostLabel$ (no such strategy exists).
A tuple node $\rel{R}(t)$ is labeled as $\wonLabel$ iff the tuple $\rel{R}(t)$ exists. A corollary of this is that a rule is labelled $\lostLabel$ if the corresponding derivation is successful and $\wonLabel$ otherwise.\footnote{This follows from the semantics of the type of 2-player game used here. The details are beyond the scope of this paper.}
Given such a solved game (denoted as $\gprov(P,I)$), we can extract a subgraph rooted at an IDB tuple $\rel{Q}(t)$ as the provenance of $\rel{Q}(t)$. Similar to how we derive an explanation for a PQ with $\qPattern(\aProvQ) = Q(t)$ where $t$ may contain variables as the subgraph of the provenance graph $\provGraph(P,I)$ containing all IDB tuple nodes matching $t$ and nodes reachable from these nodes, we can derive the corresponding subgraph in the provenance game $\gprov(P,I)$ and denote it as $\gprov(P,\aProvQ,I)$ (we call such subgraphs \textit{game explanations}).

\mypartitle{Translating between Provenance Graphs and Provenance Games} The translation $\provToGame$ of a provenance graph into the corresponding game and the reverse transformation $\gameToProv$ are straightforward. Thus, we only sketch $\provToGame$ here. EDB tuple nodes are expanded to subgraphs $\neg \rel{R}(t) \to \rel{R}(t) \to r_{R}(t)$ for existing tuples and $\neg \rel{R}(t) \to \rel{R}(t)$ for missing tuples. IDB tuple nodes are always expanded to subgraphs of the later form. Rule and goal nodes and their inter-connections are preserved. Goal nodes are connected to negated tuple nodes (positive goals) and to positive tuple nodes (negated goals). For positive tuple and goal nodes, we translate $\greenT$ to $\wonLabel$ (won) and $\redF$ to $\lostLabel$ (lost). For negated tuple nodes and rule nodes, this mapping is reversed, i.e.,  $\greenT$ to $\lostLabel$ and $\redF$ to $\wonLabel$.

\iftechreport{
\mypartitle{\provToGame}
$\provToGame$ consists of the following steps executed in the order shown below:
\begin{itemize}
\item Replace each EDB tuple node $v$ with $\nodeLabel(v) = \rel{R}(t)$ and $\successLabel(v) = \greenT$ with a subgraph $v_1 \to v_2 \to v_3$ where $v_1 = \neg \rel{R}(t)$, $v_2= \rel{R}(t)$, and $v_3 = r_{R}(t)$. Label these nodes as follows: $\gameLabel(v_1) = \gameLabel(v_3) = \lostLabel$ and $\gameLabel(v_2) = \wonLabel$.
\item Replace each EDB tuple node $v$ with $\nodeLabel(v) = \rel{R}(t)$ and $\successLabel(v) = \redF$
  with a subgraph $v_1 \to v_2$ where $v_1 = \neg \rel{R}(t)$ and $v_2= \rel{R}(t)$ using labels $\gameLabel(v_1) = \wonLabel$ and $\gameLabel(v_2) = \lostLabel$.
\item Replace each IDB tuple node $v$ with $\nodeLabel(v) = \rel{R}(t)$ with a subgraph $v_1 \to v_2$ where $v_1 = \neg \rel{R}(t)$ and $v_2= \rel{R}(t)$. If $\successLabel(v) = \greenT$, then
$\gameLabel(v_1) = \lostLabel$ and $\gameLabel(v_2) = \wonLabel$. Otherwise, $\gameLabel(v_1) = \wonLabel$ and $\gameLabel(v_2) = \lostLabel$.
\item For each rule node $v$ set $\gameLabel(v) = \lostLabel$ if $\successLabel(v) = \greenT$ and $\gameLabel(v) = \wonLabel$ otherwise. For each goal node $v$ set $\gameLabel(v) = \wonLabel$ if $\successLabel(v) = \greenT$ and $\gameLabel(v) = \lostLabel$ otherwise.
\item Edges between rule and goal nodes are preserved unmodified. Edges from tuple nodes to rule nodes stem from the positive tuple node. Consider a goal node $v$ with $\successLabel(v) = \greenT$ that is connected to a tuple node $v'$. If $\successLabel(v) = \successLabel(v')$ (positive goals), then $v$ is connected to the negative tuple node derived from $v'$; if $\successLabel(v) \neq \successLabel(v')$ then $v$ is connected to the positive tuple node.
\end{itemize}

\mypartitle{\gameToProv} The reverse translation is defined as below: 

\begin{itemize}
\item Remove all EDB fact nodes. \item Replace each subgraph  $v_1 \to v_2$ where $v_1$ is labeled $\neg \rel{R}(t)$ and $v_2$ is labeled $ \rel{R}(t)$ with a tuple node $v$ with $\nodeLabel(v) = \rel{R}(t)$ and $\successLabel(v) = \greenT$ if $\gameLabel(v_2) = \wonLabel$ and $\successLabel(v) = \redF$ otherwise. All incoming and outgoing edges to/from $v_1$ and $v_2$ are connected to $v$. \item For every rule node $v$ labeled $r(\vec{c})$, set $\successLabel(v) = \greenT$ if $\gameLabel(v) = \lostLabel$ and $\successLabel(v) = \redF$ otherwise.
\item For every goal node $v$ labeled $g_i^j(\vec{c})$, set $\successLabel(v) = \greenT$ if $\gameLabel(v) = \wonLabel$ and $\successLabel(v) = \redF$ otherwise.
\end{itemize}
}

\begin{Theorem}\label{theo:equi-pgame}   Let $P$ be a program, $I$ a database instance, and $\aProvQ$ a \provQ{}.
  We have:
  \begin{align*}\hspace{10mm}
    \gameToProv(\gprov(P,\aProvQ,I)) &= \explainq(P,\aProvQ,I)\\ \provToGame(\explainq(P,\aProvQ,I)) &= \gprov(P,\aProvQ,I)
  \end{align*}

\end{Theorem}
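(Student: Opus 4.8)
The plan is to exploit that $\provToGame$ and $\gameToProv$ are, by construction, mutually inverse operations on labelled graphs, so that it suffices to establish a single one of the two displayed identities; I will aim for $\provToGame(\explainq(P,\aProvQ,I)) = \gprov(P,\aProvQ,I)$, from which $\gameToProv(\gprov(P,\aProvQ,I)) = \explainq(P,\aProvQ,I)$ follows by applying $\gameToProv$ to both sides and using $\gameToProv \circ \provToGame = \mathrm{id}$.

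First I would record well-definedness together with the mutual-inverse property. Each clause of $\provToGame$ is a local rewriting — expand an EDB tuple node $R(t)$ to $\neg R(t)\to R(t)\to r_R(t)$ when $R(t)\in P(I)$ and to $\neg R(t)\to R(t)$ otherwise, expand an IDB tuple node to $\neg R(t)\to R(t)$, keep rule and goal nodes with their interconnections, and reroute each goal-to-tuple edge to the negated tuple node for a positive goal and to the positive tuple node for a negated goal — paired with the label maps $\greenT\leftrightarrow\wonLabel,\ \redF\leftrightarrow\lostLabel$ on positive tuple and goal nodes and the reversed maps $\greenT\leftrightarrow\lostLabel,\ \redF\leftrightarrow\wonLabel$ on rule nodes and negated tuple nodes. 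Every such step has its inverse among the clauses of $\gameToProv$ (collapse the $\neg R(t)\to R(t)$ pair, drop the EDB fact node, flip the labels back), and the label maps are bijections whose compositions are the identity; it then remains routine to check that $\provToGame$ never produces an ill-formed game subgraph and $\gameToProv$ never an ill-formed provenance subgraph (in particular that node labels stay unique), giving $\gameToProv\circ\provToGame=\mathrm{id}$ and $\provToGame\circ\gameToProv=\mathrm{id}$ on the relevant classes of subgraphs.

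For the remaining identity I would compare $\provToGame(\explainq(P,\aProvQ,I))$ and $\gprov(P,\aProvQ,I)$ on node set, edge set, and labels, working from the constructive definitions. The semantic anchor is that both graphs are fixed by $P(I)$: by Definition~\ref{def:prov=graph} a tuple node $R(t)$ has $\successLabel=\greenT$ iff $R(t)\in P(I)$ and a rule node has $\successLabel=\greenT$ iff the derivation is successful, whereas in the solved game (built over the same domain assignment) a tuple node is $\wonLabel$ iff $R(t)\in P(I)$ and a rule node is $\lostLabel$ iff the derivation is successful — so the label translation above is exactly the identification of these two descriptions, and all labels agree as soon as the node and edge sets do. For the node and edge sets I would show that $\provToGame$ carries the nodes reachable from $\qMatch(\aProvQ)$ in $\provGraph(P,I)$ bijectively onto those reachable from $\qMatch(\aProvQ)$ in the solved game, descending from the matching IDB tuple nodes: a rule node with head $Q(t')$ occurs in $\explainq$ exactly when $Q(t')\in P(I)$ and the derivation is successful, or $Q(t')\notin P(I)$ (so all its derivations fail), and in the game the subgraph of the solved game kept under optimal play retains, from a won tuple node, precisely the edges to lost rule nodes (the successful derivations) and, from a lost tuple node, all edges (to won rule nodes, i.e. the failed derivations) — the identical selection; likewise a failed rule node is connected in $\provGraph(P,I)$ only to the goal nodes of its failed goals, matching the falsifier's winning moves out of a won rule node. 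Continuing through goals to the body literals (positive goal to the negated tuple node, negated goal to the positive tuple node) one recovers an EDB fact node $r_R(t)$ for exactly the true EDB atoms that are reached and, for IDB literals, repeats the argument; since $\provGraph(P,I)$ and $\gprov(P,I)$ are both specified globally in terms of $P(I)$ rather than by recursion on derivation depth, this descent is purely local, though it can equivalently be organised as an induction on the height of the non-recursive program.

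The crux is this last step: making the correspondence between the two reachable subgraphs airtight, i.e. reconciling ``nodes reachable from $\qMatch(\aProvQ)$'' in the provenance game (in the sense of~\cite{KL13}) with the deliberately selective node-inclusion rules of Definition~\ref{def:prov=graph}. Concretely, one must verify that the rule nodes for failed derivations whose head is nevertheless derivable — which Definition~\ref{def:prov=graph} omits — are precisely the won rule nodes that drop out of the solved game under optimal play from a won tuple node, and that restricting a failed rule node to its failed goals coincides with the falsifier's strategy choice; this is where the game-theoretic labelling (won iff the tuple exists; a rule lost iff its derivation succeeds) does the real work. Everything else — that each tuple node splits into a $\neg R(t)\to R(t)$ pair with the correct won/lost pattern, that EDB fact nodes appear for exactly the true EDB atoms present in the explanation, and that the inverted labelling of rule and negated tuple nodes is consistent with $R(t)\in P(I)$ and with derivation success — is routine bookkeeping to be discharged case by case.
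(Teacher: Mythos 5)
Your proposal is correct and follows essentially the same route as the paper's proof: both rest on the agreement of label semantics (tuple exists iff $\greenT$/$\wonLabel$, derivation succeeds iff $\greenT$/$\lostLabel$, citing~\cite{KL13} for the game side), the edge-by-edge structural correspondence including the fixed $\neg R(t) \to R(t) \to r_R(t)$ fragments, and the fact that both explanations are cut out by connectivity from $\qMatch(\aProvQ)$. Your reduction of the two identities to one via $\gameToProv \circ \provToGame = \mathrm{id}$ is a minor organizational refinement rather than a different argument.
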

\begin{proof}
\ifnottechreport{See our accompanying technical report\cite{LL18}.}
\iftechreport{
   By construction,  the label of a tuple node in a provenance graph is $\greenT$ if the tuple exists and $\redF$ otherwise. Rule nodes are labeled $\greenT$ if the rule derivation is successful and $\redF$ otherwise. As shown in~\cite{KL13}, a tuple node in a provenance game is labeled $\wonLabel$ if the tuple exists and $\lostLabel$ otherwise. A rule node is labeled $\lostLabel$ if the derivation is successful and $\wonLabel$ else. Thus, based on the definition of $\gameToProv$ and $\provToGame$, tuple nodes are translated correctly. An analogous argument holds for rule nodes and goal nodes. What remains to be shown is that nodes are correctly connected in the result of $\gameToProv$ and $\provToGame$. In both provenance games and our provenance graphs, IDB tuple nodes are connected to the rule nodes deriving them. These edges are preserved by both mappings. Rule nodes representing successful derivations are connected to all corresponding goal nodes and, for unsuccessful derivations, only to goals that are failed. This is true for both provenance games and our provenance graphs. Both mappings preserve these edges. In provenance graphs, goal nodes are connected to the tuple node corresponding to the grounded goal.
 Finally, in provenance games, there are additional nodes (negated tuple nodes and EDB fact nodes). These nodes always co-occur with a positive tuple nodes in a fixed graph fragment, e.g., $\neg \rel{R}(t) \to \rel{R}(t) \to r_{R}(t)$. Mapping $\provToGame$ creates these fragments. Mapping $\gameToProv$ collapses these fragments and reroutes the incoming and outgoing edges of such a  subgraph to the tuple node the graph fragment is mapped to.
 Since explanations and game explanations are defined based on connectivity, this concludes the proof.
}
\end{proof}
\begin{Example}\label{ex:remove-idb-rel}
Consider rule $r_2$ in Fig.\,\ref{fig:partial-exam} computing
 which cities can be reached from another city  through a path of length 3. The provenance game and provenance graph for $\rel{Q_{3hop}}(s,s)$ are shown in Fig.\,\ref{fig:pgame-3hop}
and Fig.\,\ref{fig:pg-3hop}, respectively. In the provenance graph, goal nodes are directly connected to tuple nodes. In the game, they are represented as positive and negative tuple nodes and EDB fact nodes (the lower three levels).
That is, every subgraph $\neg T(X,Y) \rightarrow T(X,Y) \rightarrow r_{T}(X,Y)$ in Fig.\,\ref{fig:pgame-3hop}
is equivalently encoded as a single tuple node $T(X,Y)$ in Fig.\,\ref{fig:pg-3hop}. Both graphs record the 3 paths of length 3 which start and end in Seattle: 1) $s \rightarrow s \rightarrow s \rightarrow s$,
2) $s \rightarrow c \rightarrow s \rightarrow s$, and 3) $s \rightarrow s \rightarrow c \rightarrow s$.
\end{Example}

 \section{Semiring Provenance for Positive Queries}
\label{sec:semi-annot-model}

The semiring annotation model~\cite{grigoris-tj-simgodrec-2012,green2007provenance,GT17} is widely accepted as a provenance model for positive queries. An interesting question is how our model compares to provenance polynomials (semiring \ProvPoly), the most general form of annotation in the semiring model.
It was shown in~\cite{KL13} that, for positive queries, the result of a query annotated with semiring \ProvPoly can be extracted from the provenance game by applying a graph transformation.
The equivalence shown in Sec.\,\ref{sec:prov-game} extends this result to our provenance graph model. That being said, we develop simplified versions of our graph model to directly support less informative provenance semirings such as Lineage which only tracks data-dependencies between input and output tuples. We now introduce the semiring annotation framework for positive queries and its use in provenance tracking and, then, explain our simplified provenance graph types.

\subsection{\SomeK-relations}
\label{sec:k-relations}

In the semiring framework, relations are annotated with elements from a commutative semiring. A commutative semiring is a structure $\SomeK = (K, \KPlus, \KTimes, \Kzero, \Kone)$
over a set $K$ where the addition and multiplication operations are associative, commutative, and have a neutral element ($\Kzero$ and $\Kone$, respectively). Furthermore, multiplication with zero yields zero and multiplication distributes over addition. A relation annotated with the elements of a semiring $\SomeK$ is called a $\SomeK$-relation.
Operators of positive relational algebra ($\RAplus$) for $\SomeK$-relations compute annotations for tuples in their output by combining annotations from their input using the operations of the semiring. Intuitively, multiplication represents conjunctive use of inputs (as in a join) whereas addition represents alternative use of inputs (as in a union or projection).
We are interested in \SomeK-relations, because it was shown that many provenance types  can be expressed as semiring annotations.

Semiring homomorphisms are important for our purpose since they allow us to translate between different provenance semirings and understand their relative informativeness.
A semiring homomorphism $h: \SomeK_1 \to \SomeK_2$ is a function from $K_1$ to $K_2$ that respects the operations of semirings, e.g., $h(k_1 +_{\SomeK_1} k_2) = h(k_1) +_{\SomeK_2} h(k_2)$. As shown in~\cite{green2011containment}, if there exists a surjective homomorphism between one provenance semiring $\SomeK_1$ and another semiring $\SomeK_2$, then $\SomeK_1$ is more informative than $\SomeK_2$ (see~\cite{GT17} for the technical details justifying this argument). We introduce several provenance semirings below and explain the homomorphisms that link the most informative semiring ($\ProvPoly$)
to less informative semirings. 

\myunderpar{$(\ProvPoly,+,\cdot,0,1)$}: The elements of semiring $\ProvPoly$ are polynomials  with natural number coefficients and exponents over a set of variables $X$ representing tuples. Any polynomial can be written as a sum of products by applying the equational laws of semirings, e.g.,  the provenance polynomial for query result $\rel{Q_{3hop}}(s,s)$ is $p^3+2pqr$ (Fig.\,\ref{fig:partial-exam}). An important property of $\ProvPoly$ is that there exist homomorphisms from $\ProvPoly$ to any other semiring.

  \myunderpar{$(\PosBool,+, \cdot, 0, 1)$}: The elements of $\PosBool$ are derived from $\ProvPoly$ by making both addition and multiplication idempotent and applying an additional equational law: $x + x \cdot y = x$.   An element from $\PosBool$ can be encoded as a set of sets of variables with the restriction that every inner set $k$ is minimal, i.e., there is no other inner set $k'$ that is a subset of $k$. For example, the provenance polynomial $p^3 + 2pqr$ of  $\rel{Q_{3hop}}(s,s)$ is simplified as follows: $p^3 + 2pqr = p + pqr = p$. 

  \myunderpar{$(\WhichProv,+,\cdot,0,1)$}: In the $\WhichProv$ semiring, addition is equivalent to multiplication: $x + y = x \cdot y$ for $x,y \not \in \{0,1\}$,   and both addition and multiplication are idempotent. This semiring has sometimes also be called the Lineage semiring. Alternatively, the semiring can be defined over the powerset of the set of variables $X$~\cite{cheney2009provenance}. 

  Other semirings of interest are $(\BoolProvPoly,+,\cdot,0,1)$ which is derived from $\ProvPoly$ by making addition idempotent ($x + x \equiv x$), semiring $(\TrioProv,+,\cdot,0,1)$ where multiplication is idempotent ($x \cdot x \equiv x$), and $(\WhyProv, + , \cdot, 0, 1)$ where both addition and multiplication are idempotent.

\subsection{$\SomeK$-explanations}
\label{sec:expl-types}

We now introduce simplified versions of our provenance graphs that each corresponds to a certain provenance semiring. Given a positive query $P$, \provQ~$\aProvQ$, and database $I$, we use $\explainq_{\SomeK}(P,$ $\aProvQ,I)$ to denote a $\SomeK$-explanation for $\aProvQ$. A $\SomeK$-explanation is a provenance graph that encodes the $\SomeK$-provenance of all query results from $\qMatch(\aProvQ)$, i.e., the set of answers the user
is interested in.
In the following, we first show how to extract $\ProvPoly$ from our provenance graph.
Then, for each homomorphism implementing the derivation of a less informative provenance model from a more informative provenance model in the semiring framework,
there is a corresponding graph transformation over our provenance graphs that maps $\explainq_{\ProvPoly}(P, \aProvQ, I)$ to $\explainq_{\SomeK}(P, \aProvQ, I)$. The following theorem shows that we can reuse the existing mapping from provenance games to provenance polynomials by composing it with the mapping $\provToGame$.

\begin{Theorem}\label{theo:expl-NX}
Let $\provToNX$ denote the function\\ $\gameToNX \circ \provToGame$.
Given a positive input program $P$, database instance $I$, and tuple $t \in P(I)$, denote by $\ProvPoly(P,I,t)$ the $\ProvPoly$ annotation of $t$ over an abstractly tagged version of $I$ (each tuple $t$ is annotated with a unique variable $x_t$).
Then, \\
\begin{minipage}{1\linewidth}
$$\provToNX(\explainq(P,I,t)) = \ProvPoly(P,I,t)$$
\end{minipage}
\end{Theorem}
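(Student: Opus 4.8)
The proof is essentially a composition argument. The equivalence in Theorem~\ref{theo:equi-pgame} gives us $\provToGame(\explainq(P,I,t)) = \gprov(P,\aProvQ,I)$ where $\aProvQ = \whyq\, Q(t)$ (note $t$ here is a ground tuple, so $\qMatch(\aProvQ)$ picks out exactly the subgraph rooted at $Q(t)$). Separately, K\"ohler et al.~\cite{KL13} established that there is a graph transformation $\gameToNX$ mapping a (game-)explanation for a positive query result to its provenance polynomial, i.e., $\gameToNX(\gprov(P,\aProvQ,I)) = \ProvPoly(P,I,t)$. Chaining these two facts and using the definition $\provToNX \defas \gameToNX \circ \provToGame$ yields the claim immediately. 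So the ``real'' content is just checking that the hypotheses of the two imported results line up.

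First I would fix notation: given $t \in P(I)$, set $\aProvQ \defas \whyq\, Q(t)$, and observe that since $t$ contains only constants, $\qMatch(\aProvQ) = \{Q(t)\}$ and hence $\explainq(P,I,t)$ (shorthand for $\explainq(P,\aProvQ,I)$) is exactly the subgraph of $\provGraph(P,I)$ reachable from the single node $Q(t)$. Next I would invoke Theorem~\ref{theo:equi-pgame} to get $\provToGame(\explainq(P,I,t)) = \gprov(P,\aProvQ,I)$, the game explanation rooted at $Q(t)$. Then I would invoke the result of~\cite{KL13}: for a \emph{positive} program $P$ and a tuple $t \in P(I)$, reading out the provenance game subgraph rooted at $Q(t)$ via $\gameToNX$ produces precisely $\ProvPoly(P,I,t)$, the provenance polynomial of $t$ over the abstractly tagged instance. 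Substituting gives
\[
\provToNX(\explainq(P,I,t)) \;=\; \gameToNX(\provToGame(\explainq(P,I,t))) \;=\; \gameToNX(\gprov(P,\aProvQ,I)) \;=\; \ProvPoly(P,I,t),
\]
which is the statement.

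The one place that needs care — and which I expect to be the main obstacle — is verifying that the object produced by $\provToGame$ on our (possibly pruned) explanation graph is genuinely the game explanation that~\cite{KL13}'s read-out theorem applies to, rather than merely ``a game-shaped graph.'' Concretely, I must confirm: (i) since $P$ is positive, no failed rule nodes or negated-literal goal nodes arise, so $\explainq(P,I,t)$ consists entirely of $\greenT$-labeled tuple, rule, and goal nodes, and $\provToGame$ expands EDB tuples to the $\neg R(t)\to R(t)\to r_R(t)$ fragments and IDB tuples to $\neg R(t)\to R(t)$ fragments exactly as K\"ohler et al.\ construct them; (ii) the labeling translation ($\greenT\mapsto\wonLabel$ on positive tuple/goal nodes, $\greenT\mapsto\lostLabel$ on negated tuple and rule nodes) reproduces the solved-game labeling; and (iii) the abstract tagging lines up — each EDB fact node $r_R(t)$ in the game carries the variable $x_t$, and $\gameToNX$ multiplies along successful derivations and sums over alternatives, so the polynomial read out matches the $\RAplus$ semiring semantics. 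Points (i) and (ii) follow directly from the definitions of $\provToGame$ and the positivity assumption; point (iii) is exactly what~\cite{KL13} proved, so I would cite it rather than reprove it. If one wanted a self-contained argument, the alternative is a direct structural induction on the query (UCQ) showing $\provToNX$ of the explanation equals the $\RAplus$-computed annotation, mirroring the inductive proof in~\cite{KL13} — but given Theorem~\ref{theo:equi-pgame} is already in hand, the composition route is shorter and is the one I would present.
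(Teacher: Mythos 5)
Your proposal is correct and follows essentially the same route as the paper: cite K\"ohler et al.\ for $\gameToNX(\gprov(P,\aProvQ,I)) = \ProvPoly(P,I,t)$, apply Theorem~\ref{theo:equi-pgame} to identify $\gprov(P,\aProvQ,I)$ with $\provToGame(\explainq(P,I,t))$, and compose. Your additional checks (i)--(iii) on how the positivity assumption and the labeling/tagging line up are a welcome elaboration of details the paper leaves implicit, but they do not change the argument.
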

\begin{proof}
\ifnottechreport{The proof is shown in~\cite{LL18}.}
\iftechreport{
It has already proven in~\cite{KL13} that provenance games generalize $\ProvPoly$ (provenance polynomial can be extracted from games).
That is, using our notation, $$\gameToNX(\gprov(P,I,t)) = \ProvPoly(P,I,t)$$
To prove Theorem\,\ref{theo:expl-NX}, we have to show that the game explanation $\gprov(P,I,t)$ is equivalent to our explanation $\explainq(P,I,t)$.
In Theorem\,\ref{theo:equi-pgame},
we have proven that our provenance graphs are equivalent to provenance games (there exists a lossless transformation). By applying the transformation $\provToGame$ to the formula above,
we get $$\gameToNX(\provToGame(\explainq(P,I,t))) = \ProvPoly(P,I,t)$$
Thus, it is immediate that the function $\gameToNX \circ \provToGame$ is correct and, thus, the claim holds.
}
\end{proof}
Consider the explanation for $\whyq \rel{Q_{3hop}}(s,s)$ shown in Fig.\,\ref{fig:pg-3hop}. Recall that there are three options for reaching Seattle from Seattle with two intermediate stops corresponding to three derivations of $\rel{Q_{3hop}}$ using rule $r_2$.
These three derivations are shown in the provenance graph, e.g., $r_2(s,s,s,s)$ is the derivation that uses the local train connection inside Seattle three times. Annotating the train connections with variables $p$, $q$, and $r$ as shown in Fig.\,\ref{fig:partial-exam} and ignoring rule information encoded in the graph, the provenance encoded by our model is a bag (denoted as $[$ $]$) of lists (denoted as $($ $)$) of these variables. Each list corresponds to a rule derivation where variables are ordered according to the order of their occurrence in the body of the rule. For instance, $(q,r,p)$ corresponds to taking a train from Seattle to Chicago ($q$), then from Chicago to Seattle ($r$), and finally a local connection inside Seattle $(p)$.
We now illustrate the graph transformations yielding $\SomeK$-explanations from $\explainq_{\ProvPoly}$ based on this example.

\mypara{Semiring $\ProvPoly$}
In Fig.\,\ref{fig:pg-3hop}, we (1) replace rule nodes with multiplication (i.e., $r_2(s,s,s,s) \rightarrow \cdot$) and (2) replace goal nodes with addition (e.g., $g_4^1(s,s) \rightarrow +$) to generate a graph that encodes $\ProvPoly$ as shown in Fig.\,\ref{fig:poly-model} (denoted as $\explainq_{\ProvPoly}$).
Applying this transformation, the rule instantiation $r_2(s,s,c,s)$ deriving result tuple $\rel{Q_{3hop}}(s,s)$ can no longer be distinguished from $r_2(s,s,s,c)$, because they are connected to the same tuple nodes.
The only information retained is which arguments are used how often by a rule (labelled with $\cdot$). To extract $\ProvPoly$, we (1)  replace labels of leaf nodes with their annotations from Fig.\,\ref{fig:partial-exam} (e.g., $T(s,s)$ is replaced with $p$) and (2) replace IDB tuple nodes with addition.

\mypara{Semiring $\PosBool$}
$\explainq_{\PosBool}$ (Fig.\,\ref{fig:posb-model}) is computed from $\explainq_{\ProvPoly}$ by first collapsing rule nodes if the subgraphs rooted at these rule nodes are isomorphic and dropping all the goal nodes. Then, ``$\cdot$'' nodes are removed
if one or more subgraphs rooted at children of such a node is isomorphic to the subgraphs rooted at the children of another ``$\cdot$'' node.
Applying this process to our example, after the first step, two ``$\cdot$'' nodes (one connects $\rel{Q_{3hop}}(s,s)$ to $p$ and the other for each $p$, $q$, and $r$) exist in the graph corresponding to $p$ and $p \cdot q \cdot r$. In the second step, $p \cdot q \cdot r$ is removed because it contains $p$ ($\rel{T}(s,s)$) as a subgraph.

\mypara{Semiring $\WhichProv$}
The semiring $\WhichProv$ (aka Lineage) is reached by collapsing all intermediate nodes and directly connecting  tuple nodes (e.g., $\rel{Q_{3hop}}(s,s)$) with other tuple nodes (e.g., $\rel{T}(s,s)$)  as shown in Fig.\,\ref{fig:lin-model}.

$\explainq_{\BoolProvPoly}$ and $\explainq_{\TrioProv}$ are derived from $\explainq_{\ProvPoly}$ by collapsing isomorphic subgraphs rooted at rule nodes and by dropping all the goal nodes, respectively. The combination of these transformations achieves $\explainq_{\WhyProv}$.

  \section{Semiring Provenance for FO Model Checking}\label{sec:semi-foq}

The semiring framework was recently extended for capturing provenance of first-order (FO) model checking~\cite{T17,GV17}. We now study the relationship of our model to semiring provenance for FO queries.
Based on the observation first stated in~\cite{KL13} (provenance for FO queries and, thus,  also FO logic, naturally supports missing answers), the authors  explain missing answers based on FO provenance~\cite{XZ18}. Another interesting aspect of~\cite{GV17} is that it allows some facts to be left undetermined (their truth is undecided). This enables how-to queries~\cite{meliou2012tiresias}, i.e., given an expected outcome, which possible world compatible with the undecided facts would produce this outcome.
In this section, we first introduce the model from~\cite{GV17}, then demonstrate how our approach can be extended to support undetermined truth values. Finally, we show how the annotation computed by the approach presented in~\cite{GV17} for a FO formula $\aForm$ can be efficiently extracted from the provenance graph generated by our approach for a query $\formQ{}$ which is derived from $\aForm$ through a translation $\formToQ$.

\subsection{K-Interpretations and Dual Polynomials}\label{sec:k-interpr-dual}

In~\cite{T17,GT17}, the authors define semiring provenance for formulas in FO logic. Let $\fDom$ be a domain of values.
We use $\fVal$ to denote an assignment of the free variables of $\aForm$ to values from $\fDom$.
Given a so-called $\SomeK$-interpretation $\kInter$ which is a function mapping positive and negative literals to annotations from $\SomeK$, the annotation of a formula $\kInterOf{\aForm}$ for a given valuation $\fVal$ is derived using the rules below. For sentences, i.e., formulas without free variables, we omit the valuation and write $\kInter(\aForm)$ to denote $\kInterOf{\aForm}$ for the empty valuation $\fVal$. Furthermore, $\formOp$ is used to denote a comparison operator (either $=$ or $\neq$).

\vspace{-2mm}
\noindent
\resizebox{1\linewidth}{!}{
  \begin{minipage}{1.05\linewidth}
  \begin{align*}
  \kInterOf{R(\varVec)} &= \kInter(R(\fVal(\varVec)))
  &\kInterOf{\dlNeg R(\varVec)} &= \kInter(\dlNeg R(\fVal(\varVec)))
  \end{align*}\\[-12mm]
    \begin{align*}
    \kInterOf{x \formOp y} &= \mathtext{if} \fVal(x) \formOp \fVal(y) \mathtext{then} 1 \mathtext{else} 0
  &\kInterOf{\neg \aForm} &= \kInterOf{\nnf(\aForm)}
    \end{align*}\\[-12mm]
    \begin{align*}
      \kInterOf{\aForm_1 \vee \aForm_2} &= \kInterOf{\aForm_1} + \kInterOf{\aForm_2}
      &\kInterOf{\aForm_1 \wedge \aForm_2} &= \kInterOf{\aForm_1} \cdot \kInterOf{\aForm_2}
    \end{align*}\\[-12mm]
    \begin{align*}
  \kInterOf{\exists x\, \aForm} &= \sum_{a \in \fDom}\kInterOfVal{\aForm}{\fVal[x \mapsto a]}
  &\kInterOf{\forall x\, \aForm} &= \prod_{a \in \fDom}\kInterOfVal{\aForm}{\fVal[x \mapsto a]}
\end{align*}
\end{minipage}
}

Both conjunction and universal quantification correspond to multiplication,  and annotations of positive and negative literals are read from $\kInter$.
This model deals with negation as follows.  A negated formula is first translated into negation normal form (\textit{nnf}). A formula in $\nnf$ does not contain negation except for negated literals. Any formula can be translated into this form using DeMorgan rules, e.g., $\dlNeg (\forall x\, \aForm) \equiv (\exists x\, \dlNeg \aForm)$.
By pushing negation to the literal level using $\nnf$, and annotating both positive and negative literals, the approach avoids extending the semiring structure with
an explicit negation operation. 

Provenance tracking for FO formula has to take into account the dual nature of the literals. The solution presented in~\cite{GV17,T17} is to use polynomials over two sets of variables: variables from $X$ and $\dualOf{X}$ are exclusively used to annotate positive and negative literals, respectively. For any variable $x \in X$, there exists a corresponding variable $\dualOf{x} \in \dualOf{X}$ and vice versa. Furthermore, if $x$ annotates $R(\valVec)$, then $\dualOf{x}$ can only annotate $\dlNeg R(\valVec)$ (and vice versa). The semiring of dual indeterminate polynomials is then defined as the structure generated by applying the congruence $x \cdot \dualOf{x} = 0$ to the polynomials from $\mathbb{N}[X \cup \dualOf{X}]$. The resulting structure is denoted by $\ProvPolyDual$. Intuitively, this congruence encodes the standard logic equivalence $R(\valVec) \wedge \dlNeg R(\valVec) \equiv false$. Importantly, in a $\ProvPolyDual$-interpretation $\kInter$, we can decide which facts are true/false and whether to track provenance for these facts. Furthermore, we can leave the truth of some literals undetermined. Below, we show all feasible combinations for annotating $R(\valVec)$ and $\dlNeg R(\valVec)$ in $\kInter$ and their meaning. For instance, if we annotate $R(\valVec)$ with $1$ or $0$, this corresponds to asserting the fact $R(\valVec)$, but not tracking provenance for it. By setting $R(\valVec) = x$ and $\dlNeg R(\valVec) = \dualOf{x}$, we leave the truth of $R(\valVec)$ undecided.
Note that $R(\valVec) = 0$ and $\dlNeg R(\valVec) = 0$ (as well as $R(\valVec) = 1$ and $\dlNeg R(\valVec) = 1$) are not considered here since they lead to incompleteness (inconsistency).
\vspace{-2mm}
\begin{align*}
 \kInter( R(\valVec)) &= 1 &\kInter(\dlNeg R(\valVec)) &= 0 \tag{true, no provenance}\\
 \kInter( R(\valVec)) &= 0 &\kInter(\dlNeg R(\valVec)) &= 1 \tag{false, no provenance}\\
 \kInter( R(\valVec)) &= x &\kInter(\dlNeg R(\valVec)) &= 0\tag{true, track provenance}\\
 \kInter( R(\valVec)) &= 0 &\kInter(\dlNeg R(\valVec)) &= \dualOf{x} \tag{false, track provenance}\\
 \kInter( R(\valVec)) &= x &\kInter(\dlNeg R(\valVec)) &= \dualOf{x} \tag{undetermined}
\end{align*}

Consider a sentence $\aForm$.\footnote{We only restrict the discussion to sentences for simplicity. The arguments here also hold for formulas with free variables.}
The annotation $\kInter(\aForm)$ computed for $\aForm$ over $\kInter$ with undetermined facts represents a set of possible models for $\aForm$.
By choosing for each undetermined fact $R(\valVec)$ in $\kInter(\aForm)$ whether it is true or not, we ``instantiate'' one possible model for $\aForm$. By encoding a set of possible models, $\kInter(\aForm)$  allows for reverse reasoning: we can find models that fulfill certain properties from the set of models encoded by $\kInter(\aForm)$.

\begin{Example}\label{ex:dual-poly}
Reconsider query $r_1$ from Fig.\,\ref{fig:running-example-db}.
Assume that we want to determine what effect building a direct train connection from New York to Seattle would have on the query result $\rel{Q}(n,s)$. Thus, we make the assumption that the database instance is as in Fig.\,\ref{fig:running-example-db} with the exception that we keep $\rel{T}(n,s)$ undetermined.
In first-order logic, $r_1$ is expressed as: $\rel{only2hop}(x,y) \equiv \exists z(\rel{T}(x,z) \wedge \rel{T}(z,y) \wedge \neg \rel{T}(x,y))$
and fact $\rel{Q}(n,s)$ as: $\aForm \equiv \rel{only2hop}(n,s)$.
The database when keeping $\rel{T}(n,s)$ undetermined is encoded as a $\ProvPolyDual$-interpretations $\kInter$ which assigns variables to positive literals as shown in Fig.\,\ref{fig:running-example-db} (the corresponding negated literals are annotated with $0$). $\kInter(T(n,s)) = v$, $\kInter(\dlNeg T(n,s)) = \dualOf{v}$, and we annotate all remaining positive literals with $0$ and negative literals with $1$.
Computing $\kInter(\aForm)$ using the rules above, we get
$(t \cdot s \cdot \dualOf{v}) + (u \cdot r \cdot \dualOf{v})$.
There are two ways of deriving the query result $\rel{Q}(n,s)$ which both depend on the absence of a direct train connection from New York to Seattle ($\dualOf{v}$). Now if we decide to introduce such a connection, we can evaluate the effect of this choice by setting $\dualOf{v} = 0$ in the provenance polynomial above (the absence of this connection has been refuted), i.e., we get $(t \cdot s \cdot 0) + (u \cdot r \cdot 0) = 0$. Thus, if we were to introduce such a connection, then $\rel{Q}(n,s)$ would no longer be a result.

\end{Example}

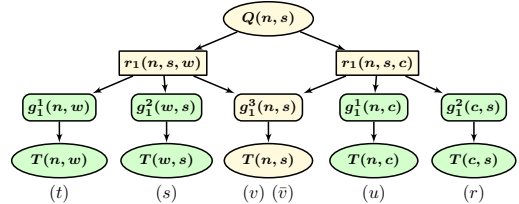
\begin{figure}[t]
  \centering
  \resizebox{0.8\columnwidth}{!}{\begin{tikzpicture}[>=latex',line join=bevel,line width=0.3mm]
  \definecolor{uncolor}{rgb}{1.0,0.98,0.865};
  \definecolor{fillcolor}{rgb}{0.83,1.0,0.8};

  \node (REL_Q_WON_n_s_) at (150bp,175bp) [draw=black,fill=uncolor,ellipse] {$\boldsymbol{Q(n,s)}$};
  \node (RULE_0_LOS_n_s_w_) at (91bp,150bp) [draw=black,fill=uncolor,rectangle] {$\boldsymbol{r_1(n,s,w)}$};

  \node (GOAL_0_0_WON_n_w_) at (33bp,125bp) [draw=black,fill=fillcolor,rounded corners=.15cm,inner sep=3pt] {$\boldsymbol{g_{1}^{1}(n,w)}$};
  \node (EDB_T_LOS_n_w_) at (33bp,95bp) [draw=black,fill=fillcolor,ellipse,label=below:{\normalsize $(t)$}] {$\boldsymbol{T(n,w)}$};

  \node (GOAL_0_1_WON_w_s_) at (93bp,125bp) [draw=black,fill=fillcolor,rounded corners=.15cm,inner sep=3pt] {$\boldsymbol{g_{1}^{2}(w,s)}$};
  \node (EDB_T_LOS_w_s_) at (93bp,95bp) [draw=black,fill=fillcolor,ellipse,label=below:{\normalsize $(s)$}] {$\boldsymbol{T(w,s)}$};

  \node (GOAL_0_2_WON_n_s_) at (150bp,125bp) [draw=black,fill=uncolor,rounded corners=.15cm,inner sep=3pt] {$\boldsymbol{g_{1}^{3}(n,s)}$};
  \node (EDB_T_LOS_n_s_) at (150bp,95bp) [draw=black,fill=uncolor,ellipse,label=below:{\normalsize $(v)$ $(\bar{v})$}] {$\boldsymbol{T(n,s)}$};

  \node (RULE_0_LOS_n_s_c_) at (211bp,150bp) [draw=black,fill=uncolor,rectangle] {$\boldsymbol{r_1(n,s,c)}$};
  \node (GOAL_0_0_WON_n_c_) at (209bp,125bp) [draw=black,fill=fillcolor,rounded corners=.15cm,inner sep=3pt] {$\boldsymbol{g_{1}^{1}(n,c)}$};
  \node (EDB_T_LOS_n_c_) at (209bp,95bp) [draw=black,fill=fillcolor,ellipse,label=below:{\normalsize $(u)$}] {$\boldsymbol{T(n,c)}$};

  \node (GOAL_0_1_WON_c_s_) at (265bp,125bp) [draw=black,fill=fillcolor,rounded corners=.15cm,inner sep=3pt] {$\boldsymbol{g_{1}^{2}(c,s)}$};
  \node (EDB_T_LOS_c_s_) at (265bp,95bp) [draw=black,fill=fillcolor,ellipse,label=below:{\normalsize $(r)$}] {$\boldsymbol{T(c,s)}$};

  \draw [->] (RULE_0_LOS_n_s_c_) -> (GOAL_0_0_WON_n_c_);
  \draw [->] (REL_Q_WON_n_s_) -> (RULE_0_LOS_n_s_c_);
  \draw [->] (RULE_0_LOS_n_s_c_) -> (GOAL_0_1_WON_c_s_);

  \draw [->] (GOAL_0_0_WON_n_c_) -> (EDB_T_LOS_n_c_);

  \draw [->] (REL_Q_WON_n_s_) -> (RULE_0_LOS_n_s_w_);
  \draw [->] (RULE_0_LOS_n_s_w_) -> (GOAL_0_1_WON_w_s_);
  \draw [->] (RULE_0_LOS_n_s_c_) -> (GOAL_0_2_WON_n_s_);
  \draw [->] (RULE_0_LOS_n_s_w_) -> (GOAL_0_2_WON_n_s_);
  \draw [->] (GOAL_0_2_WON_n_s_) -> (EDB_T_LOS_n_s_);

  \draw [->] (GOAL_0_1_WON_c_s_) -> (EDB_T_LOS_c_s_);

  \draw [->] (GOAL_0_0_WON_n_w_) -> (EDB_T_LOS_n_w_);

  \draw [->] (GOAL_0_1_WON_w_s_) -> (EDB_T_LOS_w_s_);

  \draw [->] (RULE_0_LOS_n_s_w_) -> (GOAL_0_0_WON_n_w_);
\end{tikzpicture}
 }
  $\,$\\[-4mm]
  \caption{Provenance graph for $\whyq \rel{Q}(n,s)$ when $T(n,s)$ is left undetermined}\label{fig:exam-whynot-instantiate}

\end{figure}

\begin{figure*}[t]
  \centering $\,$\\[-4mm]
\resizebox{1\linewidth}{!}{
  \begin{minipage}{1.03\linewidth}
  \begin{minipage}{0.36\linewidth}
  \begin{equation}
    \label{eq:translate-exists}
    \frac{\aForm \defas \exists x: \aForm_1}
    {
       \formQ{}(\freeOf{\aForm}) \dlImp \domQ(x), \formQ{1}(\freeOf{\aForm_1})
  }
\end{equation}
\end{minipage}
\hspace{0.1cm}
  \begin{minipage}{0.4\linewidth}
  \begin{equation}
    \label{eq:translate-R}
    \frac{\aForm \defas \dlNeg R(\varVec), \freeOf{\aForm} = \{x_1, \ldots, x_n\} }
    {
       \formQ{}(\freeOf{\aForm}) \dlImp \domQ(x_1), \ldots, \domQ(x_n), \dlNeg R(\varVec)
  }
\end{equation}
\end{minipage}
\hspace{0.3cm}
  \begin{minipage}{0.2\linewidth}
  \begin{equation}
    \label{eq:translate-R}
    \frac{\aForm \defas R(\varVec)}
    {
       \formQ{}(\freeOf{\aForm}) \dlImp R(\varVec)
  }
\end{equation}
\end{minipage}

  \begin{minipage}{0.68\linewidth}
  \begin{equation}
    \label{eq:translate-or}
    \frac{\aForm \defas \aForm_1 \vee \aForm_2, \freeOf{\aForm_1} = \{x_1, \ldots, x_n, y_1, \ldots, y_m\}, \freeOf{\aForm_2} = \{x_1, \ldots, x_n, z_1, \ldots, z_l\}}
    {
      \splitfrac{
         \formQ{}(\freeOf{\aForm}) \dlImp \domQ(z_1), \ldots, \domQ(z_k), \formQ{1}(\freeOf{\aForm_1})}
      {\formQ{}(\freeOf{\aForm}) \dlImp \domQ(y_1), \ldots, \domQ(y_m), \formQ{2}(\freeOf{\aForm_2})}
  }
  \end{equation}
\end{minipage}
\hfill
  \begin{minipage}{0.31\linewidth}
  \begin{equation}
    \label{eq:translate-R}
    \frac{\aForm \defas x \formOp y}
    {
       \formQ{}(x,y) \dlImp \domQ(x), \domQ(y), x \formOp y
  }
\end{equation}
\end{minipage}

\begin{minipage}{0.56\linewidth}
  \begin{equation}
    \label{eq:translate-forall}
    \frac{\aForm \defas \forall x: \aForm_1, \freeOf{\aForm} = \{x_1, \ldots, x_n\} }
    {
      \splitfrac{
         \formQ{}(\freeOf{\aForm}) \dlImp \domQ(x_1), \ldots, \domQ(x_k), \dlNeg \formQ{{}'}(\freeOf{\aForm})}
      {\formQ{{}'}(\freeOf{\aForm}) \dlImp \domQ(x), \domQ(x_1), \ldots, \domQ(x_n), \neg \formQ{1}(\freeOf{\aForm_1})}
  }
  \end{equation}
\end{minipage}
\hfill
  \begin{minipage}{0.4\linewidth}
  \begin{equation}
    \label{eq:translate-and}
    \frac{\aForm \defas \aForm_1 \wedge \aForm_2}{
     \formQ{}(\freeOf{\aForm}) \dlImp \formQ{1}(\freeOf{\aForm_1}), \formQ{2}(\freeOf{\aForm_2})
    }
  \end{equation}
\end{minipage}

\end{minipage}
}

  $\,$\\[-3mm]
  \caption{Translating a first-order formula $\aForm$ into a first-order query $\formQ{}$}\label{fig:translation-formula-to-query}
\end{figure*}

\subsection{Supporting Undeterminism in  Provenance Graphs}\label{sec:supp-undet-liter}

Supporting undetermined facts in our provenance model is surprisingly straightforward. We introduce a new label $\yellowD$ which is used to label nodes whose success/failure (existence/absence) is undetermined. To account for this new label, we amend the rules for determining connectivity and node labeling as follows:

\begin{itemize}
\item For a goal node $v_g$ (no matter whether positive or negative) that is connected to a tuple node $v_t$ with $\successLabel(v_t) = \yellowD$, we set $\successLabel(v_g) = \yellowD$ (goals corresponding to undetermined tuples are undetermined).
\item A rule node is succesful ($\greenT$) if all its goals are successful, a rule node is failed if some of its goals are failed ($\redF$), and finally a rule node is undetermined ($\yellowD$) if at least one of its goals is undetermined and none of its goals are failed. Successful rule nodes are connected to all goals, failed rule nodes to failed and undetermined goals (these may provide further justification for the failure), and undetermined rule nodes to all goals (successful and undetermined goals will determine success of the rule nodes under choices).
\item An IDB tuple exists ($\greenT$) if at least one of its rule derivation is successful. It is connected to all successful and undetermined rule derivations (these may provide additional justifications under certain choices for undetermined facts). An IDB tuple is absent $(\redF$) if all of its rule derivations fail. Finally, an IDB tuple is undetermined ($\yellowD$) if at least one of its rule derivations is undetermined and none is successful. Undetermined tuple nodes are connected to all their rule derivations (failed ones may be additional justifications for absence while undetermined ones may justify either existence or absence).
\end{itemize}

\begin{Example}\label{ex:dual-poly-PG}
Consider Example\,\ref{ex:dual-poly} in our extended provenance graph model. Let $v_{n,s}$ be the node corresponding to $\rel{T}(n,s)$. If we set $\successLabel(v_{n,s}) = \yellowD$ to indicate that $\rel{T}(n,s)$ should be considered as undetermined, then we get the provenance graph in Fig.\,\ref{fig:exam-whynot-instantiate}. Our approach correctly determines that under this assumption the truth of $\rel{Q}(n,s)$ is undetermined and that there are two potential derivations of this result which also are undetermined, because they depend on existing tuples as well as the undetermined tuple $\rel{T}(n,s)$. To evaluate the effect of choosing $\rel{T}(n,s)$ to be true or false, we would set $\successLabel(v_{n,s}) = \greenT$ or $\successLabel(v_{n,s}) = \redF$ and propagate the effect of this change bottom-up through-out the provenance graph.
\end{Example}

Importantly, the provenance graph captured for an instance with undetermined facts is sufficient for evaluating the effect of setting any of these undetermined facts to false or true. That is, just like it is not necessary to reevaluate the semiring annotation of a formula to evaluate the impact of such a choice,  it is also not necessary to recapture provenance in our model to evaluate a choice.
For lack of space, we are not discussing the details of a corresponding extension for provenance games, but still would like to remark that undetermined facts correspond to draws in the game (neither player has a winning strategy). In the type of two-player games employed in provenance games, draws are caused by cycles in the game graph. To leave the existence of an EDB tuple undetermined, we introduce an EDB fact node for the tuple and add a self-edge to this node which causes the tuple node to be a draw in the game.

\begin{figure*}[t]
  \centering $\,$\\[-2mm]
  \resizebox{0.93\linewidth}{!}{\begin{tikzpicture}[>=latex',line join=bevel,line width=0.3mm,scale=0.4]
\definecolor{uncolor}{rgb}{1.0,0.98,0.865} \definecolor{failcolor}{rgb}{0.63,0,0}
\definecolor{succcolor}{rgb}{0.83,1.0,0.8}

  \node (RULE_3_LOST_b_b_) at (843.37bp,18.0bp) [draw=black,fill=uncolor,rectangle,label=below:{\small $(\cdot)$}] {$\boldsymbol{r_3(b,a)}$};
  \node (TUPLE_R_LOST_a_b_) at (1537.3bp,77.0bp) [draw=black,fill=succcolor,ellipse] {$\boldsymbol{R(a,a)}$};

  \node (GOAL_3_2_LOST_b_) at (927.87bp,-65.0bp) [draw=black,rectangle,rounded corners=.15cm,fill=succcolor,label=below:{\small $(+)$}] {$\boldsymbol{g_{3}^{2}(b)}$};
  \node (GOAL_3_1_LOST_b_b_) at (977.87bp,18.0bp) [draw=black,rectangle,rounded corners=.15cm,fill=uncolor,label=below:{\small $(+)$}] {$\boldsymbol{g_{3}^{1}(b,a)}$};
  \node (TUPLE_QPHITWO_LOST_b_b_) at (1133.1bp,18.0bp) [draw=black,fill=uncolor,ellipse,label=below:{\small $(+)$}] {$\boldsymbol{Q_{\varphi_2}(b,a)}$};
\node (DOM_b_) at (1073.1bp,-65.0bp) [draw=black,fill=succcolor,ellipse,label=below:{\small $(1)$}] {$\boldsymbol{\domQ(b)}$};
  
  \node (TUPLE_QPHIONE_LOST_b_) at (694.83bp,18.0bp) [draw=black,fill=uncolor,ellipse,label=below:{\small $(+)$}] {$\boldsymbol{Q_{\varphi_1}(b)}$};
  \node (RULE_4_LOST_a_b_) at (1283.8bp,77.0bp) [draw=black,fill=succcolor,rectangle,label=above:{\small $(\cdot)$}] {$\boldsymbol{r_4(a,a)}$};
  \node (GOAL_4_0_LOST_a_b_) at (1403.3bp,77.0bp) [draw=black,rectangle,rounded corners=.15cm,fill=succcolor,label=above:{\small $(x)$}] {$\boldsymbol{g_{4}^{1}(a,a)}$};
  \node (RULE_1_WON_1_) at (139.69bp,45.0bp) [draw=black,fill=uncolor,rectangle,label=above:{\small $(\cdot)$}] {$\boldsymbol{r_1()}$};
  \node (TUPLE_QPHITWO_LOST_a_b_) at (1133.1bp,77.0bp) [draw=black,fill=succcolor,ellipse,label=above:{\small $(+)$}] {$\boldsymbol{Q_{\varphi_2}(a,a)}$};
  \node (DOM_a_) at (1073.1bp,160.0bp) [draw=black,fill=succcolor,ellipse,label=above:{\small $(1)$}] {$\boldsymbol{\domQ(a)}$};
  
  \node (GOAL_1_1_WON_1_) at (233.69bp,45.0bp) [draw=black,rectangle,rounded corners=.15cm,fill=uncolor,label=above:{\small $(+)$}] {$\boldsymbol{g_{1}^{1}()}$};
  \node (TUPLE_R_LOST_b_b_) at (1537.3bp,18.0bp) [draw=black,fill=uncolor,ellipse] {$\boldsymbol{R(b,a)}$};
  \node (TUPLE_QPHI_WON_1_) at (38.347bp,45.0bp) [draw=black,fill=uncolor,ellipse,label=above:{\small $(+)$}] {$\boldsymbol{Q_{\varphi}()}$};

  \node (GOAL_3_2_LOST_a_) at (927.87bp,160.0bp) [draw=black,rectangle,rounded corners=.15cm,fill=succcolor,label=above:{\small $(+)$}] {$\boldsymbol{g_{3}^{2}(a)}$};
  \node (GOAL_3_1_LOST_a_b_) at (977.87bp,77.0bp) [draw=black,rectangle,rounded corners=.15cm,fill=succcolor,label=above:{\small $(+)$}] {$\boldsymbol{g_{3}^{1}(a,a)}$};
  \node (RULE_3_LOST_a_b_) at (843.37bp,77.0bp) [draw=black,fill=succcolor,rectangle,label=above:{\small $(\cdot)$}] {$\boldsymbol{r_3(a,a)}$};
  \node (RULE_4_LOST_b_b_) at (1283.8bp,18.0bp) [draw=black,fill=uncolor,rectangle,label=below:{\small $(\cdot)$}] {$\boldsymbol{r_4(b,a)}$};
  \node (TUPLE_QPHIONE_LOST_a_) at (694.83bp,77.0bp) [draw=black,fill=succcolor,ellipse,label=above:{\small $(+)$}] {$\boldsymbol{Q_{\varphi_1}(a)}$};
  \node (GOAL_4_0_LOST_b_b_) at (1403.3bp,18.0bp) [draw=black,rectangle,rounded corners=.15cm,fill=uncolor,label=below:{\small $(y)$}] {$\boldsymbol{g_{4}^{1}(b,a)}$};
  
  \node (RULE_2_LOST_1_b_) at (463.78bp,18.0bp) [draw=black,fill=uncolor,rectangle,label=below:{\small $(\cdot)$}] {$\boldsymbol{r_2(b)}$};
  \node (GOAL_2_2_LOST_a_) at (561.28bp,77.0bp) [draw=black,rectangle,rounded corners=.15cm,fill=failcolor,label=above:{\small $(+)$}, text=white] {$\boldsymbol{g_{2}^{2}(a)}$};
  \node (RULE_2_LOST_1_a_) at (463.78bp,77.0bp) [draw=black,fill=failcolor,rectangle,label=above:{\small $(\cdot)$}, text=white] {$\boldsymbol{r_2(a)}$};
  \node (GOAL_2_2_LOST_b_) at (561.28bp,18.0bp) [draw=black,rectangle,rounded corners=.15cm,fill=uncolor,label=below:{\small $(+)$}] {$\boldsymbol{g_{2}^{2}(b)}$};
  \node (TUPLE_QPHIPR_LOST_1_) at (350.74bp,45.0bp) [draw=black,fill=uncolor,ellipse,label=above:{\small $(\cdot)$}] {$\boldsymbol{Q_{\varphi'}()}$};

  \node (GOAL_2_1_LOST_b_) at (561.28bp,-65.0bp) [draw=black,rectangle,rounded corners=.15cm,fill=succcolor,label=below:{\small $(+)$}] {$\boldsymbol{g_{1}^{2}(b)}$};  

  \draw [->] (GOAL_3_1_LOST_a_b_) -- (TUPLE_QPHITWO_LOST_a_b_);
  \draw [->] (RULE_2_LOST_1_a_) -- (GOAL_2_2_LOST_a_);
  \draw [->] (RULE_4_LOST_a_b_) -- (GOAL_4_0_LOST_a_b_);
  \draw [->] (RULE_3_LOST_b_b_) -- (GOAL_3_1_LOST_b_b_);
\draw [->] (RULE_3_LOST_b_b_) -- (GOAL_3_2_LOST_b_);
  
  \draw [->] (GOAL_4_0_LOST_b_b_) -- (TUPLE_R_LOST_b_b_);
  \draw [->] (GOAL_2_2_LOST_a_) -- (TUPLE_QPHIONE_LOST_a_);
  \draw [->] (RULE_2_LOST_1_b_) -- (GOAL_2_2_LOST_b_);
  \draw [->] (TUPLE_QPHITWO_LOST_a_b_) -- (RULE_4_LOST_a_b_);
  \draw [->] (RULE_3_LOST_a_b_) -- (GOAL_3_1_LOST_a_b_);
\draw [->] (RULE_3_LOST_a_b_) -- (GOAL_3_2_LOST_a_);
  
  \draw [->] (TUPLE_QPHIPR_LOST_1_) -- (RULE_2_LOST_1_b_);
  \draw [->] (TUPLE_QPHITWO_LOST_b_b_) -- (RULE_4_LOST_b_b_);
  \draw [->] (TUPLE_QPHIPR_LOST_1_) -- (RULE_2_LOST_1_a_);
  \draw [->] (TUPLE_QPHIONE_LOST_b_) -- (RULE_3_LOST_b_b_);
  \draw [->] (GOAL_1_1_WON_1_) -- (TUPLE_QPHIPR_LOST_1_);
  \draw [->] (TUPLE_QPHI_WON_1_) -- (RULE_1_WON_1_);
  \draw [->] (GOAL_2_2_LOST_b_) -- (TUPLE_QPHIONE_LOST_b_);
  \draw [->] (RULE_1_WON_1_) -- (GOAL_1_1_WON_1_);
  \draw [->] (GOAL_3_1_LOST_b_b_) -- (TUPLE_QPHITWO_LOST_b_b_);
\draw [->] (GOAL_3_2_LOST_b_) -- (DOM_b_);
\draw [->] (GOAL_3_2_LOST_a_) -- (DOM_a_);

  \draw [->] (TUPLE_QPHIONE_LOST_a_) -- (RULE_3_LOST_a_b_);
  \draw [->] (RULE_4_LOST_b_b_) -- (GOAL_4_0_LOST_b_b_);
  \draw [->] (GOAL_4_0_LOST_a_b_) -- (TUPLE_R_LOST_a_b_);

  \draw [->] (RULE_2_LOST_1_b_) -- (GOAL_2_1_LOST_b_);
  \draw [->] (GOAL_2_1_LOST_b_) to [out=330, in=200] (DOM_b_);
\end{tikzpicture}

 }
  $\,$\\[-6mm]
  \caption{Provenance graph for query $\formQ{}$ based on $\aForm \defas \forall x\, \exists y\, \rel{R}(x,y)$ when $R(a,a)$ is true ($\kInter(\rel{R}(a,a)) = x$ and $\kInter(\dlNeg \rel{R}(a,a)) = 0$) and $R(b,a)$ is left undetermined ($\kInter(\rel{R}(b,a)) = y$ and $\kInter(\dlNeg \rel{R}(b,a)) = \dualOf{y}$). Note that $\aForm_1 \defas \exists y\, R(x,y)$ and $\aForm_2 \defas R(x,y)$. The result of $\provToFO$ shown besides the nodes encodes the dual polynominal $\kInter(\aForm) = x \cdot y$.}\label{fig:exam-form-pgraph}
\end{figure*}

\subsection{From First-order Formulas to Datalog}\label{sec:from-first-order}

We now present a translation $\formToQ$ from FO formulas $\aForm$ to boolean Datalog queries $\formQ{}$. The query generated based on a formula $\aForm$ is equivalent to the formula in the following sense: if $\aForm$ evaluates to true for a model, then $\formQ{}(I)$ returns true. Here, $I$ is the instance that contains precisely the tuples corresponding to literals that are true in the model.
We assume that the free variables of a formula (the variables not bound by any quantifier) are distinct from variable names bound by quantifiers and that no two quantifiers bind a variable of the same name.
This can be achieved by renaming variables in a formula that does not fulfill this condition. For example,
$(\forall x\, R (x,y)) \wedge (\exists y\, S(y))$ does not fulfill this condition, but the equivalent formula $(\forall x\, R(x,y)) \wedge (\exists z\, S(z))$ does.
We also assume an arbitrary, but fixed, total order $\varOrder$ over variables that appear in formulas.
We  use $\freeOf{\aForm}$ to denote the list of free variables of a formula $\aForm$ ordered increasingly by $\varOrder$.  For instance, for $\aForm \defas \forall x: R(x,y)$ we have $\freeOf{\aForm} = \{y\}$. Our translation $\formToQ$ takes as input a formula $\aForm$ and outputs a Datalog program with an answer predicate $\formQ{}$.
The translation rules are shown in Fig.\,\ref{fig:translation-formula-to-query}. Each rule translates one construct (e.g., a quantifier) and outputs one or more Datalog rules. The Datalog program generated by the translation for an input $\aForm$ is the set of Datalog rules generated by applying the rules from Fig.\,\ref{fig:translation-formula-to-query} to all sub-formulas of $\aForm$. Here, we assume the existence of a unary predicate $\domQ$ whose extension is the domain $\fDom$. Most translation rules are straightforward and standard. Logical operators are translated into their obvious counterpart in Datalog, e.g., a conjunction $\aForm_1 \wedge \aForm_2$ is translated into a rule with two body atoms $\formQ{1}$ and $\formQ{2}$. The rules generated for a formula $\aForm$ return the formula's free variables to make them available to formulas that use $\aForm$. For instance, since Datalog does not support universal quantification directly, we have to simulate it using double negation ($\forall x\, \aForm$ is rewritten as $\dlNeg \exists x\, \dlNeg \aForm$). Disjunctions are turned into unions. The complexity of the rule for disjunction stems from the fact that, in $\aForm_1 \vee \aForm_2$, the sets of free variables for $\aForm_1$ and $\aForm_2$ may not be the same. To make them union compatible, use $\freeOf{\aForm}$ as the arguments of the heads of the rules for both $\aForm_1$ and $\aForm_2$, and add additional goals $\domQ$ to ensure that these rules are safe.

\begin{Example}\label{ex:form-to-q}
  Consider a directed graph encoded as its edge relation $\rel{R}$. The formula $\aForm \defas \forall x\, \exists y\, \rel{R}(x,y)$
checks whether all nodes in the graph have outgoing edges.
  Let $\aForm_1 = \exists y\,\rel{R}(x,y)$ and $\aForm_2 = \rel{R}(x,y)$. Translating this formula,
we get:
\vspace{-2mm}
  \begin{align*}
    \formQ{}() &\dlImp \neg \formQ{{}'}() &\formQ{{}'}() &\dlImp \domQ(x), \neg \formQ{1}(x)\\[-9.5mm]
  \end{align*}
      \begin{align*}
    \formQ{1}(x) &\dlImp \domQ(y), \formQ{2}(x,y)
                                     &\formQ{2}(x,y) &\dlImp \rel{R}(x,y)
  \end{align*}
\end{Example}

\subsection{From Graphs to FO Semiring Provenance}\label{sec:from-first-order-extraction}

Given a formula $\aForm$ in negation normal form (nnf) and a $\ProvPolyDual$-interpretation $\kInter$, we now demonstrate how to extract $\kInterOf{\aForm}$ from the subgraph of the provenance graph generated based on $\kInter$ over $\formToQ(\aForm)$ rooted at the tuple node $\formQ{}(\fVal(\freeOf{\aForm}))$.
First, we apply $\formToQ(\aForm)$ to compute $\formQ{}$.
Then, we generate an instance $\Iinter$ where the existence of a tuple corresponding to a literal $\rel{R}(\valVec)$ is determined based on the truth value of this literal encoded by its annotation $\kInter(\rel{R}(\valVec))$. A tuple $\rel{R}(\valVec)$ exists in $\Iinter$ if $\kInter(\rel{R}(\valVec)) = x$ or $\kInter(\rel{R}(\valVec)) = 1$ and $\kInter(\dlNeg \rel{R}(\valVec)) = 0$, the tuple is missing if $\kInter(\dlNeg \rel{R}(\valVec)) = \dualOf{x}$ or $\kInter(\dlNeg \rel{R}(\valVec)) = 1$ and $\kInter(\rel{R}(\valVec)) = 0$, and the tuple's existence is undetermined if $\kInter(\rel{R}(\valVec)) = x$ and $\kInter(\dlNeg \rel{R}(\valVec)) = \dualOf{x}$. Note that this corresponds to the truth value according to the 5 cases we have discerned in Sec.\,\ref{sec:k-interpr-dual}. 

Next, we generate the provenance graph $\provGraph(\formQ{},\Iinter)$. If the formula has free variables, then the provenance graph will contain multiple tuple nodes $\formQ{}(\fVal(\freeOf{\aForm}))$, one for each valuation $\fVal$ of the free variables, and the subgraph rooted at one such tuple node encodes $\kInterOf{\aForm}$. By applying a function $\provToFO$ (defined in the following), we  translate the subgraph rooted at tuple $\formQ{}(\fVal(\freeOf{\aForm}))$ in $\provGraph(\formQ{},\Iinter)$ into $\kInterOf{\aForm}$.

The function $\provToFO$ replaces nodes in the provenance graph with nodes labeled as ``$+$'',  ``$\cdot$'', and annotations of literals. The polynomial $\kInterOf{\aForm}$ can then be read from the graph generated by $\provToFO$ through a top-down traversal. Intuitively, the translation can be explained as follows. Datalog rules are a conjunction of atoms and, thus, are replaced with multiplication. There may exist multiple ways that derive an IDB tuple through the rules of query. That is, IDB tuple nodes represent addition. The exception is IDB tuples that are used in a negated fashion which are replaced with multiplication, because, for the goal to succeed, all derivations of the tuple have to fail. Note that a tuple is used in a negated way if there is an odd number of negated goals on the path between the root of the provenance graph and IDB tuple node. In a program produced by our translation rules, this can only be the case for tuples that correspond to head predicate of a rule computing the $\dlNeg \exists$ part of the translation of a universal quantification.

$\provToFO$ consists of the following steps:
\begin{enumerate}
\item Replace tuple nodes $\domQ(\varVec)$ with $1$.
\item A goal node connected to an EDB tuple node representing a literal $\rel{R}(\valVec)$ is replaced by $\kInter(\rel{R}(\valVec))$ if the goal is positive and $\kInter(\dlNeg \rel{R}(\valVec))$ otherwise.
\item Next, all EDB tuple nodes are removed leaving the goal nodes formerly connected to EDB tuple nodes to be the new leaves of the graph.
\item Rule nodes are replaced with multiplication ($\cdot$).
\item Next all remaining goal nodes are replaced with addition ($+$).
\item Finally, nodes $v_t$ corresponding to IDB tuples are replaced with addition with the exception of IDB tuples corresponding to the head predicate ($\formQ{'}$) of the second rule  of a translated universal quantification  which are replaced with multiplication ($\cdot$).
\end{enumerate}

\begin{Example}
Consider the formula and query from Example~\ref{ex:form-to-q}. Assume that $\fDom = \{a,b\}$ and consider that interpretation $\kInter$ which tracks provenance for edge $(a,a)$, keeps $\rel{R}(b,a)$ undetermined, and sets all other positive literals to false without provenance tracking, i.e.,  $\kInter(\rel{R}(a,a)) \\= x$, $\kInter(\dlNeg \rel{R}(a,a)) = 0$,  $\kInter(\rel{R}(b,a)) = y$, $\kInter(\dlNeg \rel{R}(b,a)) = \dualOf{y}$,
and for all other $\rel{R}(\valVec)$ we have $\kInter(\rel{R}(\valVec)) = 0$ and $\kInter(\neg \rel{R}(\valVec)) = 1$. That is, in $\Iinter$, tuple $\rel{R}(a,a)$ exists, tuple $\rel{R}(b,a)$'s existence is undetermined, and all other tuples are missing. Fig.\,\ref{fig:exam-form-pgraph} shows the provenance graph $\provGraph(\formQ{}, \Iinter)$. The truth of the universal quantification in $\aForm$ is undetermined, because while there exists no $a \in \fDom$ such that $\neg \aForm_1$ for $\fVal \defas (x = a)$ is true (there is an outgoing edge starting at $a$), the truth of $\neg \aForm_1$ is undetermined for $\fVal \defas (x = b)$ (the existence of edge $\rel{R}(b,a)$ is undetermined). The truth of $\exists y\, R(a,y)$ and $\exists y\, \rel{R}(b,y)$ is justified by the existing tuples $\rel{R}(a,a)$ and $\rel{R}(b,a)$, respectively. Applying the translation $\provToFO$, we get graph with node labels shown in Fig.\,\ref{fig:exam-form-pgraph} which corresponds to the polynomial $1 \cdot x \cdot 1 \cdot 1 \cdot y = x \cdot y = \kInter(\aForm)$.
\end{Example}

We are now ready to state the main result of this section: our provenance graphs extended for undetermined facts can encode semiring provenance for first-order (FO) model checking. For simplicity, we only consider sentences, i.e., formulas $\aForm$ without free variables, but the result also holds for formulas with free variables by only translating a subgraph of the provenance rooted at the IDB tuple node $\formQ{}(\fVal(\freeOf{\aForm}))$ which corresponds to the formula $\fVal(\aForm)$.

\begin{Theorem}
  Let $\aForm$ be a formula, $\kInter$ a $\ProvPolyDual$\--in\-ter\-pre\-ta\-tion, $Q \defas \formToQ(\aForm)$,  and $\Iinter$ the instance corresponding to $\kInter$ as defined above. Then \vspace{-2mm}
  $$\provToFO(\provGraph(Q, \Iinter)) = \kInterOf{\aForm}$$
\end{Theorem}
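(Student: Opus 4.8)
The plan is to prove the identity by structural induction on $\aForm$, which we may assume is in negation normal form (so $\neg$ occurs only in front of literals, making the clause $\kInterOf{\neg\aForm} = \kInterOf{\nnf(\aForm)}$ vacuous), and $\formToQ$ is defined on exactly the remaining constructs of Fig.\,\ref{fig:translation-formula-to-query}. The induction hypothesis has to be strengthened to carry the labelling information along, since which rule/goal/tuple nodes occur in $\provGraph(Q,\Iinter)$ and how they are connected depends on the $\successLabel$ labels (Def.\,\ref{def:prov=graph} and the amended rules of Sec.\,\ref{sec:supp-undet-liter}). Concretely, for every subformula $\aForm'$ and valuation $\fVal$ of $\freeOf{\aForm'}$ we claim: (i) $\provToFO$ applied to the subgraph of $\provGraph(Q,\Iinter)$ rooted at $\formQ{}(\fVal(\freeOf{\aForm'}))$ equals $\kInterOfVal{\aForm'}{\fVal}$; and (ii) $\successLabel$ of that tuple node is $\redF$ iff $\kInterOfVal{\aForm'}{\fVal}=0$, is $\greenT$ iff it is a nonzero polynomial mentioning no dual variable, and is $\yellowD$ iff it mentions some $\dualOf{x}$. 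Part (ii) is anchored at literals by the construction of $\Iinter$ (a tuple is present/absent/undetermined exactly according to the five admissible annotation patterns of Sec.\,\ref{sec:k-interpr-dual}); part (i) for literals is a direct unfolding of the six steps of $\provToFO$: the rule node for $\formQ{}(\freeOf{\aForm'})\dlImp \domQ(\dots),\rel{R}(\varVec)$ becomes a $\cdot$, the $\domQ$ goals become $1$, the literal's goal node becomes $\kInter(\rel{R}(\valVec))$ (positive) or $\kInter(\dlNeg\rel{R}(\valVec))$ (negative), the EDB node is deleted, the residual single-child $+$ and $\cdot$ nodes collapse, and the single-child IDB node $\formQ{}$ becomes $+$; comparisons are handled identically except the built-in goal is simply true/false and a failed derivation leaves no rule node, giving $0$.

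For the inductive step I would treat the connectives one by one against Fig.\,\ref{fig:translation-formula-to-query}. Conjunction: the single rule becomes a $\cdot$ node over the two subgraphs rooted at $\formQ{1}$ and $\formQ{2}$, which encode $\kInterOf{\aForm_1}$ and $\kInterOf{\aForm_2}$ by IH, so $\provToFO$ yields their product. Disjunction: the two rules share the head $\formQ{}(\freeOf{\aForm})$, the $\domQ$ padding goals become $1$, the two rule nodes collapse to $\kInterOf{\aForm_1}$ and $\kInterOf{\aForm_2}$, and the shared IDB node becomes $+$ (step 6), giving the sum. Existential $\exists x\,\aForm_1$: there is one rule node per $a\in\fDom$ (each with a $\domQ(x)$ goal $\mapsto 1$), collapsing to $\kInterOfVal{\aForm_1}{\fVal[x\mapsto a]}$, and the $\formQ{}$ tuple node becomes $+$ over all of them, i.e. $\sum_{a\in\fDom}\kInterOfVal{\aForm_1}{\fVal[x\mapsto a]}$.

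The universal quantifier is the delicate case. Here $\formToQ$ produces $\formQ{}(\freeOf{\aForm})\dlImp\domQ(\dots),\dlNeg\formQ{{}'}(\freeOf{\aForm})$ and $\formQ{{}'}(\freeOf{\aForm})\dlImp\domQ(x),\domQ(\dots),\neg\formQ{1}(\freeOf{\aForm_1})$. One must use IH(ii) to argue that when $\formQ{}(\fVal(\freeOf{\aForm}))$ is a result (resp. undetermined), $\formQ{{}'}$ is $\redF$ (resp. $\yellowD$), hence by the connectivity rules of Sec.\,\ref{sec:supp-undet-liter} it is joined to a failed (resp. undetermined) rule node $r_{\formQ{{}'}}(\dots,a)$ for \emph{every} $a\in\fDom$, each reaching through the goal $\neg\formQ{1}$ the tuple node $\formQ{1}(\dots)$ under $x\mapsto a$, which in this situation is never $\redF$. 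After $\provToFO$ each $r_{\formQ{{}'}}(\dots,a)$ collapses to $\kInterOfVal{\aForm_1}{\fVal[x\mapsto a]}$, and since $\formQ{{}'}$ is the head of the inner rule of a translated universal, step 6 turns it into a $\cdot$, producing $\prod_{a\in\fDom}\kInterOfVal{\aForm_1}{\fVal[x\mapsto a]}=\kInterOf{\forall x\,\aForm_1}$; the outer rule node and the $\neg\formQ{{}'}$ goal node collapse away. This is exactly where the ``odd number of negated goals on the path'' bookkeeping of Sec.\,\ref{sec:from-first-order-extraction} is used: $\formQ{{}'}$ sits behind one negation ($\mapsto\cdot$), $\formQ{1}$ behind two ($\mapsto +$, treated positively). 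When $\formQ{}$ is \emph{absent}, $\formQ{{}'}$ is $\greenT$, so some derivation $r_{\formQ{{}'}}(\dots,a^*)$ succeeds, forcing $\formQ{1}(\dots)$ under $x\mapsto a^*$ to be $\redF$ and, by IH(ii), $\kInterOfVal{\aForm_1}{\fVal[x\mapsto a^*]}=0$; the product then contains a $0$ factor, matching $\kInterOf{\forall x\,\aForm_1}$, which is $0$ precisely when the universal is false.

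The main obstacle is this universal case together with the simultaneous maintenance of (ii): one must verify that a false $\formQ{{}'}$ has all $|\fDom|$ derivations failed, an undetermined $\formQ{{}'}$ has no successful derivation yet is connected to all of them, and a true $\formQ{{}'}$ contributes a vanishing factor, and that these match the ``$\redF\Leftrightarrow{}$annotation $=0$'' and ``$\yellowD\Leftrightarrow{}$mentions a dual variable'' equivalences which must be pushed through $\wedge,\vee,\exists$ as well. Everything else — the base cases, $\wedge$, $\vee$, $\exists$ — is routine unfolding of Fig.\,\ref{fig:translation-formula-to-query}, Def.\,\ref{def:prov=graph}, the amended labelling rules, and the six steps of $\provToFO$, using that the programs output by $\formToQ$ are non-recursive so that induction on the structure of $\aForm$ is well-founded.
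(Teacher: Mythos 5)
Your proposal matches the paper's proof: both proceed by structural induction on $\aForm$ for a fixed valuation $\fVal$, strengthen the induction hypothesis to additionally track whether the tuple node $\formQ{}(\fVal(\freeOf{\aForm}))$ is existing/absent/undetermined (you state this invariant via the annotation, the paper via the three-valued truth of $\fVal(\aForm)$ --- the same thing), and run the same case analysis over the translation rules. The universal quantifier is handled identically --- double-negation translation, step 6 of $\provToFO$ turning $\formQ{{}'}$ into multiplication, and the observation that any $a$ whose subgraph is not connected contributes a vanishing factor $\kInterOfVal{\aForm_1}{\fVal[x\mapsto a]}=0$ --- so there is no gap.
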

\begin{proof}
\ifnottechreport{We prove the theorem by induction over the structure of the input formula $\aForm$ for a given valuation $\fVal$ of $\freeOf{\aForm}$.
For the full proof, see~\cite{LL18}.}
\iftechreport{
We prove the theorem by induction over the structure of the input formula $\aForm$ for a given valuation $\fVal$ of $\freeOf{\aForm}$. Recall that in $\Iinter$ the existence of a tuple is determined by the annotations assigned by $\kInter$ to the positive and negated literals corresponding to the tuple. In the following, we will often consider the three cases separately (tuple exists, tuples is missing, tuple's existence is undetermined), because our the graph produced by our provenance model may differ based on which case applies. Furthermore, it will be beneficial to also prove that tuple $\formQ{}(\fVal(\freeOf{\aForm}))$ exists/is absent/undetermined iff $\fVal(\aForm)$ is true/false/undetermined, because it allows us to assume that the truth value of a sub-formula corresponds to $\fVal(\aForm)$. In the following we will use $\fTup{\aForm}{\fVal}$ to denote $\formQ{}(\fVal(\freeOf{\aForm}))$, i.e., the IDB tuple corresponding to applying the valuation $\fVal$ to formula $\aForm$. Furthermore, for IDB tuple $\fTup{\aForm}{\fVal}$ use $\fG{\aForm}{\fVal}$ to denote the subgraph of the provenance graph rooted at $\fTup{\aForm}{\fVal}$ and $\fP{\aForm}{\fVal}$ to denote $\provToFO(\fG{\aForm}{\fVal})$.

\myproofpar{Base case: $\aForm \defas \rel{R}(\varVec)$}
Such formulas are translated into programs with a single rule. The head predicate $\formQ{}$ for this rule has as arguments the free variables of $\varVec$ which for a literal $\rel{R}(\varVec)$ are all variables of the literal. For any valuation $\fVal$, there is one corresponding tuple $\formQ{}(\fVal(\varVec))$ in the provenance graph which as we have mentioned above we denote as $\fTup{\aForm}{\fVal}$. First consider the case that the tuple $\fTup{\aForm}{\fVal}$ exists which is the case when literal $\rel{R}(\fVal(\varVec))$ is true. Applying $\provToFO$ to the subgraph corresponding to such a tuple, the tuple node $\formQ{}(\fVal(\varVec))$ is mapped to $+$, the rule node to $\cdot$, the goal node connected to the EDB node is replaced with $\kInter(\rel{R}(\fVal(\varVec)))$. The resulting expression tree is $+(\cdot(\kInter(\rel{R}(\fVal(\varVec)))))$ which evaluates to $\kInter(\rel{R}(\fVal(\varVec))) = \kInterOf{\aForm}$. If the existence of $\rel{R}(\fVal(\varVec))$ is undetermined, then the rule deriving the result tuple $\fTup{\aForm}{\fVal}$ are also undetermined. However, the resulting expression is still the same. If tuple  $\rel{R}(\fVal(\varVec))$ is missing we still get a graph with the same structure. 

\myproofpar{Base case: $\dlNeg \rel{R}(\varVec)$} A formula
$\dlNeg \rel{R}(\varVec)$ is translated into a single rule that ``joins'' the
atom $\dlNeg \rel{R}(\varVec)$ with multiple $\domQ$ atoms (one for each
variable in $\varVec$). The arguments of the head predicate of the rule are all
variables in $\varVec$. For a given valuation $\fVal$, there exists one IDB
tuple node $\fTup{\aForm}{\fVal}$. We consider three cases based on the truth of
literal $\rel{R}(\fVal(\varVec))$. If the tuple exists then there exists one
successful binding of the rule connected through its goals to $\card{\varVec}$
copies of $\domQ$ --- one for each value in $\fVal(\varVec)$ and to a tuple node
$\rel{R}(\fVal(\varVec))$. Applying the translation and ignoring redundant
additions and multiplications (nodes with a single child) we get an expression
$\underbrace{1 \cdot \ldots \cdot 1}_{\card{\varVec}} \cdot \kInter(\dlNeg
\rel{R}(\fVal(\varVec))) = \kInter(\dlNeg \rel{R}(\fVal(\varVec))
=\kInterOf{\aForm}$. If $\rel{R}(\fVal(\varVec))$ is undetermined, then the
existence of tuple $\fTup{\aForm}{\fVal}$ and the status of the rule deriving it
are also undetermined. Thus, the structure of the graph is the same as in the
first case and the translation returns the same dual provenance polynomial and
we have $\kInter(\dlNeg \rel{R}(\fVal(\varVec))) =\kInterOf{\aForm}$.  Finally,
if literal $\rel{R}(\varVec)$ is false then tuple $\fTup{\aForm}{\fVal}$ is
absent from the query result and the derivation producing it failed. Since
failed derivations are only connected to failed goals, the expression created by
$\provToFO$ is
$\cdot ( \kInter(\dlNeg \rel{R}(\fVal(\varVec)))) = \kInter(\dlNeg
\rel{R}(\fVal(\varVec))) =\kInterOf{\aForm}$.

\myproofpar{Inductive step}
We assume that the statement holds for formulas $\aForm_1$ and/or $\aForm_2$ and show that it also holds for a formula $\aForm$.

\myproofpar{$\aForm \defas \aForm_1 \wedge \aForm_2$}
We have $\kInterOf{\aForm} = \kInterOf{\aForm_1} \cdot \kInterOf{\aForm_2}$ for valuation $\fVal$.  The program $\formToQ(\aForm)$ consists of single rule $\formQ{}(\freeOf{\aForm}) \dlImp \formQ{1}(\freeOf{\aForm_1}), \formQ{1}(\freeOf{\aForm_2})$. For a valuation $\fVal$, $\fTup{\aForm}{\fVal}$ exists if $\aForm_1 \wedge \aForm_2$ evaluates to true. In this case, $\fTup{\aForm}{\fVal}$ is connected through a rule and two goal nodes to the subgraphs  $\fG{\aForm_1}{\fVal}$ and $\fG{\aForm_2}{\fVal}$.
Let $p_i = \fP{\aForm_i}{\fVal}$  for $i \in \{1,2\}$ and $p = \fP{\aForm}{\fVal}$.
We have $p = +((+(p_1)) \cdot (+(p_2))) = p_1 \cdot p_2$. By the induction hypothesis $p_i = \kInterOf{\aForm_i}$ and we get $p = \kInterOf{\aForm_1} \cdot \kInterOf{\aForm_2} = \kInterOf{\aForm}$. Node $\fTup{\aForm}{\fVal}$ is undetermined, if one input is undetermined and the other is either undetermined or true. In this case we get the same provenance expression. Finally, $\fTup{\aForm}{\fVal}$ is missing if at least one of the inputs is missing. In this case the provenance graph only contains a subgraph for missing or undetermined inputs. That is, there are three possible provenance expressions extracted from the graph $p_1 = \kInterOf{\aForm_1}$, $p_2 = \kInterOf{\aForm_2}$, or $p_1 \cdot p_2 = \kInterOf{\aForm_1} \cdot \kInterOf{\aForm_2}$. We claim that in all three cases $p = 0 = \kInterOf{\aForm}$. Consider a failed input $\kInterOf{\aForm_i}$ for $i \in \{1,2\}$. If $\aForm_i$ evaluates to false, then $\kInterOf{\aForm_i} = 0$. Then $p_i = 0$, the subgraph $g_i$ is connected to node $\fTup{\aForm}{\fVal}$, and it follows that $p = 0$.

\myproofpar{$\aForm \defas \aForm_1 \vee \aForm_2$} We have
$\kInterOf{\aForm_1 \vee \aForm_2} = \kInterOf{\aForm_1} + \kInterOf{\aForm_2}$.
A disjunction is translated into two rules (a union) corresponding to joining
$\formQ{i}$ for $i \in \{1,2\}$ with a number of copies of $\domQ$ such that
both rules return $\freeOf{\aForm}$. Tuple $\fTup{\aForm}{\fVal}$ exists as long
as one tuple $\fTup{\aForm_i}{\fVal}$ for $i \in \{1,2\}$ exists. In this case the
tuple will be connected through successful or undetermined derivations to nodes
$\fTup{\aForm_i}{\fVal}$. Let $p = \fP{\aForm}{\fVal}$
and $p_i = \fP{\aForm_i}{\fVal}$. We have either $p = p_1$,
$p = p_2$, or $p = p_1 + p_2$. In the last case, we have
$p = \kInterOf{\aForm}$. Now consider the case where $p=p_1$ (the case for
$p=p_2$ is symmetric). Recall that $\fTup{\aForm_1}{\fVal}$ is missing iff
$\fVal(\aForm_1)$ evaluates to false in which case $\kInterOf{\aForm_1} =
0$. Then we get $p = p_2 = 0 + p_2 = \kInterOf{\aForm_1} +
\kInterOf{\aForm_2}$. Formula $\aForm_1 \vee \aForm_2$ evaluates to false or
undetermined if both $\aForm_i$ are either undetermined or false. In both cases,
the provenance graph will contain both $g_1$ and $g_2$ and
$p = \kInterOf{\aForm}$.

\myproofpar{$\aForm \defas \exists x\,\aForm_1$}
We have $\kInterOf{\aForm} = \sum_{a \in \fDom} \kInterOfVal{\aForm_1}{\fVal[x \mapsto a]}$.
There is one rule deriving $\formQ{}$ which binds $x$, the variable bound by the existential quantifier, to $\domQ$ (all values from $\fDom$) and $\aForm_1$. Tuple $\fTup{\aForm}{\fVal}$ exists iff there exists at least one $a \in \fDom$ for which $\fVal[x \mapsto a](\aForm_1)$ is true which based on the induction hypothesis implies that tuple $\fTup{\aForm_1}{\fVal[x \mapsto a]}$
exists. Recall that an existing tuple is connected to successful and undetermined rule derivations.
Let $p_a = \fP{\aForm_1}{\fVal[x \mapsto a]}$.
Since the tuple node $\fTup{\aForm}{\fVal}$ will be replaced with $+$ by $\provToFO$, the expression $p$ is a sum $p = \sum_{a \in \fDom \wedge \fVal[x \mapsto a](\freeOf{\aForm_1})} p_a$. By the induction hypothesis $p_a = \kInterOfVal{\aForm_1}{\fVal[x \mapsto a]}$. Furthermore, if $\fVal[x \mapsto a](\freeOf{\aForm_1})$ is false then $\kInterOfVal{\aForm_1}{\fVal[x \mapsto a]} = 0$. Thus, $p = \kInterOf{\aForm}$. Tuple $\fTup{\aForm}{\fVal}$ is missing if all its derivations fail and undetermined if no derivation is successful and at least one if undetermined. In both cases the provenance graph connects this tuple to all possible derivations and, thus, $p = \kInterOf{\aForm}$.

\myproofpar{$\aForm \defas \forall x\,\aForm_1$} We have
$\kInterOf{\aForm} = \prod_{a \in \fDom} \kInterOfVal{\aForm_1}{\fVal[x \mapsto
  a]}$.  The translation uses a standard way of encoding universal
quantification in Datalog as double negation
($\dlNeg \exists x\, \dlNeg \aForm_1$). Note that there is a single rule
deriving $\formQ{}$ with a single negated atom $\formQ{'}$ in its body.  Let
$t' = \formQ{'}(\fVal(\freeOf{\aForm}))$, $g'$ be the subgraph rooted at $t'$,
and $p' = \provToFO(g')$.  Furthermore, let
$t_a = \fTup{\aForm_1}{\fVal[x \mapsto a]}$ and
$p_a = \fP{\aForm_1}{\fVal[x \mapsto a]}$.  Based on step 6 of the translation,
$t'$ is replaced with $\cdot$ and we get an expression $p' = \prod_{a \in A'} p_a$
for some $A' \subseteq \fDom$. Now the existence of $t'$ and of the individual
$t_a$ determine the set $A'$ (i.e., all $t_a$ that are connected to $t'$ in the
provenance graph). Tuple $\fTup{\aForm}{\fVal}$ exists (is missing, is
undetermined) if $t'$ is missing (exists, is undetermined). We have
$\fP{\aForm}{\fVal} = p'$. Now, the body of the rule deriving $\formQ{'}$
consists of one negated atom $\dlNeg \formQ{1}(\freeOf{\aForm_1})$. Thus, $t'$
exists if for at least one $a \in \fDom$ we have that
$\fVal[x \mapsto a](\aForm_1)$ is false, i.e., the universal quantification
evaluates to false. In this case, $t'$ is connected to all tuples $t_a$ which are
existing or undetermined. If we can prove that for every $a \in (A - A')$ we
have $\kInterOfVal{\aForm_1}{\fVal[x \mapsto a]} = 0$, then
$\kInterOf{\aForm} = p' = p$. Since for every $a \in (A - A')$ we know that
$t_a$ is missing, by the induction hypothesis we have
$\kInterOfVal{\aForm_1}{\fVal[x \mapsto a]} = 0$. If $t'$ is missing, then all
its derivations are failed and all tuples $t_a$ exist. In this case $A = A'$ and
$\kInterOf{\aForm} = p' = p$.  Finally, $t'$ is undetermined if all its
derivations are either failed or undetermined and at least one derivation is
undetermined. This is the case when all tuples $t_a$ are either existing or
undetermined and at least one $t_a$ is undetermined. In this, case $t'$ is still
connected to all tuples $t_a$ and we have again $A=A'$.

\myproofpar{$\aForm \defas x \formOp y$}
The translation creates a single rule consisting of the comparison and atoms $\domQ(x)$ and $\domQ(y)$ which ensure safety. There is a single rule derivation for each valuation $\fVal$ that assigns constants to $x$ and $y$. We have $\kInterOf{x \formOp y} = 1$ if $\fVal(x) \formOp \fVal(y)$ and $\kInterOf{x \formOp y} = 0$ otherwise.
Tuple $t = \formQ{}(\fVal(x), \fVal(y))$ exists if $\fVal(x) \formOp \fVal(y)$.
In this case $\fP{\aForm}{\fVal} = 1 \cdot 1 \cdot 1 = 1 = \kInterOf{x \formOp y}$. Otherwise, $\fP{\aForm}{\fVal} = 1 \cdot 1 \cdot 0 = \kInterOf{x \formOp y}$.
}
\end{proof}

 \section{Computing Explanations} \label{sec:compute-gp}

We now present our approach for computing explanations using Datalog.  Our approach generates a Datalog program $\GPProg{P}{\aProvQ}{}$ by rewriting a given query (input program) $P$
to return the edge relation of the explanation
$\explainq(P,\aProvQ,I)$
for a provenance question (\provQ{}) $\aProvQ$.
Recall that a \provQ{} is a pattern describing existing/missing outputs of interest and that an explanation for a \provQ{} is a subgraph of the provenance which contains the provenance of all tuples described by the pattern.

Our approach for computing $\GPProg{P}{\aProvQ}{}$ consists of the following steps that we describe in detail in the following subsections: 1) we unify the input program $P$ with the \provQ{} $\aProvQ$ by propagating constants from $\aProvQ$ top-down to prune derivations of outputs that do not match the \provQ{}; 2) we determine for each IDB predicate whether the explanation may contain existing, missing, or both types of tuples from this predicate. Similarly, for each rule we determine whether successful, failed, all, or no derivations of this rule may occur in the provenance graph; 3) based on restricted and annotated version of the input program produced by the first two steps, we then generate \textit{firing rules} which capture the variable bindings of successful and failed derivations of the input program's rules;
4) The result of the firing rules is a superset of the set of relevant provenance fragments. We introduce additional rules that enforce connectivity to remove spurious fragments;
5) finally, we create rules that generate the edge relation of the explanation. This is the only step that depends on what provenance type (e.g., Fig.\,\ref{fig:mig-exam-why}) is requested.

In the following, we will illustrate our approach using the
provenance question $\aProvQ_{n,s} = \whyq \rel{Q}(n,s)$ from Example\,\ref{ex:example1}, i.e.,
why New York is connected to Seattle via train with one intermediate stop, but there is no direct connection.

\subsection{Unifying the Program with the PQ}
\label{sec:unify-program-with}

The node $\rel{Q}(n,s)$ in the provenance graph (Fig.\,\ref{fig:exam-pg-why-NY-seattle})
is only connected to derivations which return
$\rel{Q}(n,s)$. For instance, if variable $X$ is bound to another city $x$ (e.g., Chicago) in a
derivation of the rule $r_1$, then this rule cannot return the tuple $(n,s)$.
This reasoning can be applied recursively to replace variables in rules with
constants. That is, we unify the rules in the program top-down with the \provQ{}. This process corresponds to selection push-down for relational algebra expressions.  We may create multiple partially
unified versions of a rule or predicate.
For example, to explore successful derivations of $\rel{Q}(n,s)$, we are interested in both train connections from New York to some city ($\rel{T}(n,Z)$) and from any city to Seattle ($\rel{T}(Z,s)$). Furthermore, we need to know whether there is a direct connection from New York to Seattle ($\rel{T}(n,s)$).
We store variable bindings as superscripts to distinguish multiple copies of a rule generated based on different bindings. 

\begin{Example}
Given the question $\aProvQ_{n,s}$, we unify the single rule
$r_1$ using the assignment $(X{=}n,Y{=}s)$: \vspace{-2mm}
$$r_1^{(X=n,Y=s)}:  \rel{Q}(n,s) \dlImp \rel{T}(n,Z), \rel{T}(Z,s), \dlNeg \rel{T}(n,s)$$
\end{Example}

This approach is correct because if we bind a variable in the head of rule, then only rule derivations that agree with this binding can derive tuples that agree with this binding.
Based on this unification step, we know which bindings may produce fragments of $\provGraph(P,I)$ that are relevant
for explaining  the \provQ{} (the pseudocode for the algorithm is presented
\ifnottechreport{in~\cite{LS16a}}\iftechreport{in Algorithm\,\ref{alg:unify-program}}).
For an input $P$, we use $\unProg{P}$ to denote the result of this unification. \iftechreport{
\begin{algorithm}[t]
{\small
  \begin{algorithmic}[1]
    \Procedure{UnifyProgram}{$P$, $\aProvQ)$}
      \State $Q(t) \gets \qPattern(\aProvQ)$
      \State $todo \gets [Q(t)]$
      \State $done \gets \{\}$
      \State $\unProg{P} = []$
      \While {$todo \neq []$}
        \State $a \gets \Call{pop}{todo}$
        \State $\Call{insert}{done,a}$ 
        \State $rules \gets \Call{getRulesForAtom}{P,a}$
        \ForAll {$r \in rules$}
          \State $unRule \gets \Call{unifyRule}{r,a}$
          \State $\unProg{P} \gets \unProg{P} \listconcat unRule$
          \ForAll {$g \in \bodyOf{unRule}$}
            \If {$g \not\in done$}
              $todo \gets todo \listconcat g$
            \EndIf
          \EndFor
        \EndFor
      \EndWhile
      \State \Return $\unProg{P}$
    \EndProcedure
  \end{algorithmic}
}
  \caption{Unify Program With $\provQ$}\label{alg:unify-program}
\end{algorithm}

 }
\subsection{Add Annotations based on Success/Failure}\label{sec:part-inst-game}

For $\whyq Q(t)$ ($\whynotq Q(t)$), we are only interested in subgraphs of the provenance rooted at existing (missing) tuple nodes for $\rel{Q}$.
With this information, we can infer restrictions for the success/failure state of nodes in
the provenance graph
that are directly or indirectly connected to \provQ{} node(s) (belong to the explanation).
We store these restrictions as annotations $\greenT$, $\redF$, and $\redF/\greenT$
on heads and goals of rules and use these annotations to guide the generation of rules that capture derivations in step 3.
Here, $\greenT$ ($\redF$) indicates that we are only interested in successful (failed) nodes, and $\redF/\greenT$ that we are interested in both. 

\begin{Example} \label{ex:example3}
Continuing with our running example question $\aProvQ_{n,s}$, we know that $\rel{Q}(n,s)$  is in the result (Fig.\,\ref{fig:running-example-db}). This implies that only successful rule nodes and their successful goal nodes can be connected to this tuple node.
Note that this annotation
only indicates that it is sufficient to focus on successful rule derivations
since failed ones cannot be connected to $\rel{Q}(n,s)$. \begin{align*}
r_1^{(X=n,Y=s), \greenT}: \rel{Q}(n,s)^{\greenT} \dlImp \rel{T}(n,Z)^{\greenT}, \rel{T}(Z,s)^{\greenT}, \dlNeg \rel{T}(n,s)^{\greenT}
\end{align*}
We now propagate the annotations of the goals in $r_1$ throughout the program.
That is, for any goal that is an IDB predicate, we propagate its annotation to the head of all rules deriving the goal's predicate
and, then, propagate these annotations  to the corresponding rule bodies.
Note that the inverted annotation is propagated for negated goals (e.g., $\dlNeg \rel{T}(n,s)^{\greenT}$). For instance, if $\rel{T}$ would be an IDB predicate, then we would  annotate the head of all rules deriving $\rel{T}(n,s)$ with $\redF$, because $\rel{Q}(n,s)$ can only exist if $\rel{T}(n,s)$ does not exist.
\end{Example}

Partially unified atoms such as $\rel{T}(n,Z)$ may occur in both negative and positive goals.
We annotate such atoms   with $\redF/\greenT$. The algorithm generating the annotation consists of the steps shown below (the pseudocode is presented
\ifnottechreport{in~\cite{LS16a}}\iftechreport{in Algorithm\,\ref{alg:annot-program}}).
We use $\adProg{P}$ to denote the result of this algorithm for $\unProg{P}$ (input to this step). \smallskip
\begin{compactenum}
\item Annotate the head of all rules deriving tuples matching
the question with $\greenT$ (why) or $\redF$ (why-not).
\item Repeat the following steps until a fixpoint is reached:
  \begin{compactenum}
  \item Propagate the annotation of a rule head to goals in the rule body as follows: propagate $\greenT$ for $\greenT$ annotated heads and $\redF/\greenT$ for $\redF$ annotated heads.
  \item For each annotated positive goal in the rule body, we propagate its annotation ($\redF$, $\greenT$, or $\redF/\greenT$) to all rules that have this atom in the head.     For negated goals, we propagate the inverted annotation (e.g., $\redF$ for $\greenT$) unless the annotation is $\redF/\greenT$ in which case we propagate $\redF/\greenT$.
  \end{compactenum}
\end{compactenum}
\smallskip
\iftechreport{
\begin{algorithm}[t]
  
{\small
  \begin{algorithmic}[1]
    \Procedure{AnnotProgram}{$\unProg{P}$, $\aProvQ$}
   \State $state \gets \qType(\aProvQ)$
    \State $Q(t) \gets \qPattern(\aProvQ)$
      \State $todo \gets [Q(t)^{state}]$
     \State $done \gets \{\}$
      \State $\adProg{P} = []$
      \While {$todo \neq []$}
        \State $a \gets \Call{pop}{todo}$
       \State $state \gets \qType(a)$
        \State $\Call{insert}{done,a}$ 
        \State $rules \gets \Call{getRulesForAtom}{\unProg{P},a}$
        \ForAll {$r \in rules$}
          \State $annotRule \gets \Call{annotRule}{r,state}$
          \State $\adProg{P} \gets \adProg{P} \listconcat annotRule$
          \ForAll {$g \in \bodyOf{annotRule}$}
              \If {state = \redF}
                  \State $state \gets \redF/\greenT$
              \EndIf
             \If {\Call{isNegated}{g}}
              \State $state \gets \Call{switchState}{state}$
              \EndIf
            \If {$g^{state} \not\in done \wedge \Call{isIDB}{g}$}
              \State $todo \gets todo \listconcat g^{state}$
            \EndIf
          \EndFor
        \EndFor
      \EndWhile
     \ForAll {$r \in \adProg{P}$}
         \If {$\qType(r) = \redF/\greenT$}
             \State $\adProg{P} \gets \Call{removeAnnotatedRules}{\adProg{P},r,\{\redF,\greenT\}}$
         \EndIf
      \EndFor
      \State \Return $\adProg{P}$
\EndProcedure
  \end{algorithmic}
}
  \caption{Success/Failure Annotations}\label{alg:annot-program}
\end{algorithm}

 }

\subsection{Creating Firing Rules}\label{sec:creat-firer-rules}

To compute the relevant subgraph of $\provGraph(P,I)$ (the explanation) for a \provQ{}, we need to determine successful and/or failed rule derivations.
Each derivation paired with the information whether it is successful
over the given database (and which goals are failed in case it is not
successful) is sufficient for generating a fragment of $\provGraph(P,I)$.
Successful derivations are always part of $\provGraph(P,I)$ for a given query (input program) $P$ whereas failed
rule derivations only appear
if the tuple in the head failed,
i.e., there are no successful derivations of any rule with
this head. To capture the variable bindings of successful/failed rule derivations, we create
\textit{``firing rules''}.
For successful rule derivations, a firing rule consists of the body of the rule
(but using the firing version of each predicate in the body) and
a new head predicate that contains all variables used in the rule.
In this way, the firing rule captures all the variable bindings of a rule derivation.
\iftechreport{
The rationale behind this is that rule derivations apply a valuation to assign constants  to the variables of a rule. A rule derivation is successful if after applying the valuation to the body, all goals are successful. For any successful rule derivation, the result of applying the valuation to the head is part of the result of the program containing the rule. Thus, if for a rule $r$ we create a firing rule that has the same body as $r$, but has all body variables in the head, then the IDB predicate computed by the firing rule stores all successful derivations of $r$.
}
Furthermore, for each IDB predicate $R$ that occurs as a head of a rule $r$, we create a firing rule that has the firing version of predicate $R$ in the head and firing version of the rules $r$ deriving the predicate in the body.
\iftechreport{
These rules capture existing IDB tuples.
}
For EDB predicates, we create firing rules that have the firing version of the predicate in the head and the EDB predicate in the body.
\iftechreport{
These rules create copies of EDB relations. Note that for the positive case these rules are not strictly necessary, but as we will show they are necessary once negation and missing answers are introduced.
}
\begin{figure}[t]\centering
 $\,$\\[-4mm]
  \begin{minipage}{.78\linewidth}
   \centering
    \begin{align*}
      \fire{Q}{}{\greenT}(n,s) &\dlImp \fire{r_1}{Q}{\greenT}(n,s,Z)\\[1mm]
\fire{r_1}{Q}{\greenT}(n,s,Z) &\dlImp \fire{T}{}{\greenT}(n,Z), \fire{T}{}{\greenT}(Z,s), \fire{T}{}{\redF}(n,s)\\[1mm]
\fire{T}{}{\greenT}(n,Z) &\dlImp \rel{T}(n,Z)\\
\fire{T}{}{\greenT}(Z,s) &\dlImp \rel{T}(Z,s)\\
\fire{T}{}{\redF}(n,s) &\dlImp \dlNeg \rel{T}(n,s)
    \end{align*}\\[-6mm]
    \caption{Example firing rules for $\whyq \rel{Q}(n,s)$}\label{fig:exam-fire-rules}
  \end{minipage}

\end{figure}

\begin{Example}\label{ex:example4}
Consider the annotated program in Example\,\ref{ex:example3} for the question $\aProvQ_{n,s} = \whyq \rel{Q}(n,s)$.
We generate the firing rules shown in Fig.\,\ref{fig:exam-fire-rules}.
The firing rule for $r_1^{(X=n,Y=s), \greenT}$ (the second rule from the top) is derived
from the rule $r_1$ by adding $Z$
(the only existential variable) to the head,
renaming the head predicate as $\fire{r_1}{Q}{\greenT}$,
and replacing each goal with its firing version
(e.g., $\fire{T}{}{\greenT}$ for the two positive goals and $\fire{T}{}{\redF}$ for the negated goal).
Note that negated goals are replaced with firing rules that have inverted annotations
(e.g., the goal $\dlNeg \rel{T}(n,s)^{\greenT}$ is replaced with $\fire{T}{}{\redF}(n,s)$).
\iftechreport{In the following, we will define rules with annotation $\redF$ to capture missing tuples of an EDB or IDB predicate and failed rule derivations.}
Furthermore, we introduce firing rules for EDB tuples (three rules at the bottom). \iftechreport{
  Evaluated over the example instance from Fig.\,\ref{fig:running-example-db} these rules produce the following instance:
  \begin{align*}
   & \fire{Q}{}{\greenT}(n,s) \\[3mm]
& \fire{r_1}{Q}{\greenT}(n,s,c) && \fire{r_1}{Q}{\greenT}(n,s,w) \\[3mm]
&\fire{T}{}{\greenT}(n,c) &&\fire{T}{}{\greenT}(n,w) \\
& \fire{T}{}{\greenT}(c,s) && \fire{T}{}{\greenT}(s,s) && \fire{T}{}{\greenT}(w,s)\\[3mm]
&\fire{T}{}{\redF}(n,s)
  \end{align*}
}
\end{Example}

We, now, extend firing rules to support queries with negation and capture missing answers.
To construct a $\provGraph(P,I)$ fragment corresponding to a missing tuple,
we need to find failed rule derivations with the tuple in the head
and ensure that no successful derivations  with this head exist (otherwise, we may capture irrelevant failed derivations of existing tuples).
In addition, we need to determine which goals are failed for a failed rule derivation because only failed goals are connected to the node representing the failed rule derivation in the
provenance graph.
To capture this information, we add additional boolean variables --- $V_i$ for goal $g^i$ --- to the head of a firing rule that record for each goal whether it failed or not.
The body of a firing rule for failed rule derivations is created by
replacing every goal in the body with its $\redF/\greenT$ firing version, and
adding the firing version of the negated head to the body (to ensure that only bindings for missing tuples are captured). Firing rules capturing failed derivations use the $\redF/\greenT$ firing versions of their goals because not all goals of a failed derivation have to be failed and the failure status determines whether the corresponding goal node is part of the explanation.
\iftechreport{
  A $\redF/\greenT$ firing rule for a predicate $\rel{R}$ captures all tuples in $\tupDom(\rel{R})$ no matter whether they exist or not. An additional boolean attribute is used to store for each such tuple whether it exists or not.
  \begin{Example}
    Consider an EDB relation $\rel{R(A,B)}$, domain assignment $\domA(\rel{R.A}) = \domA(\rel{R.B}) = \{a,b\}$ and instance $\{\rel{R}(a,a), \rel{R}(b,b)\}$. The firing rules for $\fire{R}{}{\redF/\greenT}$ using the queries $\domA_{\rel{R.A}}$ and $\domA_{\rel{R.B}}$ provided by the user to compute the domain assignment are:
    \begin{align*}
      \fire{R}{}{\redF/\greenT}(X,Y,\boolT) &\dlImp \fire{R}{}{\greenT}(X,Y)\\
\fire{R}{}{\redF/\greenT}(X,Y,\boolF) &\dlImp \fire{R}{}{\redF}(X,Y)\\
\fire{R}{}{\greenT}(X,Y) &\dlImp \rel{R}(X,Y)\\
\fire{R}{}{\redF}(X,Y) &\dlImp \domA_{\rel{R.A}}(X), \domA_{\rel{R.B}}(X), \dlNeg \rel{R}(X,Y)
    \end{align*}
The third rule computes existing tuples creating a copy of relation $\rel{R}$. The fourth rule, enumerates all tuples in $\tupDom(\rel{R})$ using the domain assignment and only returns tuples that do not exist. The first and the second rule then combine the results of the last two rules and store whether a tuple exists as a boolean value in an additional attribute.
Evaluating these rules we generate the following instance:
    \begin{align*}
      &\fire{R}{}{\redF/\greenT}(a,a,\boolT) &      &\fire{R}{}{\redF/\greenT}(b,b,\boolT)\\
            &\fire{R}{}{\redF/\greenT}(a,b,\boolF) &      &\fire{R}{}{\redF/\greenT}(b,a,\boolF)
    \end{align*}

  \end{Example}

}
A firing rule capturing missing tuples may not be safe, i.e., it may contain variables that only occur in negated goals. These variables should be restricted to the associated domains for the attributes
the variables are bound to. Recall that associated domain $\domA(\rel{R.A})$ for an attribute $\rel{R.A}$ is given as an unary query $\domA_\rel{R.A}$. We use these queries in firing rules to restrict the values
a variable is bound to. Thus, we ensure that only missing answers formed from the associated domains are considered and that firing rules are safe.

\begin{figure}[t]\centering
 $\,$\\[-4mm]
\begin{minipage}{1\linewidth}
    \begin{align*}
\fire{Q}{}{\redF}(s,n) &\dlImp \dlNeg \fire{Q}{}{\greenT}(s,n)\\
\fire{Q}{}{\greenT}(s,n) &\dlImp \fire{r_1}{Q}{\greenT}(s,n,Z)\\[1mm]
\fire{r_1}{Q}{\redF}(s,n,Z,V_1,V_2, \dlNeg V_3) &\dlImp \fire{Q}{}{\redF}(s,n), \fire{T}{}{\redF/\greenT}(s,Z,V_1),\\
                                             &\hspace{5mm}\fire{T}{}{\redF/\greenT}(Z,n,V_2),  \fire{T}{}{\redF/\greenT}(s,n,V_3)\\
\fire{r_1}{Q}{\greenT}(s,n,Z) &\dlImp \fire{T}{}{\greenT}(s,Z), \fire{T}{}{\greenT}(Z,n),					\fire{T}{}{\redF}(s,n)\\[1mm]
\fire{T}{}{\redF/\greenT}(s,Z,\boolT) &\dlImp \fire{T}{}{\greenT}(s,Z)\\
\fire{T}{}{\redF/\greenT}(s,Z,\boolF) &\dlImp \fire{T}{}{\redF}(s,Z)\\
\fire{T}{}{\greenT}(s,Z) &\dlImp \rel{T}(s,Z)\\
\fire{T}{}{\redF}(s,Z) &\dlImp \domA_{\rel{T.toCity}}(Z), \dlNeg \rel{T}(s,Z)
    \end{align*}\\[-12mm]
    \caption{Example firing rules for $\whynotq \rel{Q}(s,n)$}\label{fig:exam-fire-negated}
  \end{minipage}

\end{figure}

\begin{Example}\label{ex:example5}
Consider the question $\whynotq \rel{Q}(s,n)$  from Example\,\ref{ex:example1}.
The firing rules generated for this question are in Fig.\,\ref{fig:exam-fire-negated}.
We exclude the rules for the second goal
$\rel{T}(Z,n)$ and the negated goal $\dlNeg \rel{T}(s,n)$ which are analogous to the rules
for the first goal $\rel{T}(s,Z)$.
New York cannot be reached from Seattle with exactly one transfer, i.e.,  $\rel{Q}(s,n)$ is not in the result. Thus, we are only interested in failed derivations of rule $r_1$ with $X{=}s$ and $Y{=}n$.
Furthermore, each rule node in the provenance graph corresponding to such a derivation
will only be connected to failed subgoals.
Thus, we need to capture which goals are successful or failed
for each such failed derivation.
We model this using boolean variables $V_1$, $V_2$, and $V_3$
(one for each goal) that are set to true iff the tuple corresponding to the goal  exists.
The firing version $\fire{r_1}{Q}{\redF}(s,n,Z,V_1,V_2,\dlNeg V_3)$ of $r_1$  returns
all variable bindings for derivations of $r_1$ such that $\rel{Q}(s,n)$ is the head
(i.e., guaranteed by adding $\fire{Q}{}{\redF}(s,n)$ to the body),
the rule derivations are failed, and the tuple corresponding to the $i^{th}$ goal exists for this binding iff $V_i$ is true. The failure status of the $i^{th}$ goal is $V_i$  for positive goals and  $\neg V_i$ for negated goals. To produce all these bindings, we need rules capturing successful and failed tuple nodes
for each subgoal of the rule $r_1$.
We annotate such rules with $\redF/\greenT$ and use a boolean variable (true or false) to record
whether a tuple exists
(e.g., $\fire{T}{}{\redF/\greenT}(s,Z,\boolT) \dlImp \fire{T}{}{\greenT}(s,Z)$ is one of these rules).
Similarly, $\fire{T}{}{\redF/\greenT}(s,n,false)$
represents the fact that tuple $\rel{T}(s,n)$ (connection from Seattle to New York)
is missing. This causes the third goal of $r_1$ to succeed for any derivation where $X{=}s$ and $Y{=}n$.
For each unified EDB atom annotated with $\redF/\greenT$, we create four rules: one for existing tuples
(e.g., $\fire{T}{}{\greenT}(s,Z) \dlImp \rel{T}(s,Z)$),
 one for the failure case (e.g., $\fire{T}{}{\redF}(s,Z) \dlImp \domA_{\rel{T.toCity}}(Z), \dlNeg \rel{T}(s,Z)$), and two for the $\redF/\greenT$ version.
For the failure case,
we use predicate $\domA_{\rel{T.toCity}}$ to only consider missing tuples $(s,Z)$ where $Z$ is a value from the associated domain. \end{Example}

Algorithm\,\ref{alg:firing-rules} takes as input the program $\adProg{P}$ produced by step 2  and outputs a program $\fireProg{P}$ containing firing rules.  The pseudocode for the subprocedures is presented
\ifnottechreport{in~\cite{LS16a}}\iftechreport{in Algorithm\,\ref{alg:firing-rules-sub}}. The algorithm maintains a queue $todo$ of annotated atoms that have to be processed which is initialized with $\qPattern(\aProvQ)$, i.e., the provenance question atom. Furthermore, we maintain a set $done$ of atoms that have been processed already. Variables $todo$, $done$, and $\fireProg{P}$ are global variables that are shared with the subprocedures of this algorithm.  For each atom
$R(t)^{\adornment}$  (line 8) from the queue (here $\adornment$ is the annotation of the atom, e.g., $\redF$),
we mark the atom as done (line 9). We then consider two cases: $R$ is an EDB atom or an IDB atom in which case we have to create firing rules for the predicate (relation) and the rules deriving it.
\iftechreport{
Recall that an EDB atom is a relation in the schema over the input Datalog program $P$ and
an IDB atom is the head atom of the rule(s) in $P$.
}
The firing rules for EDB predicates check whether the tuples do or do not exist.
These rules allow us to determine the success or failure of goals corresponding EDB predicates in rule derivations. For IDB predicates, we create firing rules that determine their existance based on successful or failed rule derivations captured by firing rules for the rules of the program. Consider a given program $P$ with two rules:
1) $r_1: \rel{Q}(X) \dlImp \rel{R}(X,Y), \rel{Q_1}(Y)$ and 2) $r_2: \rel{Q_1}(Y) \dlImp \rel{S}(Y,Z)$
where $\rel{R}$ and $\rel{S}$ are EDB relations and $\rel{Q}$ and $\rel{Q_1}$ are IDB predicates.
To capture provenance for the predicate $\rel{Q}(X)$,
we create firing rules for $\rel{R}$ and $\rel{S}$ to check existence or absence of tuples matching $t$ in $\rel{R}$ and $\rel{S}$.
Moreover, we also generate firing rules for rules $r_1$ and $r_2$
to explain how derivations of $\rel{Q}(X)$ through these rules have succeeded or failed. The firing rule for $r_1$ uses the firing rule for IDB predicate $\rel{Q_1}$
which in turn uses the firing rule for $r_2$ since $\headOf{r_2} = \rel{Q_1}$.
We describe these two cases in the following.

\mypartitle{EDB atoms (line 13)}
For an EDB atom $R(t)^{\greenT}$, we use procedure \textsc{createEDBFiringRule} to create one rule $\fire{R}{}{\greenT}(t) \dlImp R(t)$ that returns tuples from relation $R$ that match $t$. For missing tuples ($R(t)^{\redF}$), we extract all variables from $t$ (some arguments may be constants propagated during unification) and create a rule that returns all tuples that can be formed from values of the associated domains of the attributes these variables are bound to and do not exist in $R$. This is achieved by adding goals $\domA{(X_i)}$
as explained in Example\,\ref{ex:example5}.

\begin{algorithm}[t]
{\small
  \begin{algorithmic}[1]
    \Procedure{CreateFiringRules}{$\adProg{P}$, $\aProvQ$}
    \State $\fireProg{P} \gets []$
    \State $state \gets typeof(\aProvQ)$
    \State $Q(t) \gets \qPattern(\aProvQ)$
    \State $todo \gets [Q(t)^{state}]$
    \State $done \gets \{\}$
    \While {$todo \neq []$} \Comment{create rules for a predicate}
       \State $R(t)^{\adornment} \gets \Call{pop}{todo}$
       \State $\Call{insert}{done,R(t)^{\adornment}}$
       \If {\Call{isEDB}{$R$}}
       \State \Call{CreateEDBFiringRule}{$\fireProg{P},R(t)^{\adornment}$}
       \Else
       \State \Call{CreateIDBNegRule}{$\fireProg{P},R(t)^{\adornment}$}
       \State $rules \gets \Call{getRules}{R(t)^{\adornment}}$
       \ForAll {$r \in rules$} \Comment{create firing rule for $r$}
         \State $args \gets \argsOf{\headOf{r}}$
         \State $args \gets args \listconcat (\argsOf{\bodyOf{r}} - \argsOf{\headOf{r}})$
         \State \Call{CreateIDBPosRule}{$\fireProg{P},R(t)^{\adornment},r,args$}
       \State \Call{CreateIDBFiringRule}{$\fireProg{P},R(t)^{\adornment},r,args$}
       \EndFor
       \EndIf
    \EndWhile
    \State \Return $\fireProg{P}$
\EndProcedure
  \end{algorithmic}
}
  \caption{Create Firing Rules}\label{alg:firing-rules}
\end{algorithm}

\mypartitle{IDB atoms (lines 13-19)}
IDB atoms with $\redF$ or $\redF/\greenT$ annotations are handled in the same way as EDB atoms with these annotations.
If the atom is $R(t)^{\redF}$ (line 13), we create a rule with $\dlNeg \fire{R}{}{\greenT}(t)$ in the body using the associated domain queries to restrict variable bindings. Similarly, for $R(t)^{\redF/\greenT}$, the procedure called in line 13 adds two additional rules as shown in Fig.\,\ref{fig:exam-fire-negated} ($5^{th}$ and $6^{th}$ rule) for EDB atoms.
Both types of rules only use the positive firing version for $R(t)$ and domain predicates in their body. Thus, these rules are independent of which rules derive $R$.
Now, for any $R$, we create positive firing rules that correspond to the derivation of $R$ through one particular rule. For that, we iterate over the annotated versions of all rules deriving $R$ (lines 14+15).
For each rule $r$ with head $\rel{R}(t)$, we create a rule $\fire{R}{}{\greenT}(t) \dlImp \fire{r}{R}{\greenT}(\vec{X})$ where $\vec{X}$ is the concatenation of $t$ with all existential variables from the body of $r$.

\mypartitle{Rules (line 15-19)}
Consider a rule $r: R(t) \dlImp g_1(\vec{X_1}), \\\ldots, g_n(\vec{X_n})$.
If the head of $r$ is annotated with $\greenT$, then we create a rule with head $\fire{r}{R}{\greenT}(\vec{X})$ where $\vec{X} = \varsOf{r}$ (stored in variable $args$, lines 16+17) and the same body as $r$ except that each goal is replaced with its firing version with appropriate annotation (e.g., $\greenT$ for positive goals). For rules annotated with $\redF$ or $\redF/\greenT$,
we create one additional rule with head $\fire{r}{R}{\redF}(\vec{X},\vec{V})$ where $\vec{X}$ is defined as above, and $\vec{V}$ contains $V_i$ if the $i^{th}$ goal of $r$ is positive and $\dlNeg V_i$ otherwise. The body of this rule contains the $\redF/\greenT$ version of every goal in $r$'s body plus an additional goal $\fire{R}{}{\redF}$ to ensure that the head atom is failed. As an example for this type of rule, consider the third rule from the top in Fig.\,\ref{fig:exam-fire-negated}.

\iftechreport{
\begin{algorithm}[t]

{\small

  \begin{algorithmic}[1]
    \Procedure{CreateEDBFiringRule}{$\fireProg{P}$, $R(t)^{\adornment}$}
           \State $[X_1,\ldots,X_n] \gets \varsOf{t}$
              \State $r_{\greenT}  \gets \fire{R}{}{\greenT}(t) \dlImp R(t)$
              \State $r_{\redF}  \gets \fire{R}{}{\redF}(t) \dlImp \domA({X_1}), \ldots, \domA({X_n}), \dlNeg R(t)$
              \State $r_{\redF/\greenT-1}  \gets \fire{R}{}{\redF/\greenT}(t,\boolT) \dlImp \fire{R}{}{\greenT}(t)$
              \State $r_{\redF/\greenT-2}  \gets \fire{R}{}{\redF/\greenT}(t,\boolF) \dlImp \fire{R}{}{\redF}(t)$
           \If {$\adornment = \greenT$}
              \State $\fireProg{P} \gets \fireProg{P} \listconcat r_{\greenT}$      
           \ElsIf {$\adornment = \redF$}

              \State $\fireProg{P} \gets \fireProg{P} \listconcat r_{\redF}$      
           \Else
              \State $\fireProg{P} \gets \fireProg{P} \listconcat r_{\greenT} \listconcat r_{\redF} \listconcat r_{\redF/\greenT-1} \listconcat r_{\redF/\greenT-2}$      
           \EndIf
    \EndProcedure
  \end{algorithmic}

  \begin{algorithmic}[1]
    \Procedure{CreateIDBNegRule}{$\fireProg{P}$, $R(t)^{\adornment}$}
       \State $[X_1,\ldots,X_n] \gets \varsOf{t}$
       \If {$\adornment \neq \greenT$}   
           \State $r_{new}  \gets \fire{R}{}{\redF}(t) \dlImp \domA({X_1}), \ldots, \domA({X_n}), \dlNeg \fire{R}{}{\greenT}(t)$
           \State $\fireProg{P} \gets \fireProg{P} \listconcat r_{new}$      
       \EndIf
       \If {$\adornment = \redF/\greenT$}
           \State $r_{\greenT}  \gets \fire{R}{}{\redF/\greenT}(t,true) \dlImp \fire{R}{}{\greenT}(t)$
           \State $r_{\redF}  \gets \fire{R}{}{\redF/\greenT}(t,false) \dlImp \fire{R}{}{\redF}(t)$
           \State $\fireProg{P} \gets \fireProg{P} \listconcat r_{\greenT} \listconcat r_{\redF}$
       \EndIf
    \EndProcedure
  \end{algorithmic}

  \begin{algorithmic}[1]
    \Procedure{CreateIDBPosRule}{$\fireProg{P}$, $R(t)^{\adornment}$, $r$, $args$}
            \State $r_{pred}  \gets \fire{R}{}{\greenT}(t) \dlImp \fire{r}{R}{\greenT}(args)$
          \State $\fireProg{P} \gets \fireProg{P} \listconcat r_{pred}$      
    \EndProcedure
  \end{algorithmic}

  \begin{algorithmic}[1]
    \Procedure{CreateIDBFiringRule}{$\fireProg{P}$, $R(t)^{\adornment}$, $r$, $args$}
         \State $body_{new} \gets []$
         \ForAll {$g_i(\vec{X}) \in \bodyOf{r}$}
            \State $\adornment_{goal} \gets \greenT$
            \If {$\Call{isNegated}{g_i}$}
               \State $\adornment_{goal} \gets \redF$
            \EndIf
              \State $g_{new} \gets \fire{\predOf{g_i}}{}{\adornment_{goal}}(\vec{X})$
            \State $body_{new} \gets body_{new} \listconcat g_{new}$
            \If { $ g_i(\vec{X})^{\greenT} \not\in done \wedge \adornment = \greenT$ }
               \State $todo \gets todo :: g_i(\vec{X})^{\adornment_{goal}}$
            \EndIf
         \EndFor
       \State $r_{new} \gets \fire{r}{R}{\greenT}(args) \dlImp body_{new}$
       \State $\fireProg{P} \gets \fireProg{P} \listconcat r_{new}$ 
         \If {$\adornment \neq \greenT$}
           \State $body_{new} \gets []$
           \ForAll {$g_i(\vec{X}) \in \bodyOf{r}$}
             \State $g_{new} \gets \fire{\predOf{g_i}}{}{\redF/\greenT}(\vec{X}, V_i)$
             \State $body_{new} \gets body_{new} \listconcat g_{new}$
             \If { $ g_i(\vec{X})^{\redF/\greenT} \not\in done $ }
               \State $todo \gets todo :: g_i(\vec{X})^{\adornment_{goal}}$
            \EndIf
             \If {\Call{isNegated}{$g_i$}}
               \State $args \gets args \listconcat \neg V_i$
             \Else
               \State $args \gets args \listconcat V_i$
             \EndIf
         \EndFor
       \State $r_{new} \gets \fire{r}{R}{\adornment}(args) \dlImp body_{new}$
       \State $\fireProg{P} \gets \fireProg{P} \listconcat r_{new}$ 
         \EndIf
    \EndProcedure
  \end{algorithmic}
}
\caption{Create Firing Rules Subprocedures}\label{alg:firing-rules-sub}
\end{algorithm}

 }

\begin{Theorem}[Correctness of Firing Rules]\label{theo:alg-firingcorrect}
 Let $P$ be an input program, $r$ denote a rule of $P$ with $m$ goals, and $\fireProg{P}$ be the firing version of $P$. We use $r(t) \isSucc P(I)$ to denote that the rule derivation $r(t)$ is successful in the evaluation of program $P$ over $I$. The firing rules for $P$ correctly determine existence of tuples, successful  derivations, and failed derivations for missing answers: \vspace{-5mm}
  \begin{itemize}
  \item $\fire{R}{}{\greenT}(t) \in \fireProg{P}(I) \leftrightarrow R(t) \in P(I)$
  \item $\fire{R}{}{\redF}(t) \in \fireProg{P}(I) \leftrightarrow R(t) \not\in P(I)$
  \item $\fire{r}{}{\greenT}(t) \in \fireProg{P}(I) \leftrightarrow r(t) \isSucc P(I)$
  \item $\fire{r}{}{\redF}(t,\vec{V}) \in \fireProg{P}(I) \leftrightarrow r(t) \isFailed P(I) \wedge \headOf{r(t)} \not\in P(I)$ and for $i \in \{1,\ldots,m\}$ we have that $V_i$ is false iff $i^{th}$ goal fails in $r(t)$.
  \end{itemize}
\end{Theorem}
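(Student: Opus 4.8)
The argument is by induction on the dependency order of predicates in the input program $P$. Since $P$ is non-recursive, the EDB and IDB predicates can be ranked by dependency depth, and the firing program $\fireProg{P}$ is stratified: assign to each firing predicate $\fire{S}{}{\greenT}$ and each $\fire{r}{S}{\greenT}$ the depth of $S$ in $P$, and to $\fire{S}{}{\redF}$, $\fire{S}{}{\redF/\greenT}$, and $\fire{r}{S}{\redF}$ a strictly larger rank (depth followed by a ``$\greenT$ before $\redF$'' sub-order). One first checks that every negated body literal of a firing rule ($\dlNeg R(t)$ inside an EDB $\redF$-rule, and $\dlNeg \fire{R}{}{\greenT}(t)$ inside an IDB $\redF$-rule) refers to a strictly lower rank, so $\fireProg{P}$ has a well-defined (stratified) semantics, and we may prove the four biconditionals stratum by stratum, together with the auxiliary claim that $\fire{R}{}{\redF/\greenT}(t,b)$ holds iff ($b=\boolT$ and $R(t)\in P(I)$) or ($b=\boolF$ and $R(t)\notin P(I)$, $t$ domain grounded).

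\textbf{Base case and the positive claims.} For an EDB predicate $R$, the rule $\fire{R}{}{\greenT}(t)\dlImp R(t)$ copies $R(I)$, and $\fire{R}{}{\redF}(t)\dlImp \domA(X_1),\ldots,\domA(X_n),\dlNeg R(t)$ returns exactly the domain-grounded tuples absent from $R(I)$; the two $\redF/\greenT$ rules then attach the boolean tag, giving the first two bullets (and the auxiliary claim) for EDB predicates. For the inductive step, fix an IDB predicate $R$ at depth $d$ and assume all four statements (and the auxiliary claim) for every predicate/rule of depth $<d$, and the $\greenT$-variant of $R$ when arguing its $\redF$-variant. For a rule $r:\headOf{r}=R(\vec t)\dlImp g_1(\vec X_1),\ldots,g_m(\vec X_m)$, the firing rule $\fire{r}{R}{\greenT}(\varsOf{r})$ has body the firing versions of the $g_i$ with annotation $\greenT$ (positive goals) or $\redF$ (negated goals), all of depth $<d$; by the induction hypothesis a tuple in its head corresponds exactly to a valuation under which every literal $g_i$ is true, i.e.\ to a successful derivation $r(t)\isSucc P(I)$ --- the third bullet. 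Projecting the existential variables away, $\fire{R}{}{\greenT}(t)\dlImp\fire{r}{R}{\greenT}(\mathit{args})$ over all rules $r$ deriving $R$ then yields $\fire{R}{}{\greenT}(t)\in\fireProg{P}(I)\leftrightarrow R(t)\in P(I)$ by the standard Datalog semantics, and $\fire{R}{}{\redF}(t)\dlImp\domA(X_1),\ldots,\dlNeg\fire{R}{}{\greenT}(t)$ gives the second bullet (and, via the two $\fire{R}{}{\redF/\greenT}$ rules, the auxiliary claim for $R$).

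\textbf{The failed-derivation case.} For the fourth bullet, consider the firing rule $\fire{r}{R}{\redF}(\vec X,\vec V)$ whose body is $\fire{R}{}{\redF}(\vec t)$ together with $\fire{\predOf{g_i}}{}{\redF/\greenT}(\vec X_i,V_i)$ for each goal, and whose head vector $\vec V$ carries $V_i$ for a positive goal and $\dlNeg V_i$ for a negated goal. By the auxiliary claim for the goal predicates (depth $<d$), each $V_i$ is bound to $\boolT$ exactly when the underlying tuple of $g_i$ exists and to $\boolF$ otherwise; consequently the entry of $\vec V$ for goal $i$ equals $\boolF$ precisely when that goal \emph{fails} (positive goal with an absent tuple, or negated goal with an existing tuple). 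The body atom $\fire{R}{}{\redF}(\vec t)$, by the already-established second bullet for $R$, enforces $\headOf{r(t)}=R(t)\notin P(I)$; and since a successful derivation would place its head into $P(I)$, $\headOf{r(t)}\notin P(I)$ already implies $r(t)\isFailed P(I)$, so no extra failure condition is needed. The $\domA$ atoms folded into the $\redF$-firing rules keep every such rule safe and restrict bindings to $\tupDom$, so the biconditional is exact over domain-grounded $t$. Collecting these observations gives the fourth bullet. I expect this last case to be the main obstacle: the bookkeeping that the $V_i/\dlNeg V_i$ convention in the head really encodes per-goal failure, and the small but essential observation that ``head not in $P(I)$'' is equivalent to ``failed derivation with head not in $P(I)$'', both have to be spelled out carefully, as does the uniform handling of safety/domain-groundedness.

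\textbf{Remark.} The statement is cleanest for the \emph{complete} firing program (all predicates, all variants); the unification and success/failure-annotation steps of Section~\ref{sec:compute-gp} only delete firing rules for derivations that are incompatible with the constants of $\aProvQ$ or with the required success/failure status, which is a routine selection-pushdown argument and does not affect any derivation that is consistent with the provenance question. Hence the induction above transfers to $\fireProg{P}=\adProg{P}$'s firing version restricted to the atoms reachable from $\qPattern(\aProvQ)$.
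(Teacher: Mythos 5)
Your proposal is correct and follows essentially the same route as the paper's own proof: an induction over the dependency depth of predicates, with EDB firing rules as the base case and the correctness of lower-depth firing versions as the induction hypothesis for rule and IDB firing rules. The extra details you supply (the explicit stratification check, the auxiliary claim for the $\redF/\greenT$ predicates, and the observation that $\headOf{r(t)}\not\in P(I)$ already entails $r(t) \isFailed P(I)$) are sound refinements of the same argument rather than a different approach.
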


\begin{proof}
\ifnottechreport{
  We prove Theorem\,\ref{theo:alg-firingcorrect} by induction over the structure of a program. For the proof, see~\cite{LS17} or~\cite{LL18}.}
\iftechreport{
We prove Theorem\,\ref{theo:alg-firingcorrect} by induction over the ``depth'' of a program.
  We define the depth $\aDepth$ of predicates, rules, and programs as follows: 1) for all EDB predicates $R$, we define $\depthP{R} = 0$; 2) for an IDB predicate $R$, we define $\depthP{R} = \max_{\headOf{r} = R}\depthP{r}$, i.e., the maximal depth among all rules $r$ with $\headOf{r} = R$; 3) the depth of a rule $r$ is $\depthP{r} = \max_{R \in \bodyOf{r}} \depthP{R} + 1$, i.e., the maximal depth of all predicates in its body plus one; 4) the depth of a program $P$ is the maximum depth of its rules: $\depthP{P} = \max_{r \in P} \depthP{r}$.

\mypara{1) Base Case}
Assume that program $P$ has depth $1$,
e.g., $r: \rel{Q}(\vec{X}) \dlImp \rel{R}(\vec{X_1}), \ldots, \rel{R}(\vec{X_n})$. We first prove that firing rules for EDB atoms are correct,
because only these rules are used for the rules of depth $1$ programs.
A positive version of EDB firing rule $\fire{R}{}{\greenT}$ creates a copy of the input relation $\rel{R}$ and, thus, $\forall t: t \in R \Leftrightarrow t \in \fire{R}{}{\greenT}$.
For the negative version $\fire{R}{}{\redF}$, all variables are bound to associated domains $\domA$
and it is explicitly checked that $\dlNeg R(\vec{X})$ holds.
Finally, $\fire{R}{}{\redF/\greenT}$ uses $\fire{R}{}{\greenT}$ and $\fire{R}{}{\redF}$ to determine whether the tuple exists in \rel{R}.
Since these rules are correct, it follows that $\fire{R}{}{\redF/\greenT}$ is correct.
The positive firing rule for the rule $r$ ($\fire{r}{}{\greenT}$) is correct since its body only contains positive and negative EDB firing rules ($\fire{R}{}{\greenT}$ and $\fire{R}{}{\redF}$, respectively) which are already known to be correct.
The correctness of the positive firing version of  a rule's head predicate ($\fire{Q}{}{\greenT}$) follows naturally from the correctness of $\fire{r}{}{\greenT}$.
The negative version of the rule $\fire{r}{}{\redF}(\vec{X},\vec{V})$ contains an additional goal (i.e., $\dlNeg \rel{Q}(\vec{X})$)
and uses the firing version $\fire{R}{}{\redF/\greenT}$ to return only bindings for failed derivations.
For a head predicate with annotation $\redF$, we create two firing rules ($\fire{Q}{r}{\greenT}$ and $\fire{Q}{r}{\redF}$).
The rule $\fire{Q}{}{\greenT}$ was already proven to be correct.
$\fire{Q}{}{\redF}$ is also correct, because it contains only $\fire{Q}{}{\greenT}$ and domain queries in the body which were already proven to be correct.

\mypara{2) Inductive Step}
It remains to be shown that firing rules for programs of depth $n+1$ are correct.
Assume that firing rules for programs of depth up to $n$ are correct.
Let $r$ be a firing rule of depth $n+1$ in a program of depth $n+1$.
It follows that $\max_{R \in \bodyOf{r}} \depthP{R} \leq n$,
otherwise $r$ would be of a depth larger than $n+1$. Based on the induction hypothesis, it is guaranteed that the firing rules for all these predicates are correct.
Using the same argument as in the base case, it follows that the firing rule for $r$ is correct.
}
\end{proof}

\iftechreport{
\begin{algorithm}[t]
{\small
  \begin{algorithmic}[1]
    \Procedure{AddConnectivityRules}{$\fireProg{P}$, $\aProvQ$}
    \State $\fireCProg{P} \gets []$
    \State $Q(t) \gets \qPattern(\aProvQ)$
    \State $paths \gets \Call{pathStartingIn}{\fireProg{P}, Q(t)}$
    \ForAll {$p \in paths$}
       \ForAll {$e =(r_i(\vec{X_1})^{\adornment_1},r_j(\vec{X_2})^{\adornment_2}) \in p$}
           \State $goals \gets \Call{getMatchingGoals}{e}$
           \ForAll {$g_k \in goals$}
             \State $g_{new} \gets \Call{unifyHead}{\fire{r_i}{R_1}{\adornment_1}(t_1), g_k, \fire{r_j}{R_2}{\adornment_2}(t_2)}$
             \State $r_{new} \gets \fireC{r_j}{}{\adornment_2}{r_i^k}(t_2) \dlImp \bodyOf{\fire{r_j}{R_2}{\adornment_2}(t_2)}, g_{new}$
             \State $\fireCProg{P} \gets \fireCProg{P} \listconcat r_{new}$
           \EndFor
       \EndFor
    \EndFor
    \State \Return $\fireCProg{P}$
\EndProcedure
  \end{algorithmic}
}
  \caption{Add Connectivity Joins}\label{alg:connectivity-joins}
\end{algorithm}

 }

\begin{figure}[t]
  $\,$\\[-5mm]
  \centering
  \begin{minipage}{.8\linewidth}\centering
  \begin{align*}
\fire{Q}{}{\greenT}(n,s) &\dlImp \fire{r_1}{Q}{\greenT}(n,s,Z)\\
\fire{r_1}{Q}{\greenT}(n,s,Z) &\dlImp \fire{T}{}{\greenT}(n,Z), \fire{T}{}{\greenT}(Z,s), \fire{T}{}{\redF}(n,s)\\
\fireC{r_2}{}{\greenT}{r_1^1}(n,Z) &\dlImp \rel{T}(n,Z), \fire{r_1}{Q}{\greenT}(n,s,Z)\\
\fireC{r_2}{}{\greenT}{r_1^2}(Z,s) &\dlImp \rel{T}(Z,s), \fire{r_1}{Q}{\greenT}(n,s,Z)\\
\fireC{r_2}{}{\redF}{r_1^3}(n,s) &\dlImp \dlNeg \rel{T}(n,s), \fire{r_1}{Q}{\greenT}(n,s,Z)
  \end{align*}
  \end{minipage}
  \begin{minipage}{1\linewidth}
  \caption{Example firing rules with connectivity checks}\label{fig:exam-connect-joins}
 \end{minipage}
\end{figure}

\subsection{Connectivity Joins}\label{sec:connectivity-joins}

To be in the result of a firing rule is a necessary,
but not sufficient, condition for the corresponding rule node to be
connected to a node $Q(t') \in \qMatch(\aProvQ)$ in the explanation. Thus, we have to check connectivity of intermediate results explicitly.

\begin{Example}\label{ex:example6}
Consider the firing rules for
$\aProvQ_{n,s}$ shown in Fig.\,\ref{fig:exam-fire-rules}.
The corresponding rules with connectivity checks are shown in Fig.\,\ref{fig:exam-connect-joins}.
All rule nodes corresponding to
$\fire{r_1}{Q}{\greenT}(n,s,Z)$ are guaranteed to be connected to the node $\rel{Q}(n,s)$ (corresponding to the only atom in $\qMatch(\aProvQ_{n,s})$).
Note that connectivity joins are also required for negative firing rules
(e.g., $\fire{r_1}{Q}{\redF}(s,n,Z,V_1,V_2, \dlNeg V_3)$ in Fig.\,\ref{fig:exam-fire-negated}
is used for $\whynotq$). For sake of example, assume that instead of using $\rel{T}$, rule $r_1$ uses  an IDB relation $R$ which is computed using a rule $r_2: \rel{R}(X,Y) \dlImp \rel{T}(X,Y)$.
Consider the firing rule $\fire{r_2}{T}{\greenT}(n,Z) \dlImp \rel{T}(n,Z)$ created based on the $1^{st}$ goal of $r_1$.
Some provenance fragments computed by this rule may not be connected to $\rel{Q}(n,s)$.
A tuple node $\rel{R}(n,c)$ for a constant $c$ is only connected to the node $\rel{Q}(n,s)$ iff it is
part of a successful binding of $r_1$.
That is, for the node $\rel{R}(n,c)$, there has to exist a tuple $\rel{R}(c,s)$. Connectivity is achieved by adding the head of the firing rule for $r_1$ to the body of the firing rule for $r_2$ as shown in Fig.\,\ref{fig:exam-connect-joins}
(the $3^{rd}$ and $4^{th}$ rule).
\end{Example}

 Our algorithm
 traverses the query's rules starting from \provQ{} atom(s) to find all combinations of rules $r_i$ and $r_j$ such that the head of $r_j$ can be unified with a goal in $r_i$'s body.
 For each such pair $(r_i,r_j)$ where the head of $r_j$ corresponds to the $k^{th}$ goal in the body of $r_i$, we create a rule
 $\fireC{r_j}{}{\greenT}{r_i^k}(\vec{X})$ as follows. We unify the variables of the $k^{th}$
goal in the firing rule for $r_i$ with the head variables of the firing rule for $r_j$. All remaining variables of $r_i$ are renamed to avoid name clashes. We add the unified head of $r_i$ to the body of $r_j$.
These rules check whether rule nodes in the provenance graph are connected to nodes in $\qMatch(\aProvQ)$.

\iftechreport{
\begin{algorithm}[t]
{\small
  \begin{algorithmic}[1]
    \Procedure{CreateEdgeRelation}{$\fireCProg{P}$, $\aProvQ$}
      \State $\moveProg{P} \gets []$
      \State $Q(t) \gets \qPattern(\aProvQ)$
      \State $todo \gets [Q(t)]$
      \State $done \gets \{\}$
      \While {$todo \neq []$}
        \State $R(t)^{\adornment} \gets \Call{pop}{todo}$
        \State $done \gets \Call{insert}{done, R(t)^{\adornment}}$
        \State $rules \gets \Call{getRules}{R(t)^{\adornment}}$
        \ForAll {$r \in rules$}
          \State $args \gets \argsOf{\headOf{r}}$
          \If {$\Call{isEDB}{R}$}
            \If {$\adornment = \greenT$}
              \If {$\Call{isNegated}{g}$}
              \State $r_{g \to R} \gets \rel{edge}(\nodeSk{rel}{g}{\greenT}(t), \nodeSk{rel}{R}{\redF}(t)) \dlImp \fire{r}{R}{\greenT}(args)$
              \Else
              \State $r_{g \to R} \gets \rel{edge}(\nodeSk{rel}{g}{\greenT}(t), \nodeSk{rel}{R}{\greenT}(t)) \dlImp \fire{r}{R}{\greenT}(args)$
              \EndIf
           \Else
              \If {$\Call{isNegated}{g}$}
              \State $r_{g \to R} \gets \rel{edge}(\nodeSk{rel}{g}{\redF}(t), \nodeSk{rel}{R}{\greenT}(t)) \dlImp \fire{r}{R}{\redF}(args)$
              \Else
              \State $r_{g \to R} \gets \rel{edge}(\nodeSk{rel}{g}{\redF}(t), \nodeSk{rel}{R}{\redF}(t)) \dlImp \fire{r}{R}{\redF}(args)$
              \EndIf
	   \EndIf
           \State $\moveProg{P} \gets \moveProg{P} \listconcat r_{g \to R}$
          \Else
            \State $r_{new} \gets \rel{edge}(\nodeSk{rel}{\predOf{r}}{\adornment}(t), \nodeSk{rel}{r}{\adornment_{r}}(t,\ldots)) \dlImp \fire{r}{R}{\adornment}(t,\ldots)$
            \State $\moveProg{P} \gets \moveProg{P} \listconcat r_{new}$
            \ForAll {$g(t) \in \bodyOf{r}$}
              \If {$\Call{isNegated}{g}$}
              \State $\adornment' \gets \Call{switchState}{\adornment}$
              \Else
              \State $\adornment' \gets \adornment$
              \EndIf
              \State $todo \gets todo \listconcat g(t)^{\adornment'}$
              \If {$\adornment' = \greenT$}
              \State $r_{r \to g} \gets \rel{edge}(\nodeSk{rel}{r}{\greenT}(args), \nodeSk{rel}{g}{\greenT}(t)) \dlImp \fire{r}{R}{\greenT}(args)$
              \Else
              \State $r_{r \to g} \gets \rel{edge}(\nodeSk{rel}{r}{\redF}(args), \nodeSk{rel}{g}{\redF}(t)) \dlImp \fire{r}{R}{\redF}(args)$
              \EndIf
              \State $\moveProg{P} \gets \moveProg{P} \listconcat r_{r \to g}$             \EndFor
          \EndIf
        \EndFor
      \EndWhile
      \State \Return $\moveProg{P}$
    \EndProcedure
  \end{algorithmic}
}
  \caption{Create Edge Relation}\label{alg:create-move-relations}
\end{algorithm}

 }

\subsection{Computing the Edge Relation}\label{sec:comp-edge-relat}

The program created so far captures sufficient information for generating the edge
relation of the explanation for a $\provQ$ (which is used when rendering graphs). We make this step part of the program to offload this work to database backend.
To compute the edge relation, we use Skolem functions to create node
identifiers.
An identifier records the type of the node (tuple, rule, or
goal),  variables assignments, and the success/failure status of the
node, e.g., a tuple node $\rel{T}(n,s)$ that is successful would be represented as
$\nodeSk{rel}{T}{\greenT}(n,s)$. Each rule firing corresponds to a fragment of $\provGraph(P,I)$. For example, one such fragment is shown in Fig.\,\ref{fig:examp-graph-edge} (left).  Such a substructure is created through a set of rules: \vspace{-2mm}
\begin{itemize}\item One rule creating edges between tuple nodes for the head predicate and
      rule nodes
\item One rule for each goal connecting a rule node to       that goal node (only failed goals for
      failed rules)
\item One rule creating edges between each goal node and the corresponding EDB tuple node
\end{itemize}

\begin{figure}[t]
  \begin{minipage}{1.1\linewidth}
    $\,$\\[-6mm]
    \begin{minipage}{0.4\linewidth}
\resizebox{0.9\textwidth}{!}{\begin{tikzpicture}[>=latex',line join=bevel,line width=0.3mm]
  \definecolor{fillcolor}{rgb}{0.83,1.0,0.8};
  \definecolor{failcolor}{rgb}{0.63,0,0};

  \node (REL_Q_WON_a_c_) at (150bp,284bp) [draw=black,fill=fillcolor,ellipse] {$\boldsymbol{Q(n,s)}$};
  \node (RULE_0_LOST_a_c_b_) at (150bp,253bp) [draw=black,fill=fillcolor,rectangle] {$\boldsymbol{r_1(n,s,Z)}$};

  \node (GOAL_0_0_WON_a_b_) at (100bp,223bp) [draw=black,fill=fillcolor,rounded corners=.15cm,inner sep=3pt] {$\boldsymbol{g_{1}^{1}(n,Z)}$};
  \node (EDB_T_LOST_a_b_) at (93bp,193bp) [draw=black,fill=fillcolor,ellipse] {$\boldsymbol{T(n,Z)}$};

  \node (GOAL_0_1_WON_b_c_) at (150bp,223bp) [draw=black,fill=fillcolor,rounded corners=.15cm,inner sep=3pt] {$\boldsymbol{g_{1}^{2}(Z,s)}$};
  \node (EDB_T_LOST_b_c_) at (150bp,193bp) [draw=black,fill=fillcolor,ellipse] {$\boldsymbol{T(Z,s)}$};

  \node (GOAL_0_2_WON_a_c_) at (200bp,223bp) [draw=black,fill=fillcolor,rounded corners=.15cm,inner sep=3pt] {$\boldsymbol{g_{1}^{3}(n,s)}$};
  \node (EDB_T_LOST_a_c_) at (205bp,193bp) [draw=black,fill=failcolor,ellipse, text=white] {$\boldsymbol{T(n,s)}$};

  \draw [->] (RULE_0_LOST_a_c_b_) -> (GOAL_0_2_WON_a_c_);
  \draw [->] (GOAL_0_0_WON_a_b_) -> (EDB_T_LOST_a_b_);

  \draw [->] (RULE_0_LOST_a_c_b_) -> (GOAL_0_0_WON_a_b_);
 
  \draw [->] (GOAL_0_2_WON_a_c_) -> (EDB_T_LOST_a_c_);

  \draw[->] (GOAL_0_1_WON_b_c_) -> (EDB_T_LOST_b_c_);

  \draw [->] (REL_Q_WON_a_c_) -> (RULE_0_LOST_a_c_b_);
  \draw [->] (RULE_0_LOST_a_c_b_) -> (GOAL_0_1_WON_b_c_);
\end{tikzpicture}
 }
  \end{minipage} \hspace{-10mm}
  \begin{minipage}{0.38\linewidth}
   \small \vspace{3mm}
    \begin{align*}
      \rel{edge}(\nodeSk{rel}{Q}{\greenT}(n,s), \nodeSk{rule}{r_1}{\greenT}(n,s,Z)) &\dlImp \fire{r_1}{only2hop}{\greenT}(n,s,Z)\\       \rel{edge}(\nodeSk{rule}{r_1}{\greenT}(n,s,Z), \nodeSk{rel}{g_1^1}{\greenT}(n,Z)) &\dlImp \fire{r_1}{only2hop}{\greenT}(n,s,Z)\\       \rel{edge}(\nodeSk{rel}{g_1^1}{\greenT}(n,Z), \nodeSk{rel}{T}{\greenT}(n,Z)) &\dlImp \fire{r_1}{only2hop}{\greenT}(n,s,Z)\\       \rel{edge}(\nodeSk{rel}{g_1^3}{\greenT}(n,s), \nodeSk{rel}{T}{\redF}(n,s)) &\dlImp \fire{r_1}{only2hop}{\greenT}(n,s,Z)\\     \end{align*}  \end{minipage}
 \end{minipage}
 $\,$\\[-12mm]
 \caption{Fragment of an explanation corresponding to a derivation of rule $r_1$ (left) and the rules generating the edge relation for such a fragment (right)}\label{fig:examp-graph-edge}
\end{figure}
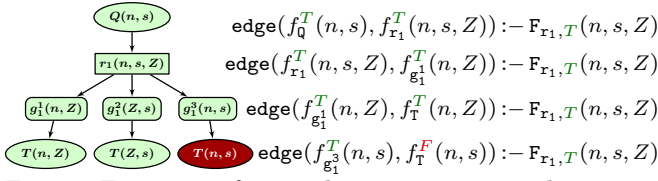

\begin{Example}
  Consider the firing rules with connectivity joins from Example\,\ref{ex:example6}.   Some of the rules for creating the edge relation of the explanation sought by
  the user are shown in Fig.\,\ref{fig:examp-graph-edge} (right).
  For example, each edge connecting the tuple node $\rel{Q}(n,s)$   to a successful rule node $r_1(n,s,Z)$ is created by the
  top-most rule,
  and the $2^{nd}$ rule creates an edge
  between   $r_1(n,s,Z)$ and   $g_1^1(n,Z)$.   Edges for failed derivations are created by considering the corresponding node identifiers   and a failure pattern (e.g., $\fire{r_1}{Q}{\redF}(s,n,Z,V_1,V_2, \dlNeg V_3)$).
\end{Example}

\iftechreport{
\begin{algorithm}[t]
{\small
  \begin{algorithmic}[1]
    \Procedure{CreateLinEdgeRelation}{$\fireCProg{P}$, $\aProvQ$}
    \State $\moveProg{P} \gets []$
    \State $Q(t) \gets \qPattern(\aProvQ)$
      \State $todo \gets [Q(t)]$
      \State $done \gets \{\}$
      \While {$todo \neq []$}
        \State $R(t)^{\adornment} \gets \Call{pop}{todo}$
        \If {$R(t)^{\adornment} \in done$}
          \State continue
        \EndIf
        \State $done \gets \Call{insert}{done, R(t)^{\adornment}}$
        \State $rules \gets \Call{getRules}{R(t)^{\adornment}}$
        \ForAll {$r \in rules$}
          \State $args \gets \argsOf{\headOf{r}}$
            \If {$\adornment = \greenT$}
              \If {$\Call{isNegated}{g}$}
              \State $r_{Q \to R} \gets \rel{edge}(\nodeSk{rel}{Q}{\greenT}(t), \nodeSk{rel}{R}{\redF}(t)) \dlImp \fire{r}{R}{\greenT}(args)$
              \Else
              \State $r_{Q \to R} \gets \rel{edge}(\nodeSk{rel}{Q}{\greenT}(t), \nodeSk{rel}{R}{\greenT}(t)) \dlImp \fire{r}{R}{\greenT}(args)$
              \EndIf
           \Else
              \If {$\Call{isNegated}{g}$}
              \State $r_{Q \to R} \gets \rel{edge}(\nodeSk{rel}{Q}{\redF}(t), \nodeSk{rel}{R}{\greenT}(t)) \dlImp \fire{r}{R}{\redF}(args)$
              \Else
              \State $r_{Q \to R} \gets \rel{edge}(\nodeSk{rel}{Q}{\redF}(t), \nodeSk{rel}{R}{\redF}(t)) \dlImp \fire{r}{R}{\redF}(args)$
              \EndIf
	   \EndIf
           \State $\moveProg{P} \gets \moveProg{P} \listconcat r_{Q \to R}$
        \EndFor
      \EndWhile
      \State \Return $\moveProg{P}$
    \EndProcedure
  \end{algorithmic}
}
  \caption{Create Lineage Edge Relation}\label{alg:create-move-relations-lin}
\end{algorithm}
 }
\subsection{$\SomeK$-Explanations}\label{sec:somek-explanations}

To compute one of the $\SomeK$-explanation types introduced in Sec.\,\ref{sec:expl-types}, we only have to
adapt the rules generating the edge relation.  As an example, we present the modifications for computing $\explainq_{\WhichProv}$ (e.g., Fig.\,\ref{fig:lin-model}). Recall that semiring $\WhichProv$ models provenance as a set of contributing tuples and we encode this as a graph by connecting a head of a rule derivation to the atoms in its body.
That is, for the $\explainq_{\WhichProv}$, we create only one type of rule that connects tuple nodes for the head predicate to EDB tuple nodes.
We use $\GPProg{P}{\aProvQ}{\WhichProv}$ to denote the program generated in this way for an input program $P$, and a \provQ{} $\aProvQ$.

\begin{Example}
Consider the graph fragment for $r_1$ in Fig.\,\ref{fig:examp-graph-edge} (left) without rule and goal nodes.
The rule that creates the edge between $\rel{Q}(n,s)$ and $\rel{T}(n,Z)$ is \vspace{-1mm}
$$\rel{edge}(\nodeSk{rel}{Q}{\greenT}(n,s), \nodeSk{rel}{T}{\greenT}(n,Z)) \dlImp \fire{r_1}{only2hop}{\greenT}(n,s,Z)$$
For each successful derivation of result $\rel{Q}(n,s)$ using rule $r_1$, a subgraph replacing $Z$ with bindings from the derivation is included in $\explainq_{\WhichProv}$.
\end{Example}

\subsection{Correctness}\label{sec:corr-proof-compl}

 We now prove that our approach is correct.

\begin{Theorem}\label{theo:alg-correctness}
Let $P$ be a program, $I$ be a instance, and $\aProvQ$ a \provQ. Program $\GPProg{P}{\aProvQ}{}$ evaluated over $I$ returns the edge relation of $\explainq(P,\aProvQ,I)$. \end{Theorem}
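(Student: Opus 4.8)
The plan is to prove correctness by composing the correctness guarantees of the five pipeline stages (unification, annotation, firing rules, connectivity joins, edge-relation construction), working outwards from the already-established Theorem~\ref{theo:alg-firingcorrect}. First I would show that the unification and annotation steps are \emph{lossless with respect to the explanation}: if $r(\vec c)$ is a domain-grounded derivation whose corresponding rule node appears in $\explainq(P,\aProvQ,I)$, then $r(\vec c)$ agrees with the constants propagated by \textsc{UnifyProgram} and its success/failure status is compatible with the annotation produced by \textsc{AnnotProgram}. The argument for unification is the one sketched in Sec.~\ref{sec:unify-program-with}: a rule node is connected (via its head tuple node) to some $Q(t') \in \qMatch(\aProvQ)$ only along paths where head variables are bound consistently with $t$, so the partially unified copies $r^\sigma$ enumerate exactly the relevant derivations. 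For annotation, I would argue by induction on the distance from the \provQ{} atom in the dependency graph that the $\greenT$/$\redF$/$\redF{/}\greenT$ label on each (unified) head/goal is an over-approximation of the success states that can actually occur in $\explainq$, using Def.~\ref{def:explanation} together with the conjunctive/disjunctive structure of the graph noted after Example~\ref{ex:example-whynot} (successful tuple nodes connect only to successful rule nodes; failed tuple nodes connect only to failed rule nodes, each connected only to its failed goals). Hence no relevant fragment is discarded and no irrelevant rule version is created.

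Next I would lift Theorem~\ref{theo:alg-firingcorrect} to show that $\fireProg{P}$ (the output of \textsc{CreateFiringRules} run on $\adProg{P}$) computes exactly the set of tuple existences, successful derivations, and failed derivations of missing tuples that can appear in $\explainq(P,\aProvQ,I)$ --- i.e.\ the four bullet-point equivalences of Theorem~\ref{theo:alg-firingcorrect}, restricted to the unified/annotated fragment, together with the fact that the boolean flags $\vec V$ correctly record which goals fail. The new content here is purely that restricting to $\adProg{P}$ rather than all of $P$ does not change the soundness direction (it only prunes) and does not change the completeness direction on the set of derivations whose rule nodes lie in $\explainq$ (by the previous paragraph, all such derivations survive pruning, and the safety-restoring $\domA_{\rel{R.A}}$ goals added to $\redF$-annotated rules are exactly the active-domain restriction from Def.~\ref{def:prov=graph} / Sec.~\ref{sec:domains}, so domain-grounded missing derivations are captured and non-domain-grounded ones are not). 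Then I would handle \textsc{AddConnectivityRules}: the firing relations are a \emph{superset} of the relevant fragments because a surviving rule firing $\fire{r_j}{R_j}{\adornment}(\vec c)$ need not be reachable from a node in $\qMatch(\aProvQ)$; adding, for each goal position $k$ matching a head of $r_j$ into the body of $r_i$, the join with $\fire{r_i}{R_i}{\adornment}(\ldots)$ (Algorithm~\ref{alg:connectivity-joins}, Fig.~\ref{fig:exam-connect-joins}) enforces exactly the reachability predicate of Def.~\ref{def:explanation}. A small induction on path length from the \provQ{} atom shows the $\fireC{}{}{}{}$ relations contain a firing exactly when the associated provenance fragment is connected to some $Q(t')\in\qMatch(\aProvQ)$ in $\provGraph(P,I)$.

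Finally I would show that \textsc{CreateEdgeRelation} (Algorithm~\ref{alg:create-move-relations}, illustrated in Fig.~\ref{fig:examp-graph-edge}) emits, for each firing that survives the connectivity joins, exactly the edges of the corresponding graph fragment of Def.~\ref{def:prov=graph}: the head-to-rule edge, the rule-to-goal edges (all goals when $\adornment=\greenT$; only the goals with $V_i$ indicating failure when $\adornment=\redF$), and the goal-to-tuple edges, with the Skolem-function node identifiers $\nodeSk{rel}{R}{\greenT}$ etc.\ respecting the injectivity requirement $\nodeLabel(v)=\nodeLabel(v')\to v=v'$ of Def.~\ref{def:prov=graph} (so that shared subexpressions collapse, matching the ``nodes identified by their label'' property). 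Taking the union over all firings and invoking the fact that $\explainq$ is determined by its node set and connectivity (Def.~\ref{def:explanation}) yields that the edge relation computed by $\GPProg{P}{\aProvQ}{}$ is precisely that of $\explainq(P,\aProvQ,I)$. The main obstacle I anticipate is the completeness half of the connectivity argument in the presence of negation and undetermined-free failed derivations: I must carefully verify that pushing the unified head of $r_i$ into the body of $r_j$ does not accidentally \emph{remove} a failed fragment that should be present (a failed goal node is still connected to its failed rule node and, transitively through the IDB structure, to the missing head tuple), and that the interaction between the $\redF{/}\greenT$ firing versions and the connectivity join preserves all failure patterns $\vec V$ that occur in $\explainq$; this requires being precise about the two directions of Def.~\ref{def:explanation} for the why-not case, where a missing IDB tuple is connected to \emph{all} of its failed derivations.
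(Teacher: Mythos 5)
Your proposal is correct and follows essentially the same route as the paper's own proof: it composes the unification argument of Sec.~\ref{sec:unify-program-with}, the firing-rule correctness of Theorem~\ref{theo:alg-firingcorrect}, the connectivity joins as the enforcement of the reachability condition in Def.~\ref{def:explanation}, and an induction over the depth of the program's rules. If anything, your stage-by-stage decomposition is more explicit than the paper's rather terse two-part argument (soundness of emitted edges via domain membership, then precision by induction on rule depth), particularly in spelling out the annotation step and the why-not/negation subtleties that the paper leaves implicit.
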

\begin{proof}
  To prove Theorem\,\ref{theo:alg-correctness}, we have to show that 1) only edges from $\provGraph(P,I)$ are in $\GPProg{P}{\aProvQ}{}(I)$ and 2) the program returns precisely the set of edges of
 explanation $\explainq(P,\aProvQ,I)$. \ifnottechreport{
The full proof is presented in~\cite{LS17} and in our accompanying report~\cite{LL18}.
}
\iftechreport{
\mypara{1}
The constants used as variable binding by the rules creating edges in $\GPProg{P}{\aProvQ}{}$ are either constants that occur in the \provQ~$\aProvQ$ or the result of rules which are evaluated over the instance $I$.
Since only the rules for creating the edge relation create new values (through Skolem functions), it follows that any constant used in constructing a node argument exists in the associated domain. Recall that the $\provGraph(P,I)$ only  contains nodes with arguments from the associated domain. Any edge returned by $\GPProg{P}{\aProvQ}{}$ is strictly based on the structure of the input program and connects nodes that agree on variable bindings.
Thus, each edge produced by $\GPProg{P}{\aProvQ}{}$ will be contained in $\provGraph(P,I)$.

\mypara{2}
We now prove that the program $\GPProg{P}{\aProvQ}{}$ returns precisely the set of edges of $\explainq(P,\aProvQ,I)$.
Assume that the \provQ{}~$\aProvQ$ only uses constants (the extension to \provQ{}s which contain variables is immediate).
Consider a rule of an input program of depth $1$ (i.e., only EDB predicates in the rule body). For such a rule node to be connected to an atom $Q(t) \in \qMatch(\aProvQ)$, its head variables have to be bound to $t$ (guaranteed by the unification step in Sec.\,\ref{sec:unify-program-with}). Since the firing rules are known to be correct, this guarantees that exactly the rule nodes connected to the \provQ{} node are generated. The propagation of this unification to the firing rules for EDB predicates is correct, because only EDB nodes agreeing with this binding can be connected to such a rule node. However, propagating constants is not sufficient since the firing rule for an EDB predicate (e.g., $R$) may return irrelevant tuples, i.e., tuples that are not part of any rule derivations for $Q(t)$ (e.g., there may not exist EDB tuples for other goals in the rule which share variables with the particular goal using predicate $R$). This is checked by the connectivity joins (Sec.\,\ref{sec:connectivity-joins}). If a tuple is returned by a connected firing rule, then the corresponding node is guaranteed to be connected to at least one rule node deriving \provQ{}.
Note that this argument does not rely on the fact that predicates in the body of a rule are EDB predicates.  Thus, we can apply this argument in a proof by induction to show that, given that rules of depth up to $n$ only produce connected rule derivations, the same holds for rules of depth $n+1$.
}
\end{proof}

\begin{Theorem}\label{theo:simpl-correctness}
  Let $P$ be a positive program, $I$ be a database instance, and $\aProvQ$ a \provQ.   The result of program $\GPProg{P}{\aProvQ}{\WhichProv}$ is the edge relation of $\explainq_{\WhichProv}(P,\aProvQ,I)$. \end{Theorem}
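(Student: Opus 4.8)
The plan is to reduce the statement to Theorem~\ref{theo:alg-correctness} together with a precise description of how $\GPProg{P}{\aProvQ}{\WhichProv}$ differs from $\GPProg{P}{\aProvQ}{}$. Recall from Sec.~\ref{sec:somek-explanations} that both programs share the unification step, all firing rules (Sec.~\ref{sec:creat-firer-rules}), and all connectivity joins (Sec.~\ref{sec:connectivity-joins}); they differ only in the final stage emitting the $\rel{edge}$ relation. So the first step is to apply Theorem~\ref{theo:alg-firingcorrect} and part~2 of the proof of Theorem~\ref{theo:alg-correctness}: for a positive input program $P$, a binding $\vec c$ is returned by the (connectivity-checked) firing predicate for a rule $r$ iff $r(\vec c)$ is a successful derivation whose rule node lies in $\explainq(P,\aProvQ,I)$ --- equivalently, is connected in $\provGraph(P,I)$ to some $Q(t') \in \qMatch(\aProvQ)$. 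Since $P$ is positive there are no failed derivations and no negated goals, so the $\redF$/$\redF/\greenT$ machinery is vacuous and $\explainq(P,\aProvQ,I)$ consists only of successful tuple, rule, and goal nodes. (For a why-not question the argument is symmetric, exchanging successful/failed derivations and $\greenT$/$\redF$ labels; below I discuss the why-case.)

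Second, I would make the target graph explicit. By the construction of Sec.~\ref{sec:expl-types}, $\explainq_{\WhichProv}(P,\aProvQ,I)$ is obtained from $\explainq(P,\aProvQ,I)$ by deleting all rule and goal nodes and replacing every path $v_t \to v_r \to v_g \to v_{t'}$ through a rule node $v_r$ and a goal node $v_g$ by a direct edge $v_t \to v_{t'}$. In $\explainq(P,\aProvQ,I)$ every rule node $r(\vec c)$ is connected upward to its head tuple node $\headOf{r(\vec c)}$ and downward to one goal node per body goal, each goal node being connected to the tuple node of that body atom. Hence the collapse produces exactly the set of pairs $(\headOf{r(\vec c)},\, g_j(\vec c))$ where $r(\vec c)$ ranges over the rule nodes of $\explainq(P,\aProvQ,I)$ and $g_j(\vec c)$ ranges over the ground body atoms of that derivation. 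Because nodes are identified by their labels, duplicate edges are automatically merged --- matching the idempotence of $\WhichProv$ --- so this set is precisely the edge relation of $\explainq_{\WhichProv}(P,\aProvQ,I)$.

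Third, I would verify that the edge-generating rules of $\GPProg{P}{\aProvQ}{\WhichProv}$ (Algorithm~\ref{alg:create-move-relations-lin}) produce exactly this set. For each rule $r\colon R(\vec X) \dlImp g_1(\vec X_1), \ldots, g_n(\vec X_n)$ of $P$ and each position $j$, the program contains a rule of the form $\rel{edge}(\nodeSk{rel}{R}{\greenT}(\vec X),\, \nodeSk{rel}{\predOf{g_j}}{\greenT}(\vec X_j)) \dlImp \fire{r}{R}{\greenT}(\mathit{args})$ whose body is the firing predicate for $r$ restricted via the connectivity joins, and the Skolem functions $\nodeSk{rel}{\cdot}{\cdot}$ assign distinct identifiers to nodes with distinct labels. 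By the first step the body holds exactly for the bindings $\vec c$ for which $r(\vec c)$ is a rule node of $\explainq(P,\aProvQ,I)$, and for such a binding the head emits exactly the pair $(\headOf{r(\vec c)}, g_j(\vec c))$. Conversely, by the second step every edge of $\explainq_{\WhichProv}(P,\aProvQ,I)$ has this form and hence is produced. Both inclusions together give the claim.

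The main obstacle is aligning the recursion. The collapsed edges above may connect an IDB head tuple node to an IDB body tuple node rather than directly to an EDB tuple node, so one must argue that iterating these edges reaches exactly the EDB lineage and that this coincides with the $\WhichProv$ annotation --- equivalently, that the subgraph of $\explainq_{\WhichProv}(P,\aProvQ,I)$ reachable from an IDB tuple node $R(t)$ is itself the Which-explanation of the subquery computing $R$. The clean way to make this precise, and the bulk of the real work, is an induction on the program depth $\depthP{P}$ exactly as in the proof of Theorem~\ref{theo:alg-correctness}: the base case ($\depthP{P} = 1$, only EDB body goals) is the single-rule situation illustrated in Sec.~\ref{sec:somek-explanations}, and in the inductive step one uses that, for each IDB body atom, the connectivity-checked firing predicate restricted to derivations reaching that atom behaves like the program for the corresponding subquery. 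Everything semiring-specific (idempotence of both operations, $x + y = x \cdot y$ in $\WhichProv$) is absorbed by the fact that our provenance-graph nodes are label-identified and hence inherently set-valued; the tedium lies purely in carrying the connectivity bookkeeping through the induction.
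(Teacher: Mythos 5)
Your proposal is correct and follows essentially the same route as the paper's proof: an induction on program depth that reduces everything to the correctness of the (connectivity-checked) firing rules (Theorem~\ref{theo:alg-firingcorrect}) and then checks that the modified edge-emitting rules produce exactly one edge per body atom of each successful rule derivation appearing in the explanation. The only cosmetic difference is that you characterize the target as a graph collapse of $\explainq(P,\aProvQ,I)$ and let label-identification of nodes absorb the idempotence of $\WhichProv$, whereas the paper grounds its base case in the explicit sum-of-products formula from the semiring framework; both arguments establish the same two inclusions.
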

\begin{proof}
We prove Theorem\,\ref{theo:simpl-correctness} by induction over the structure of a program as in the proof of Theorem\,\ref{theo:alg-firingcorrect}. \ifnottechreport{
The full proof is presented in our technical report~\cite{LL18}.
}
\iftechreport{

\mypara{1) Base Case}
Consider a program $P$ with depth $1$ and $P$ has a single IDB predicate $\rel{Q}$, i.e., only containing rules of the form $r_i: \rel{Q}(\vec{X}) \dlImp \rel{R_{1}^i}(\vec{X_{1}^i}), \ldots, \rel{R_{n_i}^i}(\vec{X_{n_i}^i})$, where each $\rel{R_j}$ is an EDB relation and all goals are positive.
According to~\cite{green2007provenance}, the semiring annotation of a tuple $\rel{Q}(t)$ in the result of a positive Datalog program is computed as a sum of products. This sum contains one monomial per successful rule derivation with head $\rel{Q}(t)$. Such a monomial is constructed by multiplying the annotations of the grounded goals in the rule derivation. Let $\nu$ denote a variable assignment corresponding to a rule derivation and $Val(r,\rel{Q}(t))$ the set of all variable assignment for the rule $r$ that yield $\rel{Q}(t)$. Since addition and multiplication are idempotent in $\WhichProv$,
the annotation of a result $\rel{Q}(t)$ is computed as below:
$$
\bigcup_i \hspace{2mm} \bigcup_{\nu \in Val(r_i,\rel{Q}(t))} \hspace{2mm} \bigcup_{j = 1}^{n_i} \nu(R_{j}^i(X_j^i))
$$
That is, the $\WhichProv$ expression of $\rel{Q}(t)$ contains the set of annotations of all tuples that appear in at least one successful derivation of $\rel{Q}(t)$. In $\explainq_{\WhichProv}$, the fact that a tuple $\rel{R}(t')$ belongs to the provenance of $\rel{Q}(t)$ is recorded as an edge from $\rel{Q}(t)$ to $\rel{R}(t')$. Thus, to prove that the generated $\explainq_{\WhichProv}$ graph correctly encodes $\WhichProv$, we have to show that such an edge exists for every goal of a successful rule derivation with $\rel{Q}(t)$ as the head.
In Theorem~\ref{theo:alg-firingcorrect}, we have proven that firing rules correctly determine existence of tuples and successful/failed derivations for the user's provenance question. The adapted algorithm creates one rule for every goal of a rule which returns an edge if the goal is part of a successful rule derivation (this is ensured by using firing rules).
Thus, an edge exists for every goal of a successful derivation of $\rel{Q}(t)$.

\mypara{2) Inductive Step}
Assume that the algorithm is correct for any program of depth less than $n$.
Consider the program $P$ with depth $n$ and a derivation of a rule $r$ of depth $n$ in this program.
Based on the induction hypothesis, we know that the $\WhichProv$ annotation for each atom in the body of the derivation is recorded correctly.
From Theorem~\ref{theo:alg-firingcorrect} and using the same argument as in the base case, the rules created for rule $r$ will generate edges that link the result tuple of rule $r$ to each atom in its body (the claim holds). }
\end{proof}

\begin{figure*}[t]
\centering\vspace{-4mm}
\begin{minipage}{.65\linewidth}
\subfloat[2hop queries ($r_3$ and $r_4$), rewriting ($r_5$, $r_5'$, $r_5''$) according to d-tree $\fTree_1$, and example database (graph)]{\label{fig:input_q_d}
  \begin{minipage}{1.0\linewidth}
    \centering
    \begin{minipage}{0.45\linewidth}
      \centering
  \begin{align*}
    &r_3 : \rel{Q_{2hop}}(X) \dlImp \rel{H}(Y,L_1,Z), \rel{H}(Z,L_2,X)  \end{align*}\\[-2mm]
  \hrule
  $\,$\\[-6mm]
  \begin{align*}
  &r_{4} : \rel{Q_{2hop-d}}() \dlImp \rel{H}(Y,L_1,Z), \rel{H}(Z,L_2,d)
  \end{align*}\\[-2mm]
  \hrule
  $\,$\\[-6mm]
  \begin{align*}
    r_5 : &\,\rel{Q_{2hop}}() \dlImp \rel{Q_{L_1}}(Z),\rel{Q_{L_2}}(Z) \\
     r_{5'} : &\,\rel{Q_{L_1}}(Z) \dlImp \rel{H}(Y,L_1,Z) \\
     r_{5''} : &\,\rel{Q_{L_2}}(Z) \dlImp \rel{H}(Z,L_2,d) 
  \end{align*}\\[-4mm]
\end{minipage}
\begin{minipage}{.5\linewidth}
  \begin{minipage}{0.5\linewidth}
    \centering
    \resizebox{1.1\columnwidth}{!}{\begin{minipage}{1.4\linewidth}
\begin{tikzpicture}[>=latex',
line join=bevel,
line width=0.4mm,
every node/.style={ellipse},
minimum height=4mm]

  \definecolor{fillcolor}{rgb}{0.0,0.0,0.0};
  \node (a) at (0bp,13bp) [draw=black,circle] {$a$};
  \node (b) at (0bp,-13bp) [draw=black,circle] {$b$};
  \node (c) at (35bp,0bp) [draw=black,circle] {$c$};
  \node (d) at (70bp,0bp) [draw=blue,circle] {$d$};

   \path[]
		(a) edge [in=120,out=30,->] node [above] {$l_1$} (c)
		(a) edge [in=170,out=320,->] node [above] {$l_2$} (c)
		(b) edge [in=180,out=40,->] node [below] {$l_3$} (c)
		(b) edge [in=230,out=340,->] node [below] {$l_4$} (c)
		(c) edge [in=160,out=30,->] node [above] {$l_5$} (d)
		(c) edge [in=200,out=330,->] node [below] {$l_6$} (d);

\end{tikzpicture}
\end{minipage}

 }
   \end{minipage} 
\begin{minipage}{.35\linewidth}
    \centering
    \scriptsize\centering
    \textbf{\normalsize Relation $\rel{H}$}\\[2mm]
    \begin{tabular}{|ccc|l}
       \thead {S} & \thead {L} & \thead{E} & \\       a & $l_1$ & c & $s_1$ \\
      a & $l_2$ & c & $s_2$ \\
      b & $l_3$ & c & $t_1$ \\
      b & $l_4$ & c & $t_2$ \\
      c & $l_5$ & d & $u_1$ \\
      c & $l_6$ & d & $u_2$ \\
      \cline{1-3}
    \end{tabular}
  \end{minipage}
\end{minipage}
    \end{minipage}
}
 \end{minipage}
\begin{minipage}{.33\linewidth}\vspace{3mm}
\subfloat[Factorized representation ($r_5$, $r_{5'}$, $r_{5''}$)]{\label{fig:pg_expr}
\begin{minipage}{1\linewidth}
  \centering
  \begin{minipage}{1\linewidth}
    \centering
    \resizebox{0.61\columnwidth}{!}{\begin{tikzpicture}[>=latex',line join=bevel,line width=0.3mm]

  \definecolor{fillcolor}{rgb}{0.83,1.0,0.8};

\node (head) at (0.5,1) [draw=black,fill=fillcolor,circle] {$\boldsymbol{+}$};
\node (times) at (0.5,0) [draw=black,fill=fillcolor,rectangle] {$\boldsymbol{\cdot}$};

\node (r) at (-1,-1) [draw=black,fill=fillcolor,ellipse] {$\boldsymbol{+}$};
\node (t) at (2,-1) [draw=black,fill=fillcolor,ellipse] {$\boldsymbol{+}$};

\node (r1t) at (-2.5,-2) [draw=black,fill=fillcolor,rectangle] {$\boldsymbol{\cdot}$};
\node (r2t) at (-1.5,-2) [draw=black,fill=fillcolor,rectangle] {$\boldsymbol{\cdot}$};
\node (s1t) at (-0.5,-2) [draw=black,fill=fillcolor,rectangle] {$\boldsymbol{\cdot}$};
\node (s2t) at (0.5,-2) [draw=black,fill=fillcolor,rectangle] {$\boldsymbol{\cdot}$};
\node (t1t) at (1.5,-2) [draw=black,fill=fillcolor,rectangle] {$\boldsymbol{\cdot}$};
\node (t2t) at (2.5,-2) [draw=black,fill=fillcolor,rectangle] {$\boldsymbol{\cdot}$};

\node (r1) at (-2.5,-3) [draw=black,fill=fillcolor,circle] {$\boldsymbol{s_1}$};
\node (r2) at (-1.5,-3) [draw=black,fill=fillcolor,circle] {$\boldsymbol{s_2}$};
\node (s1) at (-0.5,-3) [draw=black,fill=fillcolor,circle] {$\boldsymbol{t_1}$};
\node (s2) at (0.5,-3) [draw=black,fill=fillcolor,circle] {$\boldsymbol{t_2}$};
\node (t1) at (1.5,-3) [draw=black,fill=fillcolor,circle] {$\boldsymbol{u_1}$};
\node (t2) at (2.5,-3) [draw=black,fill=fillcolor,circle] {$\boldsymbol{u_2}$};

\draw [->] (head) -> (times);
\draw [->] (times) -> (r);
\draw [->] (times) -> (t);

\draw [->] (r) -> (r1t);
\draw [->] (r) -> (r2t);

\draw [->] (r) -> (s1t);
\draw [->] (r) -> (s2t);

\draw [->] (t) -> (t1t);
\draw [->] (t) -> (t2t);

\draw [->] (r1t) -> (r1);
\draw [->] (r2t) -> (r2);

\draw [->] (s1t) -> (s1);
\draw [->] (s2t) -> (s2);

\draw [->] (t1t) -> (t1);
\draw [->] (t2t) -> (t2);

\end{tikzpicture}

 }
  \end{minipage}\\[-3mm]
    \begin{minipage}{0.4\linewidth}
      \centering
   \begin{align*}
     (s_1 + s_2 + t_1 + t_2) \cdot (u_1 + u_2)
   \end{align*}
 \end{minipage}

 \end{minipage}
 }
\end{minipage}\\[2mm]
\begin{minipage}{.38\linewidth}
\centering\vspace{-1mm}
\subfloat[Two d-trees of $r_{4}$: $\Tau_1$ (left) and $\Tau_2$ (right) ]{\label{fig:f-trees}
  \begin{minipage}{1\linewidth}
    \centering
    \begin{minipage}{.45\linewidth}
      \centering
      \resizebox{1.2\columnwidth}{!}{\begin{minipage}{1.4\linewidth}
\begin{tikzpicture}[>=latex',
line join=bevel,
line width=0.4mm,
every node/.style={ellipse},
minimum height=4mm]

  \definecolor{fillcolor}{rgb}{0.0,0.0,0.0};
  \node (y) at (-20bp,-50bp) [label={[xshift=-0.7cm, yshift=-0.6cm] \scriptsize $\{Z,L_1\}$}] {$Y$};
  \node (z) at (0bp,0bp) [label={[xshift=-0.4cm, yshift=-0.6cm] \scriptsize $\{\}$}] {$Z$};
  \node (l1) at (-10bp,-25bp) [label={[xshift=-0.5cm, yshift=-0.6cm] \scriptsize $\{Z\}$}] {$L_1$};
  \node (l2) at (10bp,-25bp) [label={[xshift=0.6cm, yshift=-0.6cm] \scriptsize $\{Z\}$}] {$L_2$};

   \path[]
		(z) edge [-] (l1)
		(l1) edge [-] (y)
		(z) edge [-] (l2);

\end{tikzpicture}
\end{minipage}

 }
    \end{minipage}
    \begin{minipage}{.4\linewidth}
      \centering
      \resizebox{1.2\columnwidth}{!}{\begin{minipage}{1.4\linewidth}
\begin{tikzpicture}[>=latex',
line join=bevel,
line width=0.4mm,
every node/.style={ellipse},
minimum height=4mm]

  \definecolor{fillcolor}{rgb}{0.0,0.0,0.0};
  \node (y) at (0bp,0bp) [label={[xshift=-0.5cm, yshift=-0.6cm] \scriptsize $\{\}$}] {$Y$};
  \node (z) at (0bp,-25bp) [label={[xshift=-0.5cm, yshift=-0.6cm] \scriptsize $\{Y\}$}] {$Z$};
  \node (l1) at (-10bp,-50bp) [label={[xshift=-0.7cm, yshift=-0.6cm] \scriptsize $\{Y,Z\}$}] {$L_1$};
  \node (l2) at (10bp,-50bp) [label={[xshift=0.6cm, yshift=-0.6cm] \scriptsize $\{Z\}$}] {$L_2$};

   \path[]
		(z) edge [-] (l2)
		(y) edge [-] (z)
		(z) edge [-] (l1);

\end{tikzpicture}
\end{minipage}

 }
    \end{minipage}
  \end{minipage}
}
\end{minipage}
\begin{minipage}{.59\linewidth}\vspace{-1mm}
\subfloat[Flat representation ($r_4$)]{\label{fig:flat_expr}
\begin{minipage}{1\linewidth}
  \centering
   \resizebox{.93\columnwidth}{!}{\begin{tikzpicture}[>=latex',line join=bevel,line width=0.3mm]

  \definecolor{fillcolor}{rgb}{0.83,1.0,0.8};
  \node (pl) at (-3,0) [draw=black,fill=fillcolor,circle] {$\boldsymbol{+}$};

\node (s1u1) at (-10,-1) [draw=black,fill=fillcolor,rectangle] {$\boldsymbol{\cdot}$};
\node (s1u2) at (-8,-1) [draw=black,fill=fillcolor,rectangle] {$\boldsymbol{\cdot}$};
\node (s2u1) at (-6,-1) [draw=black,fill=fillcolor,rectangle] {$\boldsymbol{\cdot}$};
\node (s2u2) at (-4,-1) [draw=black,fill=fillcolor,rectangle] {$\boldsymbol{\cdot}$};
\node (t1u1) at (-2,-1) [draw=black,fill=fillcolor,rectangle] {$\boldsymbol{\cdot}$};
\node (t1u2) at (0,-1) [draw=black,fill=fillcolor,rectangle] {$\boldsymbol{\cdot}$};
\node (t2u1) at (2,-1) [draw=black,fill=fillcolor,rectangle] {$\boldsymbol{\cdot}$};
\node (t2u2) at (4,-1) [draw=black,fill=fillcolor,rectangle] {$\boldsymbol{\cdot}$};

\node (s11) at (-10.5,-2) [draw=black,fill=fillcolor,circle] {$\boldsymbol{s_1}$};
\node (u1s1) at (-9.5,-2) [draw=black,fill=fillcolor,circle] {$\boldsymbol{u_1}$};

\node (s12) at (-8.5,-2) [draw=black,fill=fillcolor,circle] {$\boldsymbol{s_1}$};
\node (u2s1) at (-7.5,-2) [draw=black,fill=fillcolor,circle] {$\boldsymbol{u_2}$};

\node (s21) at (-6.5,-2) [draw=black,fill=fillcolor,circle] {$\boldsymbol{s_2}$};
\node (u1s2) at (-5.5,-2) [draw=black,fill=fillcolor,circle] {$\boldsymbol{u_1}$};

\node (s22) at (-4.5,-2) [draw=black,fill=fillcolor,circle] {$\boldsymbol{s_2}$};
\node (u2s2) at (-3.5,-2) [draw=black,fill=fillcolor,circle] {$\boldsymbol{u_2}$};

\node (t11) at (-2.5,-2) [draw=black,fill=fillcolor,circle] {$\boldsymbol{t_1}$};
\node (u1t1) at (-1.5,-2) [draw=black,fill=fillcolor,circle] {$\boldsymbol{u_1}$};

\node (t12) at (-0.5,-2) [draw=black,fill=fillcolor,circle] {$\boldsymbol{t_1}$};
\node (u2t1) at (0.5,-2) [draw=black,fill=fillcolor,circle] {$\boldsymbol{u_2}$};

\node (t21) at (1.5,-2) [draw=black,fill=fillcolor,circle] {$\boldsymbol{t_2}$};
\node (u1t2) at (2.5,-2) [draw=black,fill=fillcolor,circle] {$\boldsymbol{u_1}$};

\node (t22) at (3.5,-2) [draw=black,fill=fillcolor,circle] {$\boldsymbol{t_2}$};
\node (u2t2) at (4.5,-2) [draw=black,fill=fillcolor,circle] {$\boldsymbol{u_2}$};

\draw [->] (pl) -> (s1u1);
\draw [->] (pl) -> (s1u2);
\draw [->] (pl) -> (s2u1);
\draw [->] (pl) -> (s2u2);
\draw [->] (pl) -> (t1u1);
\draw [->] (pl) -> (t1u2);
\draw [->] (pl) -> (t2u1);
\draw [->] (pl) -> (t2u2);

\draw [->] (s1u1) -> (s11);
\draw [->] (s1u1) -> (u1s1);
\draw [->] (s1u2) -> (s12);
\draw [->] (s1u2) -> (u2s1);

\draw [->] (s2u1) -> (s21);
\draw [->] (s2u1) -> (u1s2);
\draw [->] (s2u2) -> (s22);
\draw [->] (s2u2) -> (u2s2);

\draw [->] (t1u1) -> (t11);
\draw [->] (t1u1) -> (u1t1);
\draw [->] (t1u2) -> (t12);
\draw [->] (t1u2) -> (u2t1);

\draw [->] (t2u1) -> (t21);
\draw [->] (t2u1) -> (u1t2);
\draw [->] (t2u2) -> (t22);
\draw [->] (t2u2) -> (u2t2);

\end{tikzpicture}

 }
   \begin{minipage}{.5\linewidth}
   \vspace{-5mm}
   \centering
   \begin{align*}
     s_1 \cdot u_1 + s_1 \cdot u_2 + s_2 \cdot u_1 + s_2 \cdot u_2 + t_1 \cdot u_1 + t_1 \cdot u_2 + t_2 \cdot u_1 + t_2 \cdot u_2
   \end{align*}
   \end{minipage}
 \end{minipage}
 }
\end{minipage}
   $\,$\\[-3mm]
   \caption{Factorized and flat provenance graphs ($\ProvPoly$)  explaining $\whyq \rel{Q_{2hop}}(d)$ and two d-trees for $r_4$.}
   \label{fig:exam-pg-flat}
\end{figure*}

\section{Factorization}
\label{sec:factorize}

For provenance polynomials, we can exploit the distributivity law of semirings to generate 
  factorizations  of provenance~\cite{OZ12} which are exponentially more concise in the best case.
For instance, consider a query $r_3$ returning the end points of paths of length 2 evaluated over the edge-labelled graph in Fig.\,\ref{fig:input_q_d}. 
The  provenance polynomial for the query result $\rel{Q_{2hop}}(d)$ using the annotations from Fig.\,\ref{fig:input_q_d} is shown in Fig.\,\ref{fig:flat_expr}. Each monomial in the polynomial corresponds to one of the derivations of the result using $r_3$. Each of these $2 \cdot (2^2)$ 
(we have two options as starting points and, for each hop, we have two options)  derivations corresponds to one path of length $2$ ending in $d$.
When generating provenance graphs for provenance polynomials, we create ``$\cdot$'' nodes for rule derivations and ``$+$'' nodes for IDB tuples.
Fig.\,\ref{fig:pg_expr} is the factorized representation of this polynomial. We can exploit the fact that our approach shares common subexpressions to produce a particular factorization. 
This is achieved by rewriting the input program to partition a query by materializing joins and projections as new IDB relations which can then be shared. 
We first review f-trees and d-trees as introduced in~\cite{OZ15}
which encode possible nesting ``schemas'' for factorized representations of provenance (or query results), the size bounds for factorized representations based on d-trees proven in~\cite{OZ15}, and how to chose a d-tree for a query that results in the optimal worst-case size bound for the factorized representation of the provenance according to this d-tree.
Then, we introduce a query transformation for conjunctive queries which, given an input query and the d-tree for this query,  generates a rewritten query which returns a provenance graph factorized corresponding to this d-tree. We employ this rewriting to produce more concise provenance in PUG (experiments are shown in Sec.\,\ref{sec:experiments-fact}).

\mypara{Factorized Representations} \label{sec:f-rep}
In~\cite{OZ12,OZ15}, a \textit{factorized representation} (f-rep for short) of a relation is defined as an algebraic expression constructed using singleton relations (one tuple with one value) and the relational operators union and product. Any f-rep over a set of attributes from a schema $S$ can be interpreted as a relation over $S$ by evaluating the algebraic expression, e.g., $\{(a)\} \times (\{(b)\} \union \{(c)\})$ is a factorized representation of the relation $\{(a,b), (a,c)\}$. Following the convention from~\cite{OZ12}, we denote a singleton $\{(a)\}$ as $a$.
Factorization can be applied to compactly represent relations and query results as well as provenance (e.g., Fig.\,\ref{fig:pg_expr}). We will factorize representations of provenance 
which  encode variables of provenance polynomials as the tuples annotated by these variables and show how to extract provenance polynomials from provenance graphs generated in this way.

\mypara{F-trees for F-reps}
\label{sec:f-tree}
Olteanu et al.~\cite{OZ15} introduce \textit{f-trees} to encode the nesting
structure of f-reps.  At first, let us consider only f-trees which encode the
nesting structure of a boolean query~\cite{OZ12}.  An f-tree for a boolean query $Q$ (e.g., $r_4$ in Fig.\,\ref{fig:input_q_d}) is a rooted
forest with one node for every variable of $Q$.\footnote{In~\cite{OZ15},
  relational algebra is used to express queries and nodes of f-trees represent
  equivalence classes of attributes which in Datalog correspond to query
  variables.}  An f-rep according to an f-tree $\fTree$ nests values according
to $\fTree$: a node labelled with $X$ corresponds to a union of values from the
attributes bound to $X$ by the query. The values of attributes bound to children
of a node $X$ corresponding to a single value $x$ bound to $X$ are grouped under
$x$. If a node has multiple children, then their f-reps are connected via
$\times$. For example, consider an f-tree $\fTree$ with root $X$ and a single
child $Y$ for a query $\rel{Q}() \dlImp \rel{R}(X,Y)$. An f-rep
of $\rel{Q}$ according to $\fTree$ would be of the form
$x_1 \times (y_{1_1} \cup \ldots \cup y_{n_1}) \cup \ldots \cup x_m \times
(y_{1_m} \cup \ldots \cup y_{n_m})$, i.e., the $Y$ values co-occurring with a
given $X$ value $x$ are grouped as a union and then paired with $x$. An f-tree
encodes (conditional) independence of the variables of a query in the sense that the
values of one variable do not depend on the values of another variable. For
instance, two siblings $X$ and $Y$ in an f-tree have to be independent since a
union of $X$ values is paired (cross-product) with a union of $Y$ values. This is only correct if the values of $X$ and $Y$ are independent.
The independence assumptions encoded in an f-tree may not hold for every possible query with the same schema as the f-tree.
Thus, only some f-trees with a particular schema may be applicable for a query with this schema.
It was shown in~\cite{OZ15}, that a query has an f-rep over
an f-tree $\fTree$ for any database iff for each relation in $Q$ the variables assigned to attributes of this relation
(these variables are called dependent) are on the same root-to-leaf path in the
f-tree. This is called the \textit{path condition}.  Note that multiple references to the
same relation in a query are considered as separate relations when checking this
condition.  For instance, consider the boolean query $r_{4}$ in Fig.\,\ref{fig:input_q_d} which checks if there are paths of length 2
ending in the node $d$.  Fig.\,\ref{fig:f-trees} shows two f-trees $\fTree_1$
and $\fTree_2$ for this query (ignore the sets on the side of nodes for now). An f-rep according to $\fTree_2$ for $r_4$ would
encode a union of $Y$ values paired ($\times$) with a union of $Z$ values for this $Y$
value. Each $Z$ value nested under a $Y$ value is then paired with a
cross-product of $L_1$ and $L_2$ values.  

\mypara{D-trees for D-reps}
\label{sec:d-tree}
The size of a factorized representation can be further reduced by allowing subexpressions to be shared through definitions, i.e., using algebra graphs instead of trees. In~\cite{OZ15}, such representations are called \textit{d-representations} (d-rep).
Analogous to how f-trees define the
structure of f-reps, d-trees were introduced to define the structure of d-reps.
A d-tree is an f-tree where each node $X$ is annotated with a set $key(X)$, a subset of its ancestors in the f-tree on which the node and any of its dependents depend on. The f-rep of the subtree rooted in $X$ is unique for each combination of values from $key(X)$. That is, if $key(X)$ is a strict subset of the ancestors of $X$, then the same d-rep for the subtree at $X$ can be shared by multiple ancestors, reducing the size of the representation. In Fig.\,\ref{fig:f-trees}, the set $key$ is shown beside each node, e.g., in $\fTree_2$, the variable $L_2$ depends only on $Z$, but not on $Y$. 
An important result proven in~\cite{OZ15} is that, for a given d-tree $\fTree$ for a query $Q$, the size of d-rep of $Q$ over a database $I$ is bound by $\card{I}^{s^{\uparrow}(\fTree)}$ where $s^{\uparrow}(\fTree)$ is a rational number computed based on $\fTree$ alone (see~\cite{OZ15} for details of how to compute $s^{\uparrow}(\fTree)$). This bound can be used to determine the d-tree for a query $Q$ which will yield the d-rep of  worst-case optimal size by enumerating the valid d-trees for $Q$ and, then, chosing the one with the lowest value of $s^{\uparrow}$.

\begin{Example}
  Consider the d-rep for $r_4$ (Fig.\,\ref{fig:input_q_d}) 
over the example instance of relation $\rel{H}$ (Fig.\,\ref{fig:input_q_d}) according to d-tree $\fTree_2$ (Fig.\,\ref{fig:f-trees}). 
Variable $Y$ at the root of $\fTree_2$ is bound to the attribute $S$ from the first reference of $\rel{H}$, i.e., the starting point of paths of length $2$ ending in $d$. There are two such starting points $a$ and $b$. Now each of these are paired with the only valid intermediate node $c$ on these paths (variable $Z$). Finally, for this node, we compute the cross-product of the $L_1$ and $L_2$ values connected to $c$. Since the $L_2$ values only depend on $Z$, we share these values when the same $Z$ value is paired with multiple $Y$ values. The final result is $(a \times c \times (l_1 \cup l_2) \times l^{\uparrow}) + (b \times c \times (l_3 \cup l_4) \times l^{\uparrow})$ where $l^{\uparrow} \defas (l_5 \cup l_6)$.
\end{Example}

\mypara{Factorization of Provenance}
\label{sec:f-rep-prov}
For the provenance of a conjunctive query $Q$ that is not a boolean query, i.e., it has one or more variables in the head (e.g., $r_3$ in Fig.\,\ref{fig:input_q_d}), we have to compute a provenance polynomial for each result of $Q$.
We would like the factorization of the provenance of $Q$ to clearly associate the provenance polynomial of a result $t$ with the tuple $t$. That is, we want to avoid factorizations where head variables of $Q$ are nested below variables that store provenance (appear only in the body) since reconstructing the provenance polynomial for $t$ would require enumeration of the full provenance from the factorized representation in the worst case. For example, consider a query with head variable
$X$ and body variable $Y$. If $Y$ is the root of a d-tree $\fTree$, then the d-rep of $Q$ according to $\fTree$ would be of the form $y_1 \times (x_{1_1} + \ldots + x_{n_1}) + \ldots + y_m \times (x_{1_m} + \ldots + x_{n_m})$. To extract the provenance polynomial for a result $x_i$, we may have to traverse all $y$ values since there is no indication, for which $y$ values, $x_i$ appears in the sum $x_{1_i} + \ldots + x_{n_i}$. 
We ensure this by constructing d-trees which do not include the head variables, but treat those as ancestors of every node in the d-tree when computing $key$ for the nodes. 
For instance, to make  $\fTree_1$ (Fig.\,\ref{fig:f-trees}) a valid d-tree for capturing the provenance of $r_3$ (Fig.\,\ref{fig:input_q_d}), we treat the head variable $X$ as a virtual ancestor of all nodes and get $key(Z) = \{X\}$ and $key(L_2) = \{Z,X\}$.
Furthermore, if we are computing an explanation to a provenance question (PQ) $\aProvQ$ that binds one or more head variables to constants, then we can propagate these bindings before constructing a d-tree for the query. For example, to explain $\rel{Q_{2hop}}(d)$, we would propagate the binding $X=d$ resulting in rule $r_4$ (Fig.\,\ref{fig:input_q_d}). Thus, any d-tree for $r_4$ can be used to create a factorized $\explainq_{\ProvPoly}$ graph for the user question $\whyq(\rel{Q_{2hop}}(d))$.

\mypara{Rewriting Queries for Factorization}
\label{sec:trans-query}
We now explain how, given a d-tree $\fTree$ for a conjunctive query $Q$ and positive PQ $\aProvQ \defas \whyq Q(t)$, to generate a Datalog query $Q_{rewr}$ such that, for any database $I$, we have that $\explainq_{\ProvPoly}(Q_{rewr}, \aProvQ, I)$ encodes $\ProvPoly(Q_{rewr}, I, t)$ for each $t \in \qMatch(\aProvQ)$ factorized according to $\fTree$. 
We first unify the query with the PQ as described in Sec.\,\ref{sec:unify-program-with}.
Given a unified input query $Q$ and a d-tree $\fTree$, we compute $Q_{rewr}$ as follows.

\begin{enumerate}
\item Assume a total order among the variables of $Q$ (e.g., the lexicographical order). For every node $X$ with children $Y_1$, \ldots, $Y_n$ in the d-tree $\fTree$, we generate $$r_X: \rel{Q_{X}}(key(X)) \dlImp \rel{Q_{Y_1}}(key(Y_1)), \ldots, \rel{Q_{Y_n}}(key(Y_n))$$
\item Now for every atom $\rel{R}(Z_1, \ldots, Z_m)$ in the body of $Q$, we find the shortest path starting in a root node that contains all nodes $Z_1$ to $Z_m$. Let $Y = Z_i$ for some $i$ be the last node on this path. Then, we add atom $\rel{R}(Z_1, \ldots, Z_m)$ to the body of rule $r_Y$ created in the previous step.
\item Let $X_1$, \ldots, $X_n$ be the roots of the d-tree $\fTree$ (being a forest, a d-tree may have multiple roots). Furthermore, let $Y_1$, \ldots, $Y_m$ denote the head variables of the unified input query $Q$ with the PQ. We create $$r_Q: \rel{Q}(Y_1, \ldots, Y_m) \dlImp \rel{Q_{X_1}}(key(X_1)), \ldots, \rel{Q_{X_n}}(key(X_n))$$
\end{enumerate}

The rewriting above creates a factorization according to a d-tree $\fTree$. However, it may contain rules which cannot potentially lead to reuse and, thus, result in overhead that could be avoided if we were able to identify such rules. We now present an optimization that removes such rules to further reduce the size of the generated provenance graphs.  
  Consider two nodes $X$ and $Y$ in a d-tree where $Y$ is the only child of $X$, i.e., $key(Y) = key(X) \cup \{X\}$. We would generate rules

\begin{minipage}{1\linewidth}
\centering\vspace{-2mm}
\begin{align*}
&r_X: \rel{Q_X}(key(X)) \dlImp \rel{Q_Y}(X \cup key(X))\\ &r_Y: \rel{Q_Y}(X \cup key(X)) \dlImp \ldots\\[-2mm]
\end{align*}
\end{minipage}

In this case, the intermediate result $Q_Y$ does not lead to further factorization (we have a union of unions). Thus, we can merge the rules by substituting the atom $\rel{Q_Y}(X \cup key(X))$ in $r_X$ with the body of $r_Y$. A similar situation may arise with the rule $r_Q$ deriving the final query result. In general, we can merge any rule of the form $\rel{Q_1}(X_1, \ldots, X_n) \dlImp \rel{Q_2}(X_1, \ldots, X_n)$ with the rule deriving $\rel{Q_2}$ (in our translation, there will be exactly one rule with head $\rel{Q_2}$).

\begin{Example}\label{ex:d-tree-rewrite}
Consider the question $\whyq \rel{Q_{2hop}}(d)$ over the query $r_3$ from Fig.\,\ref{fig:input_q_d}. Unifying the query with this question yields $r_4$ (below $r_3$ in the same figure). To rewrite the query according to
the d-tree $\Tau_1$ from Fig.\,\ref{fig:f-trees}, we apply the above algorithm to create rules:

\begin{minipage}{1\linewidth}
\vspace{-2mm}\hspace{-3mm}
\begin{minipage}{.9\linewidth}
\centering\small
\begin{align*}
  &r_{Q_{2hop}}: \rel{Q_{2hop}}() \dlImp \rel{Q_{Z}()} &
  &r_{Z}: \rel{Q_{Z}()} \dlImp \rel{Q_{L_1}}(Z), \rel{Q_{L_2}}(Z)\\
  &r_{L_1}: \rel{Q_{L_1}}(Z) \dlImp \rel{Q_{Y}}(Z,L_1) &
  &r_{Y}: \rel{Q_{Y}}(Z,L_1) \dlImp \rel{H}(Y,L_1,Z)\\
  &r_{L_2}: \rel{Q_{L_2}}(Z) \dlImp \rel{H}(Z,L_2,d)\\[-2mm]
\end{align*}
\end{minipage}
\end{minipage}
Applying the optimizations introduced above, we merge the rules $r_{Q_{2hop}}$ with $r_{Z}$ (the head $\rel{Q}_Z$ is the body of $r_{Q_{2hop}}$). Since $key(Y) = key(L_1) \cup \{L_1\}$ and $L_1$ has only one child, we merge $r_Y$ into $r_{L_1}$. The resulting program is shown as rules $r_5$, $r_{5'}$ and $r_{5''}$ in Fig.\,\ref{fig:input_q_d}.

\end{Example}

\mypara{Factorized Explanations}
To generate a concise factorization of provenance for a PQ $\aProvQ$ over a conjunctive query $Q$, we first find a d-tree $\fTree$ with minimal $s^{\uparrow}$ among all d-trees for $Q$ (such a d-tree $\fTree$ guarantees worst-case optimal size bounds for the generated factorization). Then, we rewrite the input query according to $\fTree$ (explained above) and use the approach in Sec.\,\ref{sec:compute-gp} to generate $\explainq_{\ProvPoly}(Q_{rewr}, \aProvQ, I)$ encoding the d-rep of $\ProvPoly(Q_{rewr}, I, t)$ for each $t \in \qMatch(\aProvQ)$.

\begin{Example}
Continuing with Example~\ref{ex:d-tree-rewrite}, assume we compute the $\ProvPoly$ explanation using the rewritten query ($r_5$, $r_{5'}$, and $r_{5''}$). The result over the example database is shown in Fig.\,\ref{fig:pg_expr}. The top-most addition and multiplication correspond to the successful derivation using rule $r_5$ (using $c$ as an intermediate hop from some node to $d$). The left branch below the multiplication encodes the four possible derivations of $\rel{Q_{L_1}}(c)$ ($s_1 + s_2 + t_1 + t_2$) and the right branch corresponds to the two derivations of $\rel{Q_{L_2}}(c)$ ($u_1 + u_2$). The polynomial captured by this graph is $(s_1 + s_2 + t_1 + t_2) \cdot (u_1 + u_2)$. That is, there are 4 ways to reach $c$ from any starting node and two ways of reaching $d$ from $c$ leading to a total of $4 \cdot 2 = 8$ paths of length two ending in the node $d$. 
\end{Example}

 \section{Implementation}
\label{sec:transl-into-relat}

We have implemented the approach presented in this paper in a system called \textit{PUG} (Provenance Unification through Graphs).
PUG is an extension of GProM~\cite{AG14}, a middleware that executes provenance requests using a relational database backend (shown in Fig.\,\ref{fig:gprom-gp}). 
We have extended the system to support Datalog enriched with syntax for stating provenance questions. The user provides a why or why-not question and the corresponding Datalog query as an input. 
Our system  parses  and semantically analyzes this input. Schema
information is gathered by querying the catalog of the backend database (e.g.,
to determine whether an EDB predicate exists). Modules
for accessing schema information are already part of the GProM system, but a new
semantic analysis component had to be developed to support Datalog. 
The algorithms presented in
Sec.\,\ref{sec:compute-gp} are applied to create the program 
$\GPProg{P}{\aProvQ}{}$ for the input program $P$ and the provenance question $\psi$
which computes $\explainq(P,\aProvQ,I)$ (analogously, $\explainq_{\SomeK}(P,\aProvQ,I)$ for $\GPProg{P}{\aProvQ}{\SomeK}$). This program is then translated into  relational algebra ($\cal RA$). The
resulting algebra expression is translated into SQL and sent to the backend database to compute the edge relation of 
 the explanation for the question. Based on this edge relation, we render a provenance graph. For examples and installation guidelines see: \url{https://github.com/IITDBGroup/PUG}.
While it would certainly be possible to directly translate the
Datalog program into SQL without the intermediate translation into $\cal RA$, we choose to introduce this step to be able to leverage the existing
heuristic and cost-based optimizations for $\cal RA$ expressions provided by 
GProM~\cite{NK17} and use its library of $\cal RA$ to SQL translators. 
\begin{figure}[t]
  $\,$\\[-2mm]
  \centering
  \includegraphics[width=0.8\linewidth]{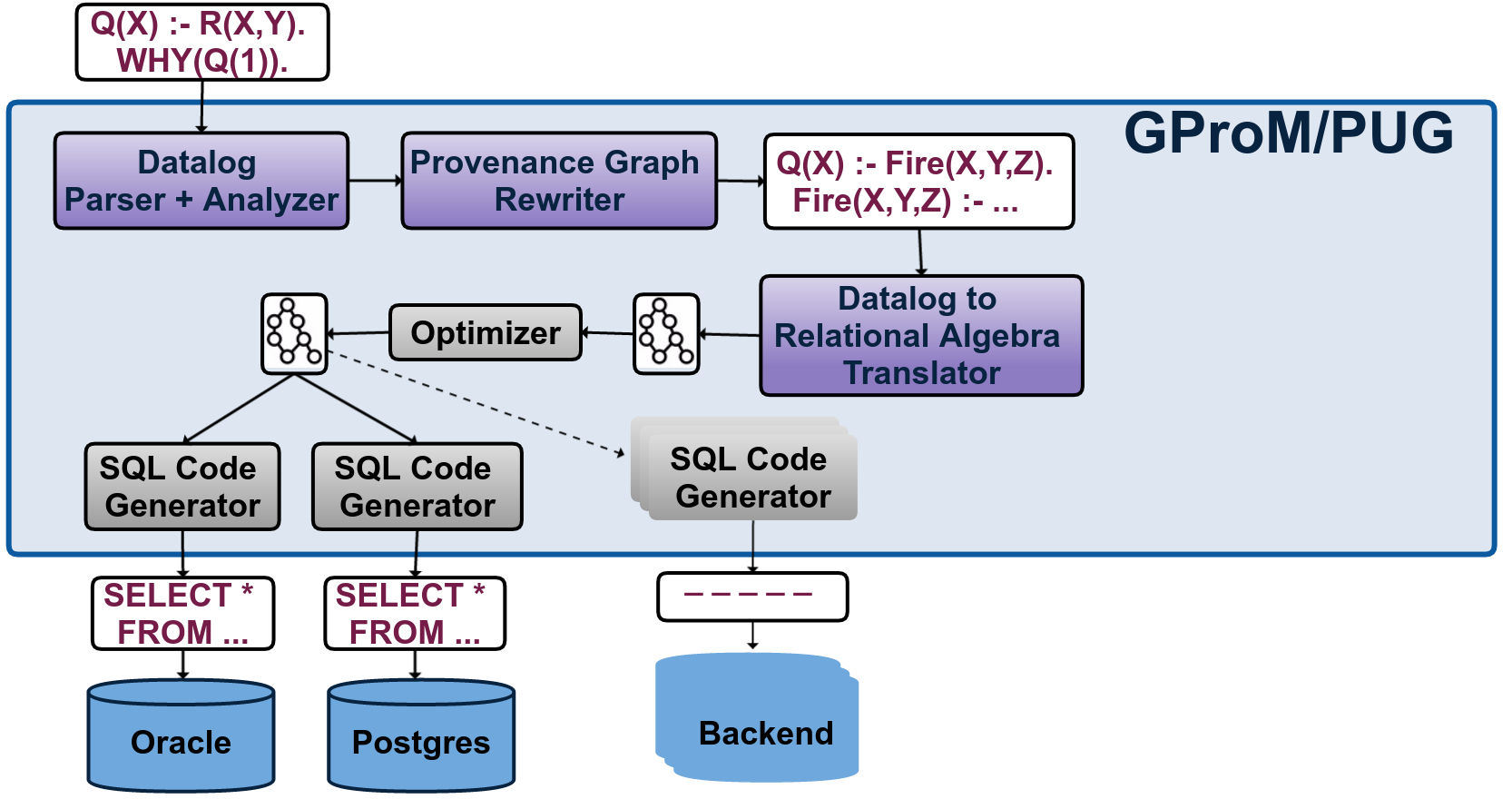}
  $\,$\\[-3mm]
  \caption{PUG implementation in GProM}
  \label{fig:gprom-gp}
\end{figure}
Our translation of first-order  queries (a program with a distinguished answer relation) to $\cal RA$ 
is mostly standard. See~\cite{LS17} for details and an example.

\section{Experiments}
\label{sec:experiments}

We evaluate the performance of our solution
over a co-author graph relation extracted from DBLP (\url{http://www.dblp.org})
as well as over the TPC-H benchmark dataset (\url{http://www.tpc.org/tpch/default.asp}).
We mainly evaluate three aspects;
1) we compare our approach for computing explanations ($\explainq$) with the approach introduced for provenance games~\cite{KL13}.  We call the provenance game approach \texttt{Direct Method (DM)}, because it
directly constructs the full provenance graph; 2) we compare our approach for  Lineage ($\explainq_{\WhichProv}$) to the language-integrated approach developed for the \textit{Links} programming language~\cite{FS17}; 3) we evaluate the performance impact of rewriting queries to produce factorized provenance (Sec.\,\ref{sec:factorize}).
We have created subsets of the DBLP dataset with 100, 1K, 10K, 100K, 1M,
and 8M co-author pairs (tuples).
For the TPC-H benchmark, we used database sizes 10MB, 100MB, 1GB, and 10GB.
All experiments were run on a machine with 2 x 3.3Ghz AMD Opteron 4238 CPUs (12 cores in total) and 128GB RAM running Oracle Linux 6.4.
We use the commercial DBMS X (name omitted due to licensing restrictions) and Postgres as a backend (DBMS X is the default). Unless stated otherwise, each experiment was repeated 100 times (we stopped executions that ran longer than 10 minutes) and we report the median runtime.
Computations that did
not finish within the allocated time are omitted from the graphs.

\begin{figure}[t]
 \centering
  \begin{minipage}{0.98\linewidth}
    \centering
    $\,$\\[-5mm]
    \begin{minipage}{0.82\linewidth}
    \centering\scriptsize
    \begin{align*}
      r_1: \rel{only2hop}(X,Y) &\dlImp \rel{DBLP}(X,Z), \rel{DBLP}(Z,Y), \dlNeg \rel{DBLP}(X,Y)\\[1mm] \hline \\[-3mm]
      r_2: \rel{XwithYnotZ}(X,Y) &\dlImp \rel{DBLP}(X,Y), \dlNeg \rel{Q_1}(X)\\
      r_{2'}: \rel{Q_1}(X) &\dlImp \rel{DBLP}(X,\text{`Svein Johannessen'})\\[1mm] \hline \\[-3mm]
      r_3: \rel{only3hop}(X,Y) &\dlImp \rel{DBLP}(X,A), \rel{DBLP}(A,B), \rel{DBLP}(B,Y),\\ & \mathtab\mathtab \dlNeg \rel{E_1}(X), \dlNeg \rel{E_2}(X) \\
      r_{3'}: \rel{E_1}(X) &\dlImp \rel{DBLP}(X,Y)\\       r_{3''}: \rel{E_2}(X) &\dlImp \rel{DBLP}(X,A), \rel{DBLP}(A,Y)       \\[1mm] \hline \\[-3mm]
      r_4: \rel{ordPriority}(X,Y) &\dlImp \rel{CUSTOMER}(A,X,B,C,D,E,F,G), \\ & \mathtab\mathtab \rel{ORDERS}(H,A,I,J,K,Y,M,N,O)\\[1mm] \hline
    \end{align*}\\[-10mm]
    \begin{align*}
      r_5: \rel{ordDisc}(X,Y) &\dlImp \rel{CUSTOMER}(A,X,B,C,D,E,F,G), \\
	& \hspace{-6mm} \rel{ORDERS} (H,A,I,J,K,L,M,O,P), \\
	& \hspace{-13mm} \rel{LINEITEM} (H,Q,R,S,T,U,V,Y,W,Z,A',B',C',D',E',F')\\[1mm] \hline \\[-3mm]
      r_6: \rel{partNotAsia}(X) &\dlImp \rel{PART}(A,X,B,C,D,E,F,G,H), \\
	& \hspace{-6mm} \rel{PARTSUPP} (A,I,J,K,L), \rel{SUPPLIER}(I,M,N,O,P,Q,R), \\
	& \hspace{-6mm} \rel{NATION} (O,S,T,U), \dlNeg \rel{R_1}(T,\text{`ASIA'})\\       r_{6'}: \rel{R_1}(T,Z) &\dlImp \rel{REGION}(T,Z,V)
      \\[1mm] \hline \\[-3mm]
      r_7: \rel{suppCust}(N) \dlImp &\rel{SUPPLIER}(A,B,C,N,D,E,F),\\
                               & \rel{CUSTOMER}(G,H,I,N,J,K,L,M) \\
    \end{align*}
    \end{minipage}
  \end{minipage}
  $\,$\\[-7mm]
  \caption{DBLP and TPC-H queries for experiments}
  \label{fig:experi-queries}
\end{figure}
\mypartitle{Workloads}
We compute explanations for the queries in Fig.\,\ref{fig:experi-queries}. For DBLP datasets, we consider: \rel{only2hop} ($r_1$) which is our running example query in this paper;
\rel{XwithYnotZ} ($r_2$) that returns authors that are direct co-authors of a certain person $Y$,
but not of ``Svein Johannessen'';
\rel{only3hop} ($r_3$) that returns pairs of authors $(X,Y)$ that are connected via a path of length 3 in the co-author graph where $X$ is not a co-author or indirect co-author (2 hops) of $Y$. For TPC-H, we consider: \rel{ordPriority} ($r_4$) which returns for each customer the priorities of her/his orders;
\rel{ordDisc} ($r_5$) which returns customers and the discount rates of items in their orders;
\rel{partNotAsia} ($r_6$) which finds parts that can be supplied from a country that is not in Asia;
\rel{suppCust} ($r_7$) returns nations having both suppliers and customers.

\mypartitle{Implementing DM}
\texttt{DM} has to instantiate a graph with ${\cal O}(\card{\adom{I}}^n)$ nodes
where $n$ is the maximal number of variables in a rule.
We do not have a full implementation of \texttt{DM}, but compute a conservative
lower bound for the runtime of the step constructing the game graph by executing  a query ($n$-way cross-product over the active domain).
Note that the actual runtime will be much higher
because 1) several edges are created for each rule binding (we underestimate the number of nodes of the constructed graph) and 2) recursive Datalog queries have to be evaluated over this graph using the well-founded semantics.
The results for different instance sizes and number of variables are shown in
Fig.\,\ref{tab:baseline}.
Even for only 2 variables, \texttt{DM} did not finish
for datasets of more than 10K tuples
within the allocated 10 min timeslot.
For queries with more than $4$ variables, \texttt{DM} did not even finish for the smallest dataset. \begin{figure}
  \begin{center}
	\scriptsize \centering $\,$\\[-1mm]
    \begin{tabular}{|l|c|c|c|c|} \hline
	\thead {DBLP (\#tuples)} & \thead {100} & \thead {1K} & \thead {10K} & \thead {100K}
		\\ \hline
		2 Variables ($r_2$) & 0.043 & 0.171 & 14.016 & -
    		\\ \hline
	        3 Variables ($r_1$) & 0.294 & 285.524 & - & -
    		\\ \hline
	        4 Variables ($r_3$) & 56.070 & - & - & -
		\\ \hline
	\thead {TPC-H (Size)} & \thead {10MB} & \thead {100MB} & \thead {1GB} & \thead {10GB}
    		\\ \hline
		$>$ 10 Variables  & \multirow{ 2}{*}{-} & \multirow{ 2}{*}{-} & \multirow{ 2}{*}{-} & \multirow{ 2}{*}{-} \\
                \hspace{2mm}($r_4, r_5, r_6, r_7$) & & & &
    		\\ \hline
    \end{tabular}
  \end{center}
  $\,$\\[-9mm]
  \caption{Runtime of \texttt{DM} in seconds. For entries with `-', the computation did not finish within 10 min.}
  \label{tab:baseline}
\end{figure}

\begin{figure}[t]
\begin{minipage}{0.98\linewidth}
$\,$\\[-11mm]
\scriptsize\centering
\subfloat[\small Runtime of \rel{only2hop}]
{
  \includegraphics[width=0.50\columnwidth,trim=0 60 0 0, clip]{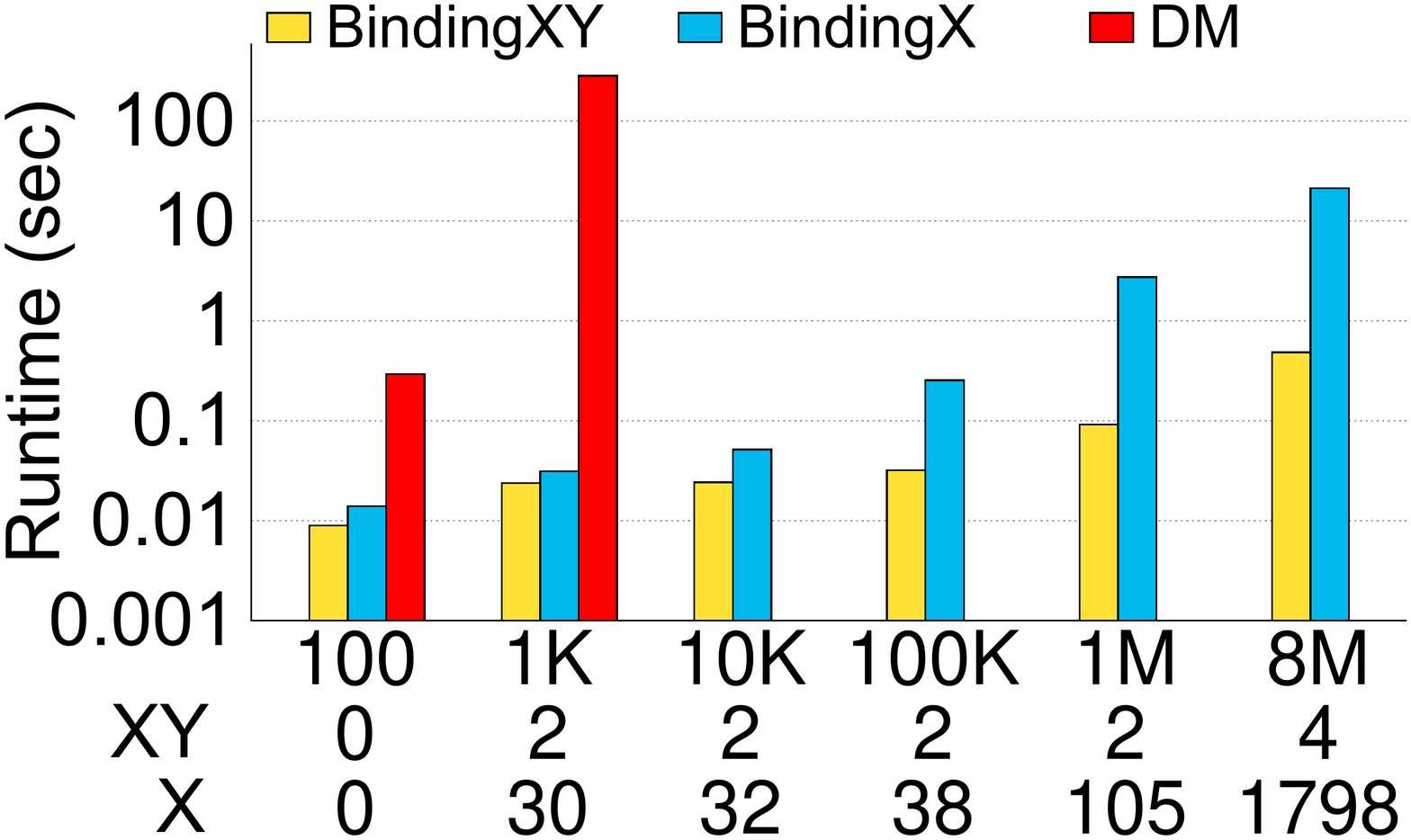}
\label{fig:only2hop-why}}
\subfloat[\small Runtime of \rel{XwithYnotZ}]
{
  \includegraphics[width=0.50\columnwidth,trim=0 60 0 0, clip]{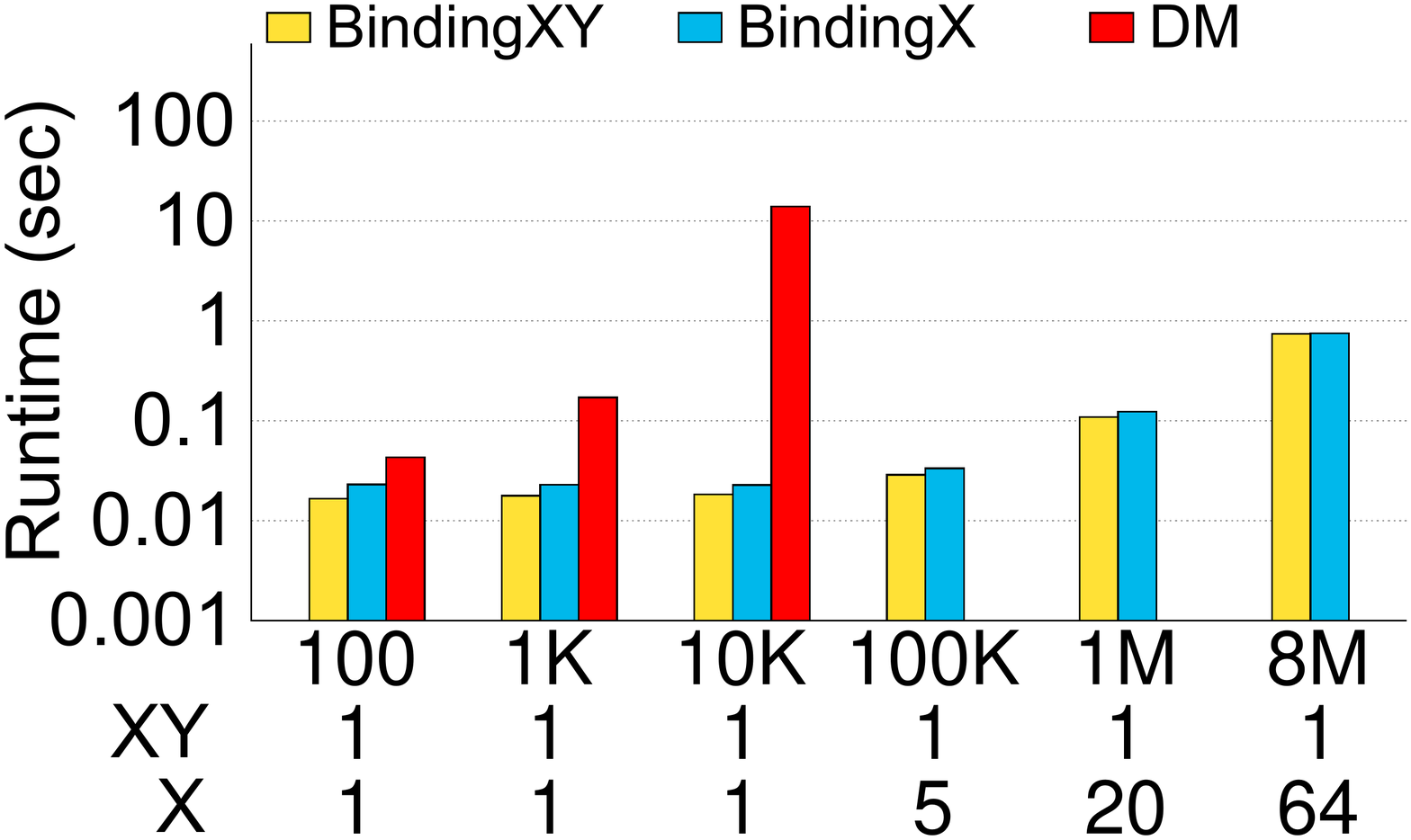}
\label{fig:xynotz-why}}
\\[2mm]
\begin{minipage}{0.98\linewidth}
\scriptsize\centering
\subfloat[\small Variable bindings for DBLP $\provQ$s]{
  \begin{minipage}{0.98\linewidth}
  \centering$\,$\\[-1mm]
  \begin{tabular}{|c|cc|}
    \thead{Query \textbackslash\, Binding}&\thead{X}&\thead{Y}\\
    (a) $\rel{only2hop}$ & Tore Risch & Rafi Ahmed \\
    (b) \rel{XwithYnotZ} & Arjan Durresi & Raj Jain \\
    \hline
  \end{tabular}
\end{minipage}
\label{fig:dblp-why-bind}}
\end{minipage}
\\[-6mm]
\subfloat[\small Runtime of \rel{ordPriority}]{
  \includegraphics[width=0.50\columnwidth,trim=0 60 0 0, clip]{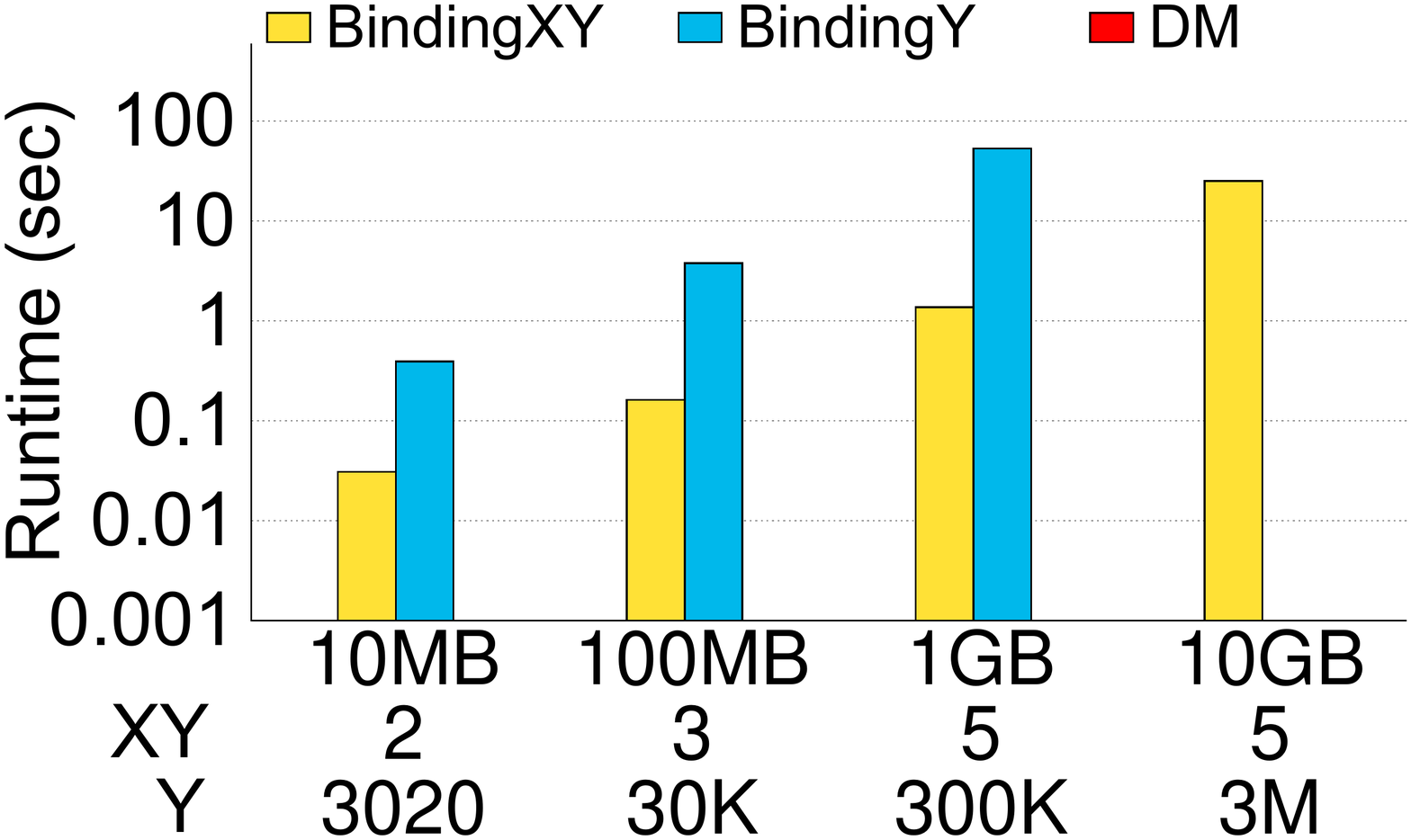}
\label{fig:urgent-why}}
\subfloat[\small Runtime of \rel{ordDisc}]{
  \includegraphics[width=0.50\columnwidth,trim=0 60 0 0, clip]{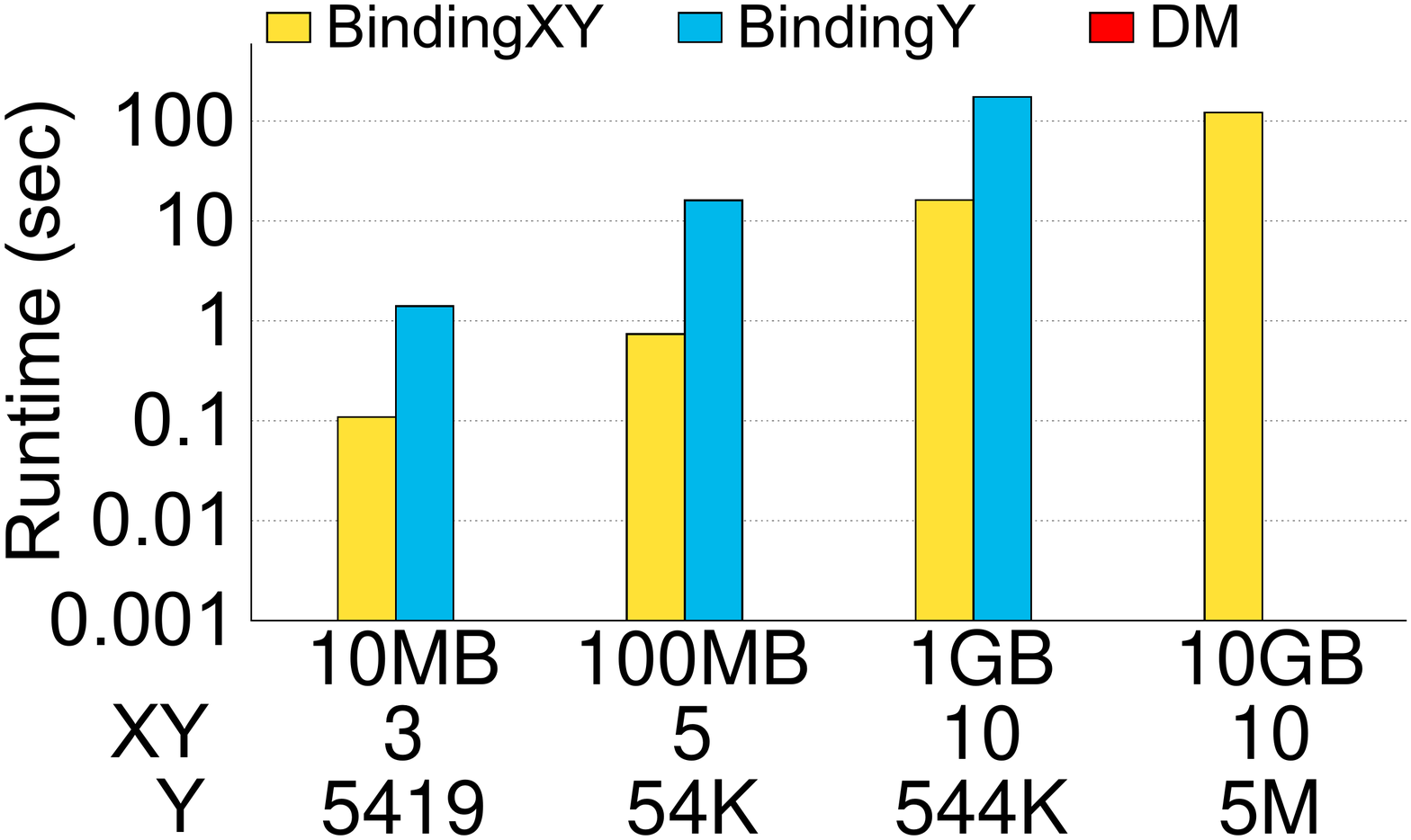}
\label{fig:zero-why}}
\end{minipage}
\\[3mm]
\begin{minipage}{0.98\linewidth}
\scriptsize\centering
\subfloat[\small Variable bindings for TPC-H $\provQ$s]{
  \begin{minipage}{0.98\linewidth}
 \centering$\,$\\[-2mm]
\begin{tabular}{|c|cc|}
\thead{Query \textbackslash\, Binding}&\thead{X}&\thead{Y}\\
(d) $\rel{ordPriority}$ & Customer16 & 1-URGENT \\
(e) $\rel{ordDisc}$ & Customer16 & 0 \\
\hline
\end{tabular}
\end{minipage}
\label{fig:tpch-why-bind}}
\end{minipage}
$\,$\\[-4mm]
\caption{Why questions: DBLP (top), TPC-H (bottom)}
\label{fig:perf-why}
\end{figure}

\mypartitle{Why Questions}
The runtime of generating explanations for  why questions
over the queries $r_1$, $r_2$, $r_4$, and $r_5$ (Fig.\,\ref{fig:experi-queries}) is shown in Fig.\,\ref{fig:perf-why}.
For the evaluation, we consider the effect of
 different binding patterns on performance.
Fig.\,\ref{fig:dblp-why-bind} and \ref{fig:tpch-why-bind} show which variables are bound by the provenance questions ($\provQ$s). Fig.\,\ref{fig:only2hop-why} and \ref{fig:xynotz-why} show the runtime for DBLP  queries $r_1$ and $r_2$, respectively.
We also provide the number of rule nodes in the explanation for each binding pattern below the X axis.
If only variable $X$ is bound (\texttt{BindingX}), then the queries determine authors that  occur together with the author we have bound to $X$
in the query result. For instance, the explanation
for $\rel{only2hop}$
with \texttt{BindingX}
explains why persons are indirect, but not direct, co-authors of ``Tore Risch''.
If both $X$ and $Y$ are bound (\texttt{BindingXY}), then the provenance for $r_1$ and $r_2$ is limited to a particular indirect and direct co-author, respectively.
The runtime for generating explanations grows roughly linear  in the dataset size
and outperforms \texttt{DM} even for small instances.
Furthermore, Fig.\,\ref{fig:urgent-why} and \ref{fig:zero-why}
(for $r_4$ and $r_5$, respectively) show that our approach can handle queries with many variables (attributes in TPC-H) where \texttt{DM} times out
even for the smallest dataset we have considered.
Binding one variable (\texttt{BindingY}) in queries $r_4$ and $r_5$
expresses a condition, e.g., $Y$ = `1-URGENT' in $r_4$ requires the order priority to be urgent.
If both variables are bound,
then the \provQ{} verifies the existence of orders for a certain customer
(e.g., why ``Customer16'' has at least one urgent order). Runtimes exhibit the same trend as for the DBLP queries.

\begin{figure}[t]
\begin{minipage}{0.98\linewidth}
$\,$\\[-9mm]
\centering
\subfloat[\small Runtime of \rel{only2hop}]{\includegraphics[width=0.50\columnwidth,trim=0 60 0 0, clip]{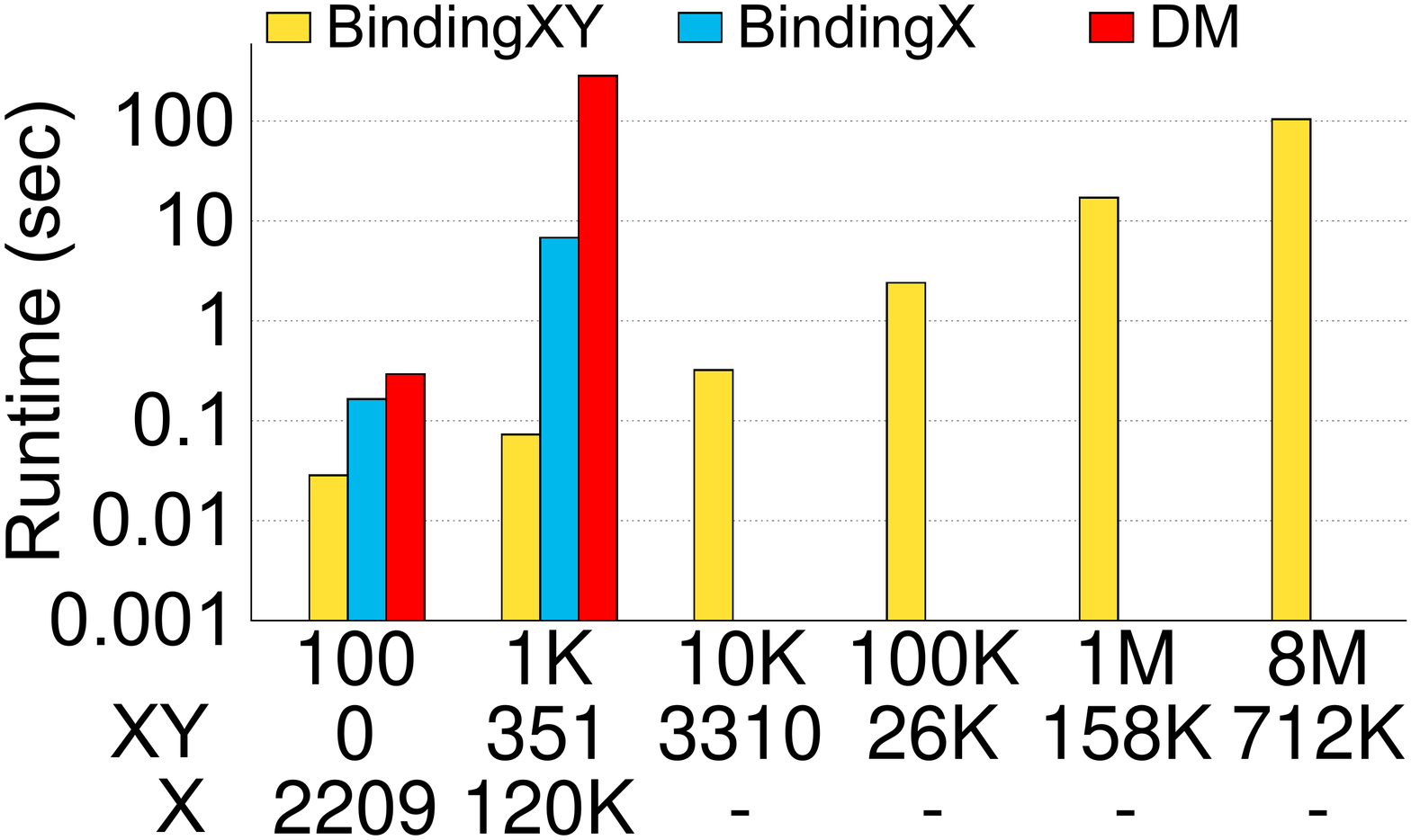}\label{fig:only2hop-whynot}}
\subfloat[\small Runtime of \rel{XwithYnotZ}]{\includegraphics[width=0.50\columnwidth,trim=0 60 0 0, clip]{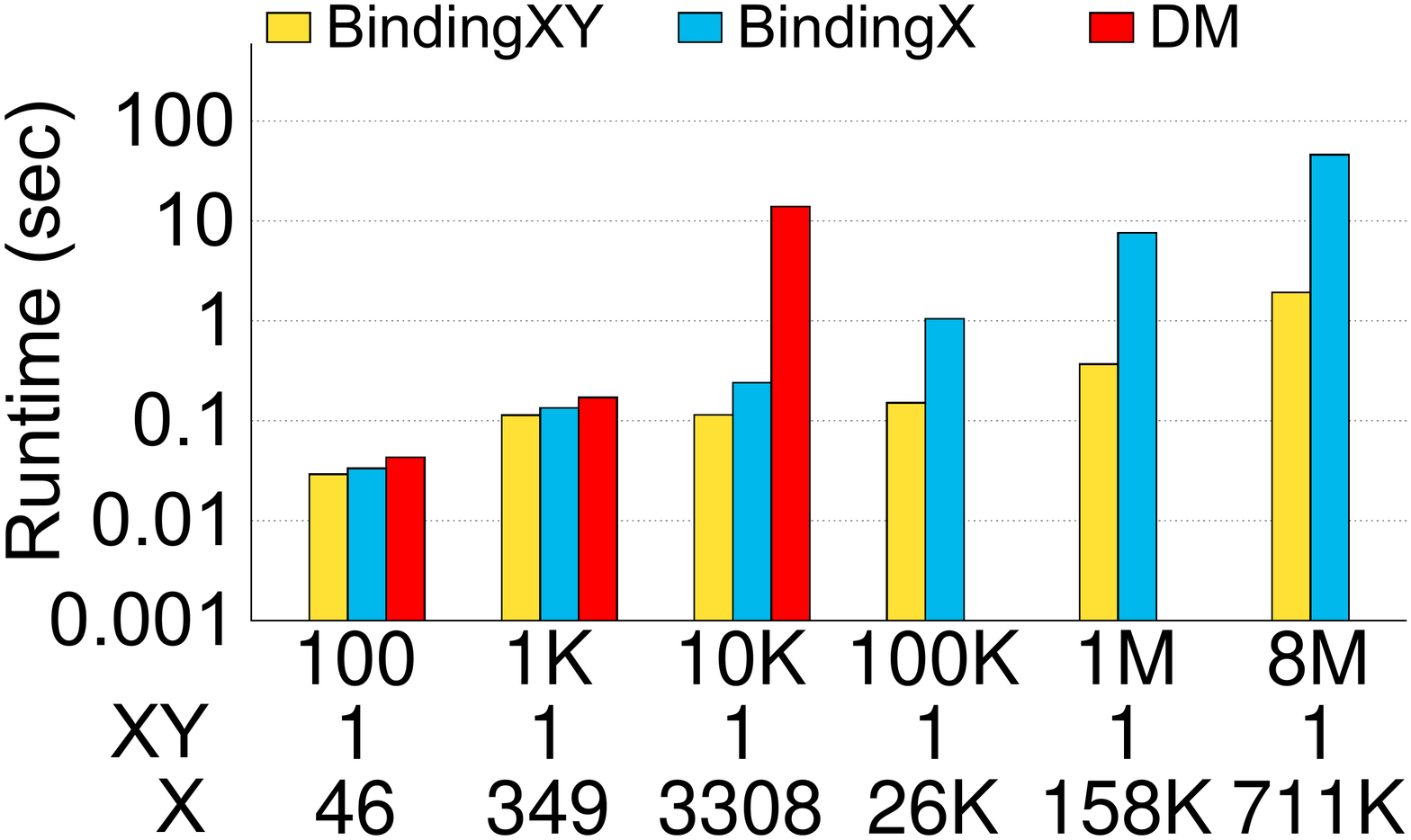}\label{fig:xynotz-whynot}}
\end{minipage}
\\[3mm]
\begin{minipage}{0.98\linewidth}
\scriptsize\centering$\,$\\[-2mm]
\subfloat[\small Variable bindings for DBLP $\provQ$s]{
\begin{tabular}{|c|cc|}
\thead{Query \textbackslash\, Binding}&\thead{X}&\thead{Y}\\
(a) \rel{only2hop} & Tore Risch & Svein Johannessen \\
(b) \rel{XwithYnotZ} & Tor Skeie & Joo-Ho Lee\\
\hline
\end{tabular}
\label{fig:dblp-whynot-bind}}
\end{minipage}
$\,$\\[-4mm]
\caption{Why-not questions over the DBLP dataset}
\label{fig:perf-whynot}
\end{figure}

\mypartitle{Why-not Provenance}
We use queries $r_1$ and $r_2$ from Fig.\,\ref{fig:experi-queries}
to evaluate the performance of computing explanations for failed derivations.
When binding all variables in the $\provQ$ (\texttt{BindingXY}) using the bindings from Fig.\,\ref{fig:dblp-whynot-bind}, these queries check if a particular set of authors do not appear together in the result.
For instance, for $\rel{only2hop}$ $(r_1)$, the query checks why ``Tore Risch'' is either
not an indirect co-author or
is a direct co-author of ``Svein Johannessen''.
\BGDel{If one variable is bound (\texttt{BindingX}), then  the why-not question explains for a given pairs of authors where one of the authors is bound to $X$, why the pair does not appear in the result.}
The results for queries $r_1$ and $r_2$ (DBLP) are shown in
Fig.\,\ref{fig:only2hop-whynot} and \ref{fig:xynotz-whynot}, respectively.
The number of tuples produced by the provenance computation (the number of rule nodes is shown below the X axis)
is quadratic in the database size resulting in a quadratic increase in runtime.
\texttt{DM} only finishes within the allocated time for very small datasets while our approach scales to larger instances.

\begin{figure}[t]
\begin{minipage}{0.98\linewidth}
$\,$\\[-9mm]
\centering
\subfloat[\small \rel{only3hop} (DBLP)]{\includegraphics[width=0.50\columnwidth,trim=0 60 0 0, clip]{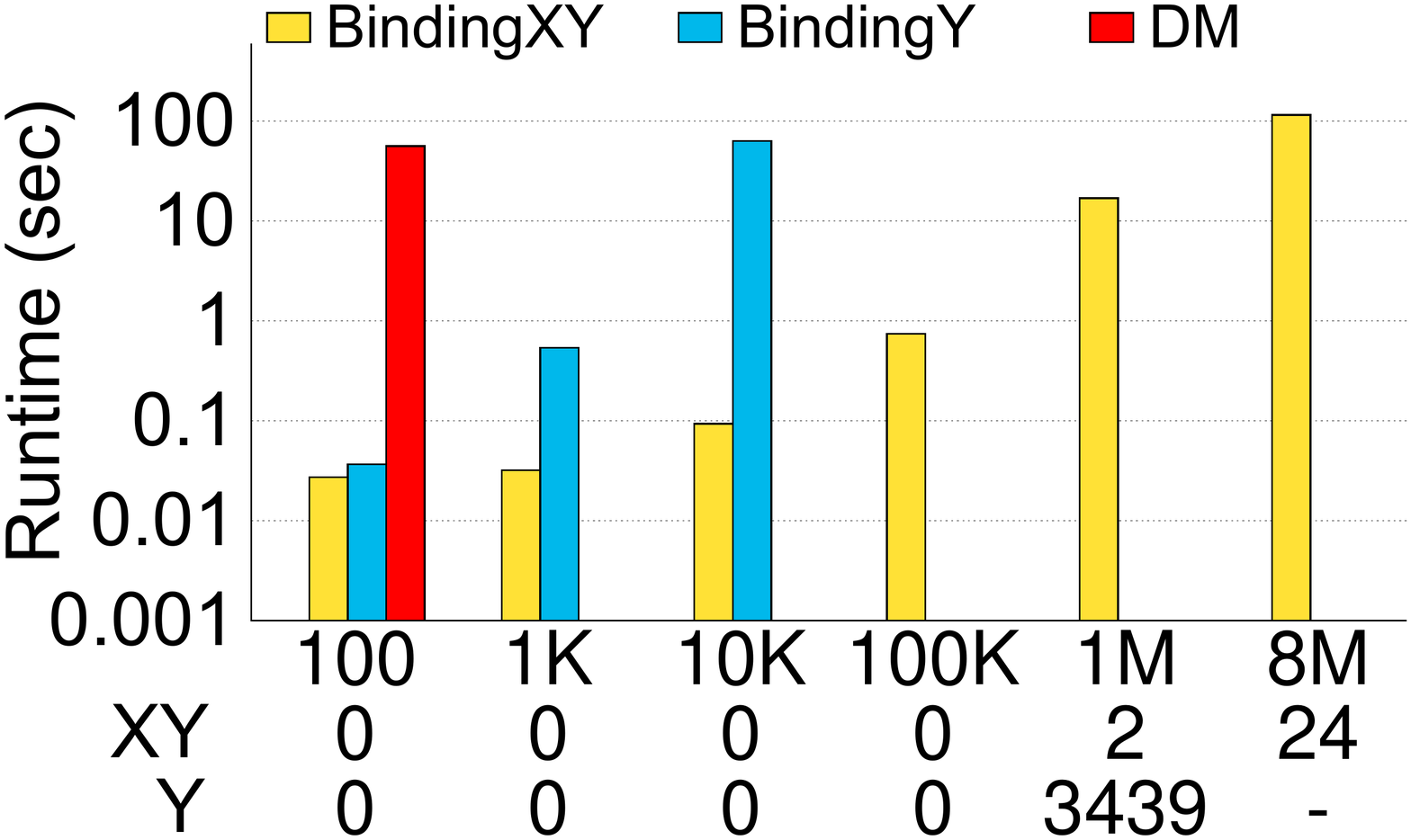}\label{fig:only3hop-why}}
\subfloat[\small \rel{partNotAsia} (TPC-H)]{\includegraphics[width=0.50\columnwidth,trim=0 60 0 0, clip]{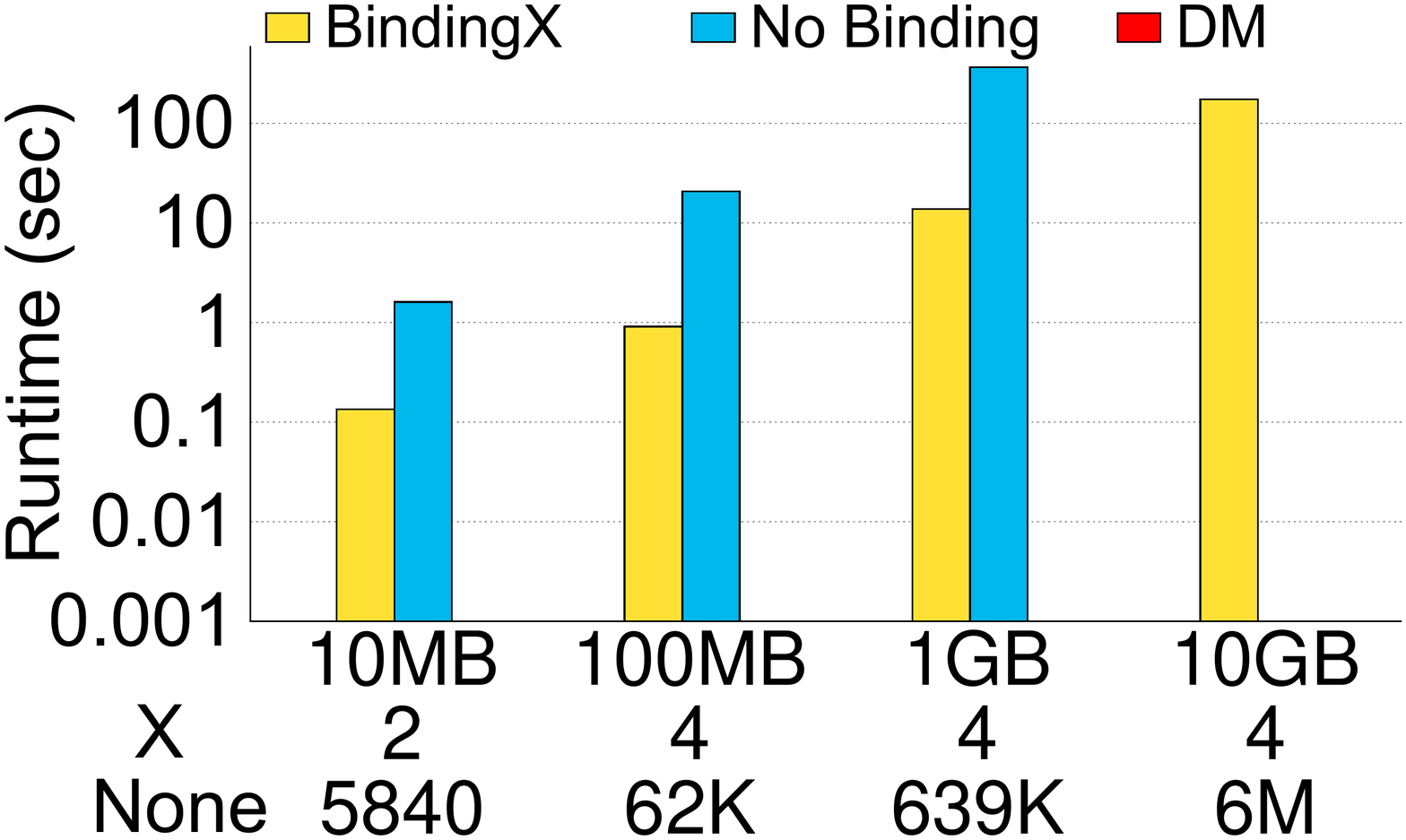}\label{fig:noasia-why}}
\end{minipage}
\\[3mm]
\begin{minipage}{0.98\linewidth}
\scriptsize\centering
\subfloat[\small Variable bindings for DBLP and TPC-H $\provQ$s]{
  \begin{minipage}{0.98\linewidth}
  \centering$\,$\\[-2mm]
\begin{tabular}{|c|cc|}
\thead{Query \textbackslash\, Binding}&\thead{X} &\thead{Y}\\
(a) $\rel{only3hop}$ & Alex Benton & Paul Erdoes \\
(b) $\rel{partNotAsia}$ & grcpi$^{1}$ & - \\
\hline
\end{tabular}\\[1mm]
${}^{1}$~grcpi = ghost royal chocolate peach ivory
\end{minipage}
\label{fig:neg-why-bind}}
\end{minipage}
$\,$\\[-4mm]
\caption{Why questions for queries with negation}
\label{fig:perf-uni}
\end{figure}
\mypartitle{Queries with Negation}
Recall that our approach also  handles queries with negation.
We choose
rules $r_3$ (multiple negated goals)
and $r_6$ (one negated goal) from Fig.\,\ref{fig:experi-queries} to evaluate the performance of answering why questions over such queries.
We use the bindings shown in Fig.\,\ref{fig:neg-why-bind}.
The results for $r_3$ and $r_6$ are shown in
Fig.\,\ref{fig:only3hop-why} and \ref{fig:noasia-why}, respectively. These results  demonstrate that
our approach efficiently computes explanations for such queries.
When increasing the database size, the runtimes of PQs for these queries exhibit the same trend
as observed for other why (why-not) questions and significantly outperform \texttt{DM}.
For instance, the performance of $\rel{partNotAsia}$ (Fig.\,\ref{fig:noasia-why}),
which contains many variables and negation
exhibits
the same trend as  queries that have no negation
(i.e., $r_4$ and $r_5$ in Fig.\,\ref{fig:urgent-why} and Fig.\,\ref{fig:zero-why}, respectively).

\begin{figure}[t]
\begin{minipage}{0.98\linewidth}
$\,$\\[-9mm]
\centering
\subfloat[\small \rel{Q7} (employee)]{\includegraphics[width=0.5\columnwidth,trim=0 80 0 0, clip]{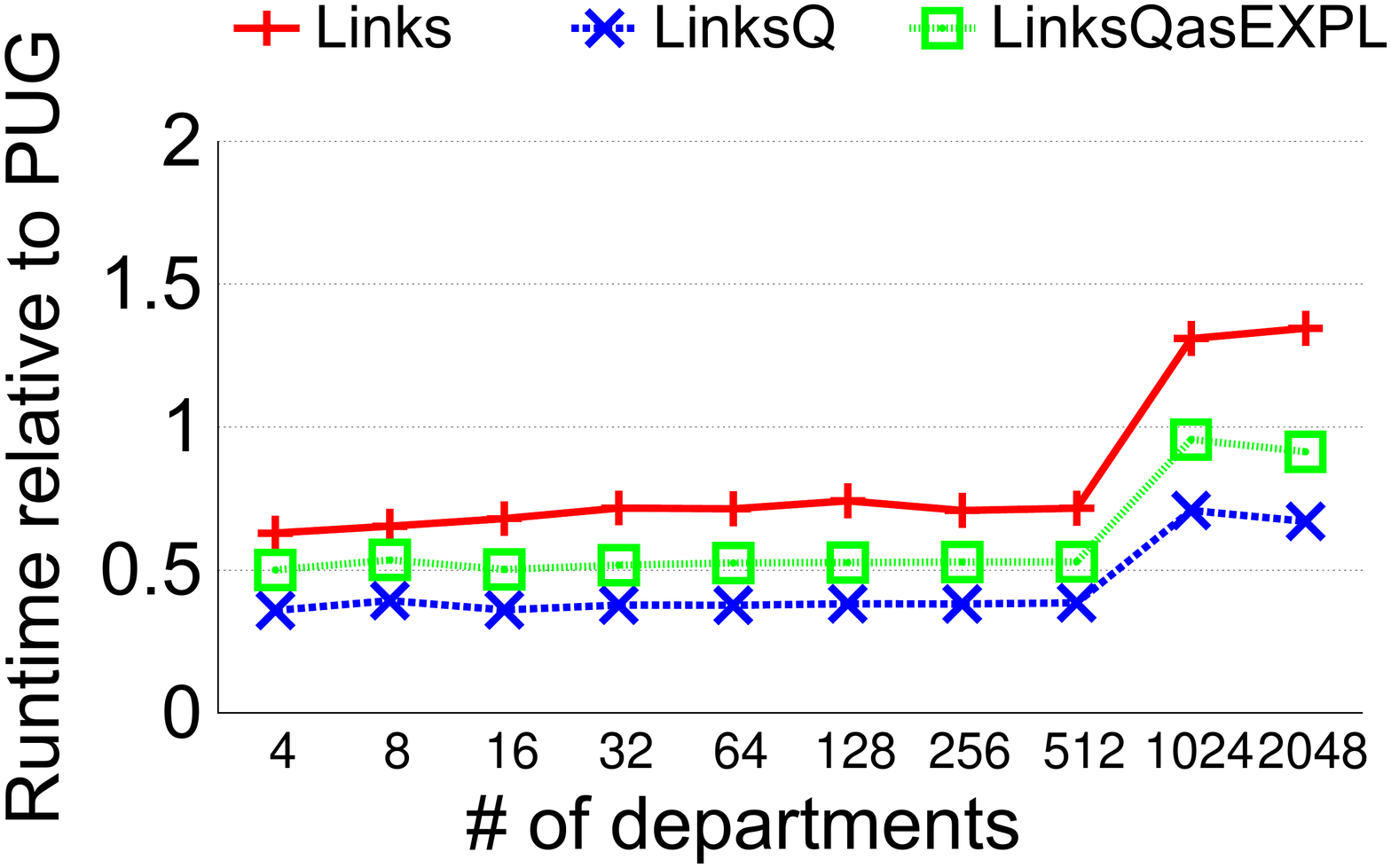} \label{fig:q7-lin}}
\subfloat[\small \rel{QF3} (employee)]{\includegraphics[width=0.5\columnwidth,trim=0 80 0 0, clip]{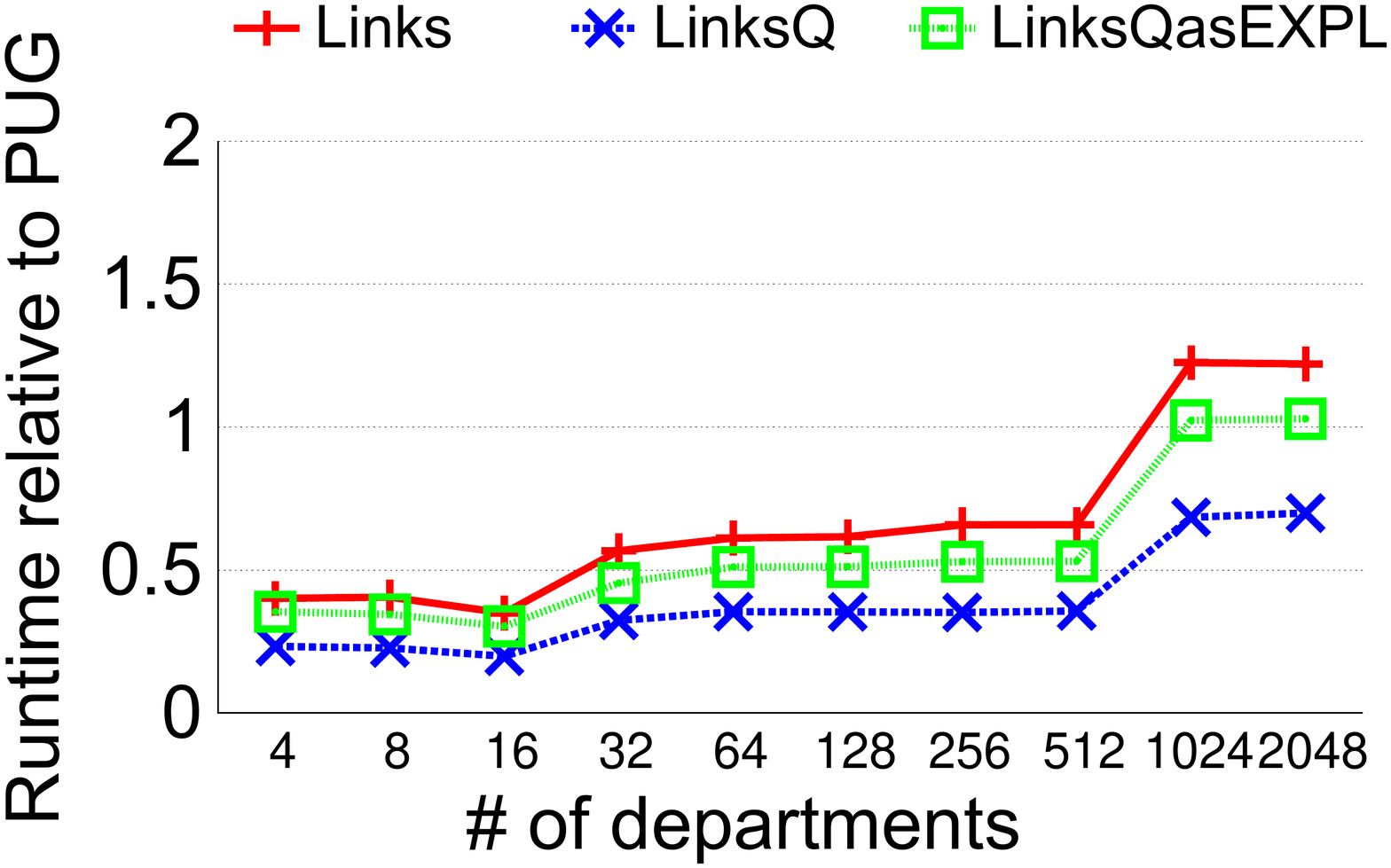} \label{fig:qf3-lin}}
\end{minipage}\\
$\,$\\[-7mm]
\caption{Comparing $\WhichProv$ in PUG with Links }
\label{fig:comp-pug-links}
\end{figure}

\mypara{Comparison with Links}
\label{sec:experiments-links}
In this experiment, we compare the runtime of computing $\explainq_{\WhichProv}$ (e.g., Fig.\,\ref{fig:lin-model}) with computation of Lineage in $\LinksL$ from~\cite{FS17}.
We show relative runtimes where PUG is normalized to $1$.
For this particular evaluation, we use Postgres as a backend since it is supported by both PUG and Links. Note that $\explainq_{\WhichProv}$ contains a full description of each tuple unlike $\LinksL$ which returns tuple identifiers (OIDs in Postgres). To get a nuanced understanding of the system's performance, we show three runtimes for Links: 1) \texttt{Links} is the actual implementation in Links which computes Lineage (only OIDs) and where the runtime includes the construction of in-memory Links types from the provenance fetched from Postgres; 2) \texttt{LinksQ} is the runtime of the queries that Links uses to capture Lineage; and 3) \texttt{LinksQasEXPL} which joins the output of \texttt{LinksQ} with the base tables (i.e., as informative as $\explainq_{\WhichProv}$). \iftechreport{
For \texttt{LinksQ}, the capture the SQL queries that \texttt{Links}  runs to  capture provenance. We then compare the runtime of these  queries to generation of $\explainq_{\WhichProv}$ in PUG.
The queries we use for \texttt{LinksQasEXPL} are generated as follows: we join the tuple id and relation
name pairs produced by \texttt{LinksQ}  with the corresponding relations in the
database to return full tuples as in PUG's
$\explainq_{\WhichProv}$ provenance type. Recall that PUG encodes $\explainq_{\WhichProv}$ is
as the edge relation of a DAG of tuple nodes. Therefore, when generating the queries for
\texttt{LinksQasEXPL}, we add code to generate this edge relation from the returned tuples using
string concatenation. For example, one tuple from the edge relation of the graph shown in
Fig.\,\ref{fig:lin-model}  is
$(\rel{Q_{3hop}}(s,s),\rel{Train}(s,s))$. $\LinksL$
and \texttt{LinksQ} would represent this part of the provenance as the tuple id of $(s,s)$ paired with the name of this relation
($\rel{Train}$). For instance, assuming the id of this tuple is $123$, we would get $(123,\rel{Train})$. Based on this information,
\texttt{LinksQasEXPL} computes the edge $(\rel{Q_{3hop}}(s,s),\rel{Train}(s,s))$ by
joining with relation $\rel{Train}$ using the tuple-id $123$ and by applying string concatenation. For example, using \lstinline!||! to denote string concatenation,
$\rel{Train}(s,s)$ is generated by evaluating the expression

\begin{center}
$\rel{Train}$ \lstinline!|| '(' || 's' || ',' || 's' || ')'!
\end{center}
}
\begin{figure}[t]
\begin{minipage}{0.98\linewidth}
$\,$\\[-9mm]
\centering
\subfloat[\small Runtime of \rel{suppCust}]{\includegraphics[width=0.5\columnwidth,trim=0 50 0 0, clip]{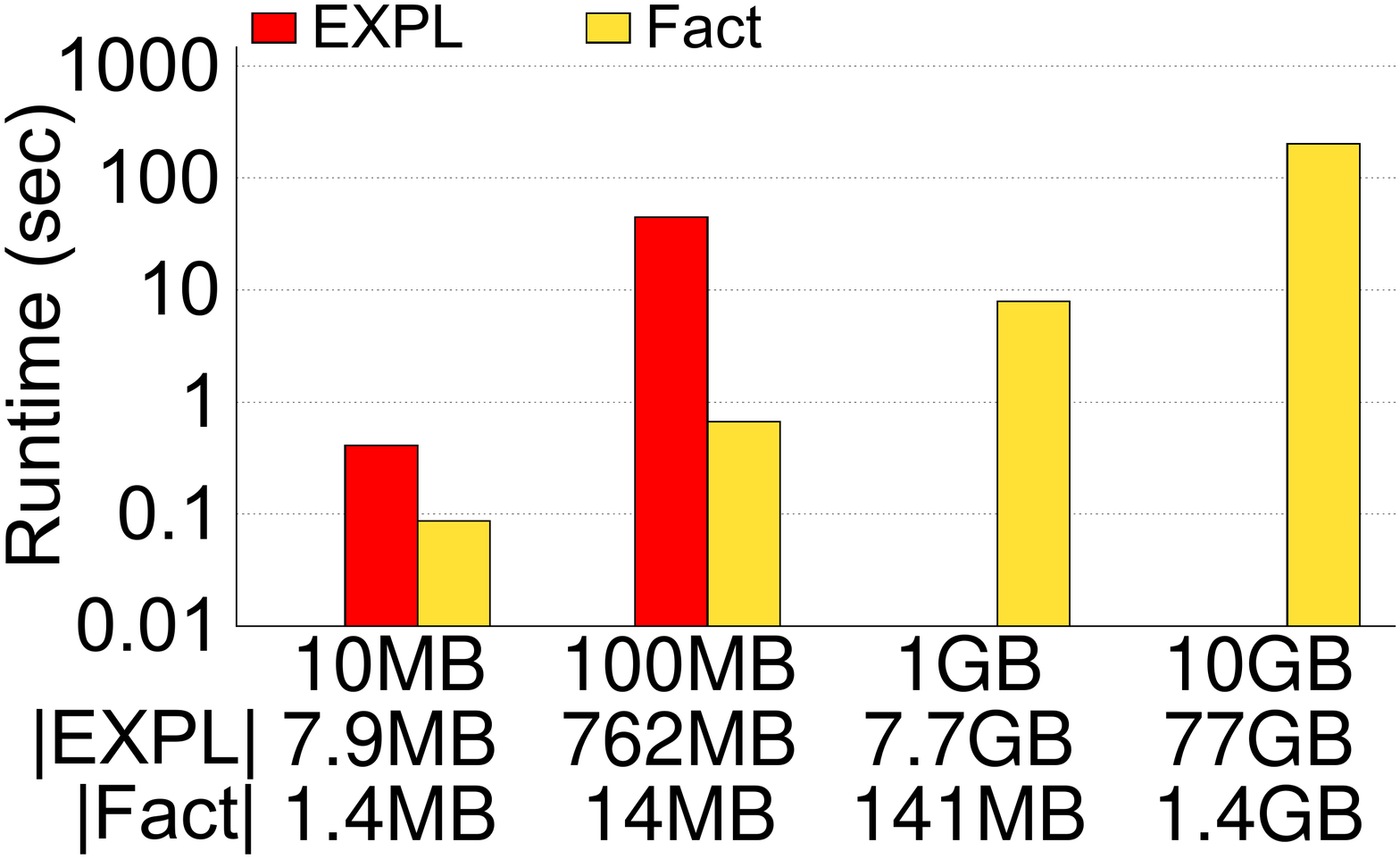} \label{fig:tpch-suppCust-fact}}
\subfloat[\small Runtime of \rel{ordDisc}]{\includegraphics[width=0.5\columnwidth,trim=0 50 0 0, clip]{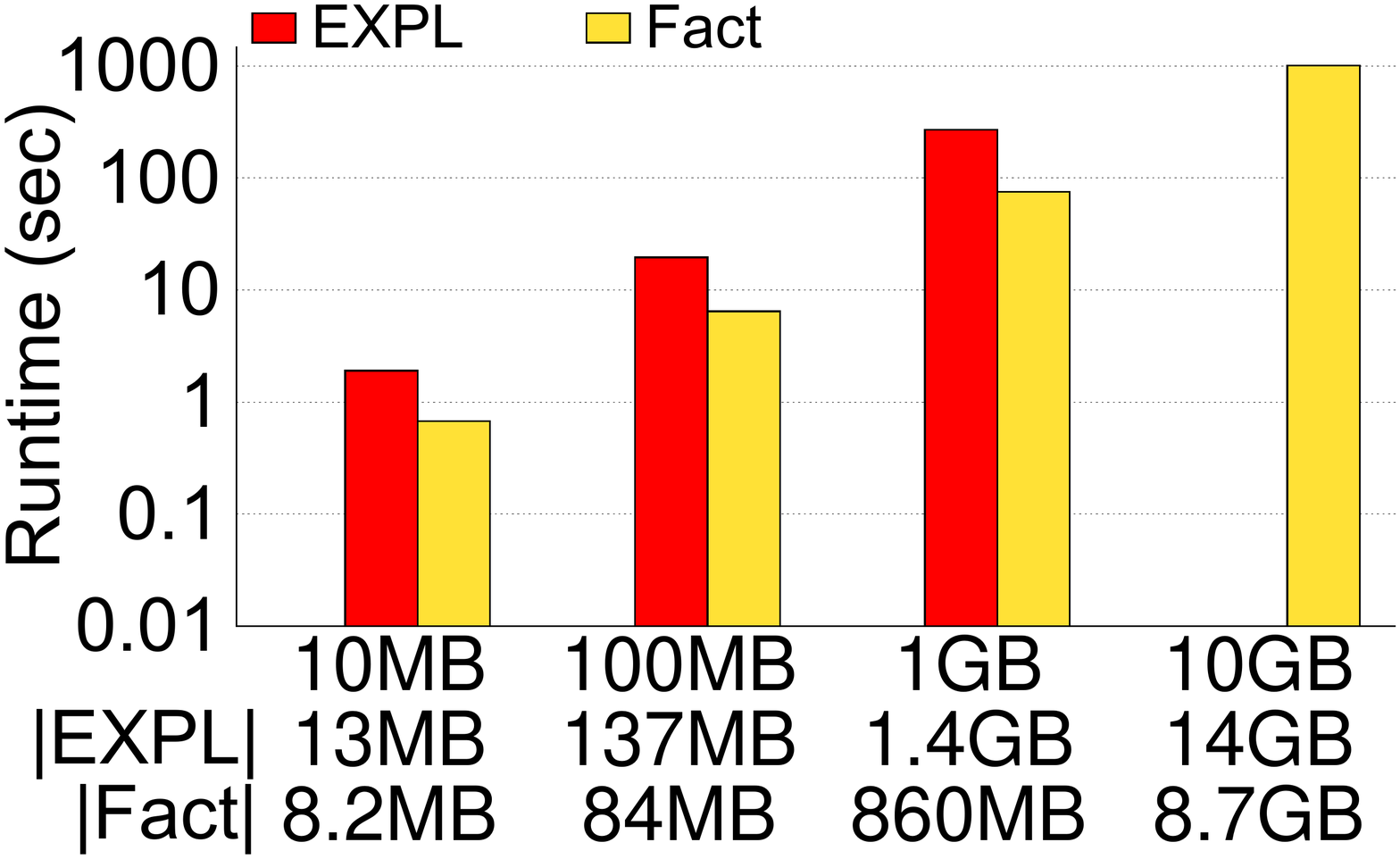} \label{fig:tpch-zero-fact}}
\end{minipage}
$\,$\\[-4mm]
\caption{Explanations vs. factorized explanations}
\label{fig:perf-fac}
\end{figure}
We choose  two queries from~\cite{FS17}.  The query Q7 applies a range condition to the result of  a two-way join.  QF3 is a self-join on equality with an additional inequality condition (see~\cite{FS17} for more details). The queries are expressed over two tables $\rel{dept}$ and $\rel{emp}$. The number of departments is varied from $4$ to $2048$ (by powers of $2$ to replicate the setting from~\cite{FS17}),
and each department has $100$ employees on average. The relation $\rel{dept}$ consists of one attribute (department name), and $\rel{emp}$ has three attribute (department name, employee name, and salary).
QF3 can be written in Datalog as: 

\noindent\begin{minipage}{1\linewidth}\vspace{-2mm}
\begin{minipage}{1\linewidth}
\centering \small
\begin{align*}
\rel{QF3}(N1,N2) \dlImp \rel{emp}(\_,D,N1,S), \rel{emp}(\_,D,N2,S), N1 \neq N2
\end{align*}
\end{minipage}
\end{minipage}\\[-2mm]

The runtimes of queries Q7 and QF3 are shown in Fig.\,\ref{fig:q7-lin} and \ref{fig:qf3-lin}, respectively.
Links performs better on smaller instances. The gap between Links and PUG shrinks with increasing dataset size. PUG outperforms \texttt{Links} and \texttt{LinksQasEXPL} on larger datasets.

\mypara{Factorized Explanations}
\label{sec:experiments-fact}
We now compare the performance of generating provenance for a query  (\texttt{EXPL}) and a factorized representation of provenance (\texttt{Fact}) by rewriting the input query (Sec.\,\ref{sec:factorize}). Factorization techniques perform best for many-to-many joins
(e.g., the query $r_7$ in Fig.\,\ref{fig:experi-queries}). The rewritten version of $\rel{suppCust}$ $(r_7)$ producing factorized provenance is shown below.

\begin{minipage}{1\linewidth}\vspace{-2mm}
\centering
\begin{minipage}{0.8\linewidth}
\centering\small
\begin{align*}
  &r_8:  \rel{suppCust}(N) \dlImp \rel{supp}(N), \rel{cust}(N)\\
  &r_{8'}: \rel{supp}(N) \dlImp \rel{SUPPLIER}(A,B,C,N,D,E,F)\\
  &r_{8''}: \rel{cust}(N) \dlImp \rel{CUSTOMER}(G,H,I,N,J,K,L,M)\\
\end{align*}
\end{minipage}
\end{minipage}\\[-5mm]

For this experiments, we use  a $15$ minute time-out. The runtimes for $r_7$ (yellow bars) and $r_8$ (red bars) are shown in Fig.\,\ref{fig:tpch-suppCust-fact}.
We show the total result size in bytes below the $X$ axis.
The runtime of \texttt{Fact} grows roughly linear
unlike \texttt{EXPL} whose growth is quadratic in dataset size. We also evaluate query $r_5$ which includes one-to-many joins to see how \texttt{Fact} performs for a query  (Fig.\,\ref{fig:tpch-zero-fact}) where factorization only reduces size by a constant factor.
This is confirmed by the measurements: the performance of \texttt{Fact} for $r_5$ is $\sim30\%$ that of \texttt{EXPL} independent of dataset size.

\section{Conclusions}
\label{sec:concl}

We present a provenance model and unified framework for explaining answers and non-answers over first-order queries expressed in Datalog. Our efficient middleware implementation generates a Datalog program that 
computes the explanation 
for a provenance question and compiles this program 
into SQL. We prove that our model is expressive enough to encode a wide range of provenance models from the literature and extend our approach to produce concise, factorized representations of provenance.
In future work, we will investigate summarization of provenance (we did present a proof-of-concept in~\cite{LN17}) to deal with the large size of explanations for missing answers. We plan to also support query-based explanations~\cite{BH14a,BH14,CJ09,TC10} and more expressive query languages (e.g., aggregation).

	 \end{document}